\numberwithin{equation}{section}
\newtheorem{theorem}{Theorem}[section]
\newtheorem{definition}{Definition}[section]
\newtheorem{lemma}{Lemma}[section]
\newtheorem{remark}{Remark}[section]
\newcommand{\8}{\infty}
\newcommand{\el}{\ell}
\newcommand{\be}{\begin{eqnarray*}}
\newcommand{\ee}{\end{eqnarray*}}
\newcommand{\beq}{\begin{equation}}
\newcommand{\eeq}{\end{equation}}
\newcommand{\beqn}{\begin{equation*}}
\newcommand{\eeqn}{\end{equation*}}
\newcommand{\bsp}{\begin{split}}
\newcommand{\esp}{\end{split}}
\begin{document}

\title{Blowup alternative for Gross-Pitaevskii hierarchies}

\thanks{{\it 2010 Mathematics Subject Classification:} 35Q55, 81V70.}
\thanks{{\it Key words:} Gross-Pitaevskii hierarchy, nonlinear Schrodinger equation, Cauchy problem, blowup, quasi-Banach space.}

\author{Zeqian Chen}

\address{Wuhan Institute of Physics and Mathematics, Chinese Academy of Sciences, West District 30, Xiao-Hong-Shan, Wuhan 430071, China}

%\thanks{Z.Chen is partially supported by NSFC grant No.11171338.}

\author{Chengjun He}

\address{Wuhan Institute of Physics and Mathematics, Chinese Academy of Sciences, West District 30, Xiao-Hong-Shan, Wuhan 430071, China}

\author{Chuangye Liu}

\address{School of Mathematics and Statistics, Central China Normal University, Luo-Yu Road 152, Wuhan 430079, China}

%\thanks{Corresponding author: chuangyeliu1130@126.com}

%\thanks{C. Liu is partially supported by NSFC grants No.11071095}

\date{}
\maketitle
\markboth{Z. Chen, C. He, and C. Liu}%
{Gross-Pitaevskii hierarchies}

\begin{abstract}
In this paper, we prove the blowup alternative for Gross-Pitaevskii hierarchies on $\mathbb{R}^n$ and give the associated lower bounds on the blowup rate. In particular, we prove that any solution of density operators to the focusing Gross-Pitaevskii hierarchy blow up in finite time for $n \ge 3$ if the energy per some $k$ particles in the initial condition is negative. All of these results hold without the assumption of factorized conditions for initial values as well as the admissible ones. Our analysis is based on use of a quasi-Banach space of sequences of marginal density matrices.
\end{abstract}

%\tableofcontents

\section{Introduction}\label{Intro}

The time-dependent Gross-Pitaevskii (GP, in short) equation describes the dynamics of initially trapped Bose-Einstein condensates (see \cite{DGPS, ESY2007b} for detailed information). Precisely, in units where $\hbar =1$ and the mass of the bosons $m=1/2,$ the condensate wave function at time $t,$ $\varphi_t = \varphi_t (\mathbf{r}),$ $\mathbf{r} \in \mathbb{R}^3,$ satisfies the GP equation
\beq\label{eq:GPeq}
\mathrm{i} \partial_t \varphi_t = - \Delta_\mathbf{r} \varphi_t + \sigma |\varphi_t |^2 \varphi_t
\eeq
with the normalization $\int | \varphi_t (\mathbf{r}) |^2 d ^3 \mathbf{r} = 1,$ where $\sigma = 8 \pi N a$ is the coupling constant with $N$ being the number of particles and $a$ the scattering length of the interaction potential. This fact was rigorously proved by Erd\"{o}s, Schlein, and Yau \cite{ESY2006, ESY2007a, ESY2009, ESY2010}, starting from a many-body bosonic Schr\"{o}dinger equation with a short-scale repulsive interaction in the dilute limit.

As for the motivation of the present paper, we give a brief description of the main ingredients found in \cite{ESY2006, ESY2007a, ESY2009, ESY2010}. To this end, we assume that the ground state of initially trapped Bose gases exhibits a complete condensation (see \cite{LS2002, LSY2000} for detailed information). After instantaneously removing the trap, the evolution of $N$ bosons is given by the Hamiltonian
\beq\label{eq:N-Hamilton}
H_N = \sum^N_{j=1} (- \triangle_{\mathbf{r}_j}) + \sum^N_{i < j} V_N (\mathbf{r}_i - \mathbf{r}_j).
\eeq
The interaction potential is scaled as $V_N (\mathbf{r}) = N^2 V(N \mathbf{r}),$ where $V$ is a positive, spherically symmetric, compactly supported, smooth potential. Let $\psi_{N, t}$ be the solution to the Schr\"{o}dinger equation
\beq\label{eq:N-SchEq}
\mathrm{i} \partial_t \psi_{N, t} = H_N \psi_{N, t}\quad \text{with}\quad \psi_{N, 0} = \psi.
\eeq
For $k =1, \ldots, N,$ let $\gamma^{(k)}_{N, t}$ denote the $k$-particle marginal density of $\psi_{N, t}$ with a consistent normalization $\mathrm{T r} \gamma^{(k)}_{N, t} =1.$ The time evolution of these marginal density matrices is then governed by a hierarchy of $N$ coupled equation, commonly known as the Bogoliubov-Born-Green-Kirkwood-Yvon (BBGKY, in short) hierarchy
\beq\label{eq:BBGKY}
\mathrm{i} \partial_t \gamma^{(k)}_{N, t} = \sum^k_{j =1} [-\triangle_{\mathbf{r}_j}, \gamma^{(k)}_{N, t} ] + (N-k) \sum^k_{j =1} \mathrm{T r}_{k+1} [V_N (\mathbf{r}_j - \mathbf{r}_{k+1}), \gamma^{(k+1)}_{N, t} ],
\eeq
where $\mathrm{T r}_{k+1}$ denotes the partial trace over $\mathbf{r}_{k+1}.$ Furthermore, they show that $\{ \gamma^{(k)}_{N, t} \}_{N \ge k}$ has at least one $w^*$-limit point $\gamma^{(k)}_{\8, t}$ in the space of trace class operators, and thereby the time evolution of $\gamma^{(k)}_{\8, t}$ is described by the Gross-Pitaevskii hierarchy
\beq\label{eq:GPhierarchy}
\mathrm{i} \partial_t \gamma^{(k)}_{\8, t} = \sum^k_{j =1} [-\triangle_{\mathbf{r}_j}, \gamma^{(k)}_{\8, t} ] + \sigma B^{(k)} \gamma^{(k+1)}_{\8, t},
\eeq
with the collision operator $B^{(k)} = \sum^k_{j =1} B^{(k)}_j$ being defined according to
\beq\label{eq:CollisionOper}
B^{(k)}_j \gamma^{(k+1)} = \mathrm{T r}_{k+1} [\delta (\mathbf{r}_j - \mathbf{r}_{k+1}), \gamma^{(k+1)} ],\quad j=1,\ldots, k,
\eeq
and with $\sigma = \int V (\mathbf{r}) d^3 \mathbf{r}.$ Finally, they proves that for any factorized initial condition
\beq\label{eq:InitialStateProduct}
\gamma_{\8, 0}^{(k)} = | \varphi_0 \rangle \langle \varphi_0 |^{\otimes^k}, \quad k \ge 1, \quad \text{with}\quad \varphi_0 \in H^1 (\mathbb{R}^3),
\eeq
the solution of the GP hierarchy \eqref{eq:GPhierarchy} is unique and factorized: The family
\beq\label{eq:SolutionProduct}
\gamma_{\8, t}^{(k)} = | \varphi_t \rangle \langle \varphi_t |^{\otimes^k}, \quad k \ge 1, \quad \text{with}\quad \varphi_t \in H^1 (\mathbb{R}^3),
\eeq
is a solution to \eqref{eq:GPhierarchy} if and only if $\varphi_t$ is a solution to the GP equation \eqref{eq:GPeq}.

While the existence of factorized solutions can be easily obtained, the proof of uniqueness of solutions of the GP hierarchy is the most difficult part of this analysis and was originally obtained in \cite{ESY2007a} by use of highly sophisticated Feynman graph expansion methods inspired by quantum field theory. Their uniqueness result (Theorem 9.1 in \cite{ESY2007a} or Theorem 4 in \cite{ESY2007b}) reads as: Given a family of density operators $\Gamma = (\gamma^{(k)})_{k \ge 1}$ such that
\beq\label{eq:ESYnorm}
||| \gamma^{(k)} |||_k : = \mathrm{T r} \big [ | (1- \triangle_{\mathbf{r}_1})^{\frac{1}{2}} \cdots (1- \triangle_{\mathbf{r}_k})^{\frac{1}{2}} \gamma^{(k)} (1- \triangle_{\mathbf{r}_1})^{\frac{1}{2}} \cdots (1- \triangle_{\mathbf{r}_k})^{\frac{1}{2}} | \big ] \le C^k,\quad \forall k \ge 1,
\eeq
for some $C>0,$ then for any $T>0,$ there exists at most one solution of density operators $\Gamma (t) = (\gamma^{(k)} (t) )_{k \ge 1}$ to \eqref{eq:GPhierarchy} with $\Gamma (0) = \Gamma$ and such that $||| \gamma^{(k)} (t) |||_k \le C^k$ holds uniformly in $t \in [0, T).$

An alternative method for proving uniqueness has been subsequently proposed by Klainerman and Machedon in \cite{KM2008} based on use of space-time bounds on the density matrices and introduction of an elegant ``board game" argument. Roughly speaking, they proved the uniqueness of solutions to \eqref{eq:GPhierarchy} in the product space
$\dot{\mathcal{H}}^1 = \bigotimes^{\8}_{k=1} \dot{\mathrm{H}}^1_k$ of homogeneous Sobolev spaces $\dot{\mathrm{H}}^1_k = \dot{\mathrm{H}}^1 (\mathbb{R}^{3 k} \times \mathbb{R}^{3 k})$ for all $k \ge 1,$ under the assumption of a particular {\it a priori} space-time bound on the density matrices:
\beq\label{eq:KM-AprioriBoundCollisionOper}
\| B^{(k)}_j \gamma^{(k+1)} \|_{L^1_t \dot{\mathrm{H}}^1_k} \le C^k,\quad j=1,\ldots,k,
\eeq
with $C$ independent of $k$ (see Section \ref{Pre} for notions used here). Based on energy conservation, K. Kirkpatrick, B. Schlein, and G. Staffilani in \cite{KSS2011} proved recently that the inequality \eqref{eq:KM-AprioriBoundCollisionOper} is indeed satisfied in the case of $\mathbb{R}^2.$

Let $\mathfrak{H}^1$ be the space of all sequences $\Gamma = (\gamma^{(k)})_{k \ge 1}$ of trace class operators such that $\| \Gamma \|_{\mathfrak{H}^1} < \8,$ where 
\be
\| \Gamma \|_{\mathfrak{H}^1} : = \inf \left \{ \lambda>0:\; \sum_{k=1}^{\infty} \frac{1}{\lambda^k} |||\gamma^{(k)}|||_k \le 1 \right \}.
\ee
Note that $\Gamma = (\gamma^{(k)})_{k \ge 1}$ satisfies \eqref{eq:ESYnorm} if and only if $\Gamma \in \mathfrak{H}^1.$ Then the uniqueness theorem of Erd\"{o}s, Schlein, and Yau stated above can be reformulated as follows: {\it The initial problem for the GP hierarchy \eqref{eq:GPhierarchy} admits at most one solution of density operators in $\mathfrak{H}^1.$} This note also applies to the uniqueness theorem of Klainerman and Machedon \cite{KM2008} along with the condition \eqref{eq:KM-AprioriBoundCollisionOper}.

Motivated by this observation, the first named author in \cite{Chen} introduced a class of quasi-Banach spaces for studying the Cauchy problem of the GP hierarchy \eqref{eq:GPhierarchy}. Briefly speaking, a certain Sobolev type space $\mathcal{H}^{s}$ of sequences of trace class operators is defined as a quasi-Banach space consisting of all $\Gamma = ( \gamma^{(k)} )_{k \ge 1}\in \bigotimes_{k=1}^{\infty} \mathrm{H}^{s}_k$ such that
\be
\sum_{k=1}^{\infty} \frac{1}{\lambda^k} \|\gamma^{(k)}\|_{\mathrm{H}^{s}_k} < \infty \;\text{for some}\; \lambda>0,
\ee
where $\mathrm{H}^{s}_k = \mathrm{H}^{s} (\mathbb{R}^{ k n} \times \mathbb{R}^{k n}),$ equipped with the quasi-norm
\be
\big \| \Gamma \big \|_{\mathcal{H}^{s}} : = \inf \left \{ \lambda>0:\; \sum_{k=1}^{\infty} \frac{1}{\lambda^k} \|\gamma^{(k)}\|_{\mathrm{H}^{s}_k} \le 1 \right \}.
\ee
It was then proved in \cite{Chen} that the Cauchy problem for the GP hierarchy \eqref{eq:GPhierarchy} is locally well posed in $\mathcal{H}^s$ for $s > \max \{\frac{1}{2}, \frac{n-1}{2} \}.$ The precise statement of this that we will use later is presented in Section \ref{Pre}.

In this paper, we continue this line of investigation. First of all, we present some notations and preliminaries in Section \ref{Pre}. Then, in Section \ref{BlowupAlter} we will prove the blowup alternative for GP hierarchies on $\mathbb{R}^n$ and give the associated lower bounds on the blowup rate. The conservation of energy and Virial identities of solutions to the GP hierarchy are presented in Section \ref{EnergyVirialIdentity}. In Section \ref{BlowupFinite}, we prove that any solution of density operators to the focusing GP hierarchy blow up in finite time for $n \ge 3$ if the energy per some $k$ particles in the initial condition is negative. Finally, in Section \ref{GPQuintic}, we will extend all the results for the cubic GP hierarchy to the so-called quintic one.

All these results hold without the assumption of factorized conditions for initial values. Yet we do not require the so-called admissible condition proposed in \cite{CPT2010} (see Remark \ref{rk:AdmissibleCondition} below), and hence make some improvements of the associated results found there.

\section{Preliminaries }\label{Pre}

As follows, we usually denote by $x= (x^1, \ldots, x^n)$ a general variable in $\mathbb{R}^n$ and by $\mathbf{x}_k=(x_1,\ldots, x_k)$ a point in $\mathbb{R}^{k n} = ( \mathbb{R}^n )^k.$ For any $x , y \in \mathbb{R}^n$ we denote by $x \cdot y = \sum^n_{i=1} x^i y^i$ and $|x|^2 = x \cdot x.$ For any $\mathbf{x}_k, \mathbf{y}_k \in \mathbb{R}^{k n},$ we set $\langle \mathbf{x}_k, \mathbf{y}_k \rangle: = \sum^k_{j=1} x_j \cdot y_j$ and $| \mathbf{x}_k |^2 = \langle \mathbf{x}_k, \mathbf{x}_k \rangle.$

We consider the Cauchy problem (initial value problem) for the Gross-Pitaevskii infinite linear hierarchy of equations on $\mathbb{R}^n,$ of the form
\begin{equation}\label{eq:GPHierarchyEquaFunct}
\left \{ \begin{split}
& \big ( \mathrm{i} \partial_t + \triangle^{(k)} \big ) \gamma^{(k)}_t ({\bf x}_k;{\bf x}'_k) = \mu \big [ B^{(k)} (\gamma^{(k+1)}_t )\big ] ({\bf x}_k; {\bf x}'_k ),\\
& \gamma^{(k)}_{t=0}({\bf x}_k;{\bf x}'_k) = \gamma^{(k)}_{0}({\bf x}_k;{\bf x}'_k), \quad k =1, 2, \ldots, \end{split} \right.
\end{equation}
where $t \in \mathbb{R}, {\bf x}_k = (x_1, x_2, \ldots, x_k), {\bf x}'_k = (x'_1, x'_2, \ldots, x'_k) \in \mathbb{R}^{k n}, \mu= \pm 1.$ Here,
\be
\triangle^{(k)} = \sum^k_{j=1} ( \Delta_{x_j} - \Delta_{x'_j} )\quad \text{and}\quad B^{(k)} = \sum^k_{j=1} B^{(k)}_j,
\ee
where $\Delta_{x_j}$ refers to the usual Laplace operator with respect to the variables $x_j \in {\mathbb R}^n$ and the operators $B^{(k)}_j = B^{(k)}_{j,+ }- B^{(k)}_{j,-}$ are defined according to
\begin{equation*}
\begin{split}
[ B^{(k)}_{j, +}& (\gamma^{(k+1)})]  ({\bf x}_{k},{\bf x}'_{k})\\
= & \int \mathrm{d} x_{k+1} \mathrm{d}x'_{k+1}
\delta(x_{k+1}-x'_{k+1}) \delta(x_j- x_{k+1}) \gamma^{(k+1)}({\bf
x}_{k+1},{\bf x}'_{k+1}),
\end{split}
\end{equation*}
and
\begin{equation*}
\begin{split}
[B^{(k)}_{j, - }& (\gamma^{(k+1)}) ] ({\bf x}_{k},{\bf x}'_{k})\\
= & \int \mathrm{d} x_{k+1} \mathrm{d} x'_{k+1}
\delta(x_{k+1}-x'_{k+1} ) \delta(x'_j - x_{k+1})\gamma^{(k+1)}({\bf
x}_{k+1},{\bf x}'_{k+1}).
\end{split}
\end{equation*}
A sequence of functions $\Gamma (t) = ( \gamma^{(k)}_t )_{k \ge 1}$ is said to be a Gross-Pitaevskii (GP, in short) hierarchy, if they satisfy \eqref{eq:GPHierarchyEquaFunct} and are symmetric, in the sense that
\be
\gamma^{(k)}_t ({\bf x}_k, {\bf x}'_k) = \overline{\gamma^{(k)}_t ({\bf x}'_k, {\bf x}_k)}
\ee
and
\be
\gamma^{(k)}_t (x_1,\dotsc,x_k;x'_1,\dotsc,x'_k)= \gamma^{(k)}_t (x_{\sigma(1)},\dotsc,x_{\sigma(k)};x'_{\sigma(1)},\dotsc,x'_{\sigma(k)})
\ee
for any $\sigma \in \Pi_k$ ($\Pi_k$ denotes the set of permutations on $k$ elements).

Let $\varphi \in \mathrm{H}^1(\mathbb{R}^n),$ then one can easily verify that a particular GP hierarchy, i.e., a solution to \eqref{eq:GPHierarchyEquaFunct} with factorized initial datum
\be
\gamma^{(k)}_{t=0}({\bf x}_k; {\bf x}'_k) = \prod^k_{j=1} \varphi(x_j) \overline{\varphi(x'_j)},\quad k=1,2,\ldots,
\ee
is given by
\be
\gamma^{(k)}_t ({\bf x}_{k}; {\bf x}'_{k} ) = \prod^k_{j=1} \varphi_t (x_j) \overline{\varphi_t ( x'_j )},\quad  k=1,2,\ldots,
\ee
where $\varphi_t$ satisfies the cubic non-linear Schr\"odinger equation
\beq\label{eq:GPEqua}
\mathrm{i} \partial_t \varphi_t = -\Delta \varphi_t + \mu |\varphi_t|^2 \varphi_t,\quad \varphi_{t=0}=\varphi,
\eeq
which is {\it defocusing} if $\mu =1,$ and {\it focusing} if $\mu= -1.$ We refer to \cite{C2003} and references therein for the nonlinear Schr\"odinger equation.

In the following, unless otherwise specified, we always use $\gamma^{(k)}, \rho^{(k)}$ for denoting symmetric functions in $\mathbb{R}^{k n} \times \mathbb{R}^{k n}.$ For $k \geq 1$ and $s \in \mathbb{R},$ we denote by $\mathrm{H}^{s}_k = \mathrm{H}^{s} (\mathbb{R}^{ k n} \times \mathbb{R}^{k n})$ the space of measurable functions $\gamma^{(k)} = \gamma^{(k)} ( {\bf x}_k, {\bf x}'_k )$ in $L^2(\mathbb{R}^{k n} \times \mathbb{R}^{k n}),$ which are symmetric, such that
\beq\begin{split}\label{normH}
\| \gamma^{(k)} \|_{\mathrm{H}^{s}_k} : = \| S^{(k)}_{s} \gamma^{(k)} \|_{L^2(\mathbb{R}^{k n} \times \mathbb{R}^{k n})} < \8,
\end{split}\eeq
where
\be\begin{split}
S^{(k)}_{s} :  = \prod_{j=1}^k  (1 - \Delta_{x_j} )^{\frac{s}{2}} (1 - \Delta_{x'_j} )^{\frac{s}{2}},
\end{split}\ee
with the convention $S^{(k)} = S^{(k)}_1.$ Evidently, $\mathrm{H}^{s}_k$ is a Hilbert space with the inner product
\be\begin{split}
\langle \gamma^{(k)}, \rho^{(k)} \rangle = \big \langle S^{(k)}_{s} \gamma^{(k)},\; S^{(k)}_{s} \rho^{(k)} \big \rangle_{L^2(\mathbb{R}^{k n} \times \mathbb{R}^{k n})}.
\end{split}\ee
Moreover, the norm $\|\cdot\|_{\mathrm{H}^{s}_k}$ is invariance under the action of $e^{\mathrm{i} t \triangle^{(k)}},$ i.e.,
\be
\| e^{\mathrm{i} t \triangle^{(k)}} \gamma^{(k)} \|_{\mathrm{H}^{s}_k} = \| \gamma^{(k)} \|_{\mathrm{H}^{s}_k}
\ee
because $e^{\mathrm{i} t \triangle^{(k)}}$ commutates with $\Delta_{x_j}$ for any $j.$

\begin{definition}\label{df:SobolevSpace} {\rm (cf. \cite{Chen}, Definition 2.1)}
For $s \in \mathbb{R}$ we define
\be
\mathcal{H}^{s} = \left \{ ( \gamma^{(k)} )_{k \ge 1} \in \bigotimes_{k=1}^{\infty} \mathrm{H}^{s}_k :\; \sum_{k=1}^{\infty} \frac{1}{\lambda^k} \|\gamma^{(k)}\|_{\mathrm{H}^{s}_k} < \infty \;\text{for some}\; \lambda > 0 \right \},
\ee
equipped with the quasi-norm
\beq\label{eq:SobolevSpaceNorm}
\big \| ( \gamma^{(k)} )_{k \ge 1} \big \|_{\mathcal{H}^{s}} : = \inf \left \{ \lambda > 0:\; \sum_{k=1}^{\infty} \frac{1}{\lambda^k} \|\gamma^{(k)}\|_{\mathrm{H}^{s}_k} \le 1 \right \}.
\eeq
\end{definition}

\begin{remark}\label{rk:SobolevSpace}\rm
Note that $\| \cdot \|_{\mathcal{H}^{s}}$ is not actually a norm in the sense that it does not satisfy $\| \lambda \Gamma \|_{\mathcal{H}^{s}} = |\lambda| \| \Gamma \|_{\mathcal{H}^{s}}$ in general. However, it indeed satisfies the triangle inequality and hence, $\mathcal{H}^{s}$ is a quasi-Banach space.
\end{remark}

\begin{definition}\label{df:SpacetimeSobolevSpace} {\rm (cf. \cite{Chen}, Definition 2.3)}
For an interval $I \subset \mathbb{R},$ we define $L_{t\in I}^1\mathcal{H}^{s}$ to be the space of all strongly measurable functions $\Gamma(t)=\{\gamma^{(k)}_t\}_{k\geq 1}$ on $I$ with values in $\mathcal{H}^{s}$ such that
\begin{equation*}
\sum_{k=1}^{\infty} \frac{1}{\lambda^{k}}\int_I \|\gamma^{(k)}_t\|_{\mathrm{H}_k^{s}}dt<\infty\quad \text{for some} \ \lambda>0,
\end{equation*}
equipped with the quasi-norm
\begin{equation}\label{ndef2}
\|\Gamma(t)\|_{L_{t \in I}^1 \mathcal{H}^{s}}: = \inf \left \{ \lambda>0:  \sum_{k=1}^{\infty} \frac{1}{\lambda^{k}}\int_I \|\gamma^{(k)}_t\|_{\mathrm{H}_k^{s}} d t \leq 1 \right \}.
\end{equation}
\end{definition}

\begin{remark}\label{rk:normL}\rm
For any $\Gamma(t), \Phi(t) \in L_{t\in I}^1\mathcal{H}^{s}$ one can verify that
\begin{enumerate}[{\rm (1)}]

\item $\|\lambda \Gamma(t)\|_{L_{t \in I}^1\mathcal{H}^{s}}\leq\max\{1,|\lambda|\}\|\Gamma(t)\|_{L_{t\in I}^1\mathcal{H}^{s}}$ for any $\lambda \in \mathbb{C};$

\item $\|\Gamma(t) + \Phi(t) \|_{L_{t \in I}^1 \mathcal{H}^{s}} \leq \|\Gamma(t)\|_{L_{t \in I}^1\mathcal{H}^{s}}+\|\Phi(t)\|_{L_{t \in I}^1\mathcal{H}^{s}};$

\item $\|\Gamma(t)\|_{L_{t \in I}^1 \mathcal{H}^{s}}\leq \max\{1,|I| \} \|\Gamma(t)\|_{C(I, \mathcal{H}^{s})},$ where $|I|$ denotes the Lebesgue measure of $I;$

\item $\|\Gamma(t)\|_{L_{t \in I}^1 \mathcal{H}^{s}} = \|\Gamma(t)\|_{L_{t \in I_1}^1\mathcal{H}^{s}}+ \|\Gamma(t)\|_{L_{t \in I_2}^1\mathcal{H}^{s}}$ for subintervals $I_1, I_2 \subset I$ satisfying $I = I_1 \cup I_2$ and $I_1 \cap I_2 = \emptyset.$
\end{enumerate}
\end{remark}

Recall that, in integral formulation, \eqref{eq:GPHierarchyEquaFunct} can be written as
\beq\label{eq:GPHierarchyFunctIntEqua}
\gamma^{(k)}_t = e^{\mathrm{i} t \triangle^{(k)}} \gamma^{(k)}_0 + \int^{t}_{0} d s\;  e^{\mathrm{i} (t-s) \triangle^{(k)}} \tilde{B}^{(k)} \gamma^{(k+1)}_s,\; k=1,2,\ldots\;,
\eeq
whereafter $\tilde{B}^{(k)} = - \mathrm{i} \mu B^{(k)}.$ As noted in \cite{CP2010}, such a solution can be obtained by solving the following infinite linear hierarchy of integral equations
\beq\label{eq:GPStrongSolutionDuh-kth}\begin{split}
\tilde{B}^{(k)} \gamma^{(k+1)}_t & = \tilde{B}^{(k)} e^{\mathrm{i} t \triangle^{(k+1)}} \gamma^{(k+1)}_0 + \int^{t}_{0} d s\, \tilde{B}^{(k)} e^{\mathrm{i} (t-s) \triangle^{(k+1)}} \tilde{B}^{(k+1)} \gamma^{(k+2)}_s,
\end{split}\eeq
for any $k \ge 1.$ If we write
\be
\hat{\triangle} \Gamma : = \big ( \triangle^{(k)} \gamma^{(k)} \big )_{k \ge 1} \quad \text{and} \quad \hat{B} \Gamma := \big ( \tilde{B}^{(k)} \gamma^{(k+1)} \big )_{k \ge 1},
\ee
then \eqref{eq:GPHierarchyFunctIntEqua} and \eqref{eq:GPStrongSolutionDuh-kth} can be written as
\beq\label{eq:GPHierarchyFunctIntEquaGamma}
\Gamma (t) = e^{\mathrm{i} t \hat{\triangle}} \Gamma_0 + \int^{t}_{0} d s~  e^{\mathrm{i} (t-s) \hat{\triangle}} \hat{B} \Gamma (s)
\eeq
and
\beq\label{eq:GPStrongSolutionFunctGamma}\begin{split}
\hat{B} \Gamma (t) = \hat{B} e^{\mathrm{i} t \hat{\triangle}} \Gamma_0 + \int^{t}_{0} d s\, \hat{B} e^{\mathrm{i} (t-s) \hat{\triangle}} \hat{B} \Gamma (s),
\end{split}\eeq
respectively.

Let us make the notion of solution more precise.

\begin{definition}\label{df:StrongSolution}
A function $\Gamma (t) = ( \gamma^{(k)}_t )_{k \geq 1}: I \mapsto \mathcal{H}^{s}$ on a non-empty time interval $0 \in I \subset \mathbb{R}$ is said to be a local $(strong)$ solution
to the Gross-Pitaevskii hierarchy \eqref{eq:GPHierarchyEquaFunct} if it lies in the class $C (K, \mathcal{H}^{s})$ for all compact sets $K \subset I$ and obeys the Duhamel formula
\beq\label{eq:MildSolution}
\gamma^{(k)}_t = e^{\mathrm{i} t \triangle^{(k)}} \gamma^{(k)}_0 - \mathrm{i} \mu \int^{t}_{0} d s\; e^{\mathrm{i} (t-s) \triangle^{(k)}} B^{(k)} \gamma^{(k+1)}_s,\quad \forall t \in I,
\eeq
holds in $\mathrm{H}^{s}_k$ for every $k=1,2,\ldots.$
\end{definition}

We refer to the interval $I$ as the {\it lifespan} of $\Gamma (t)$ with the initial value $\Gamma (0) = ( \gamma^{(k)}_0 )_{k \geq 1}.$ We say that $\Gamma (t)$ is a {\it maximal-lifespan solution} on $I = (- T_{\mathrm{min}}, T_{\mathrm{max}})$ with $T_{\mathrm{max}} = T_{\mathrm{max}} (\Gamma_0) \in (0, \8]$ and $T_{\mathrm{min}} = T_{\mathrm{min}} (\Gamma_0) \in (0, \8]$ if the solution cannot be extended to any strictly larger interval. $(- T_{\mathrm{min}}, T_{\mathrm{max}})$ is said to be the {\it maximal lifespan} of $\Gamma (t)$ with the initial value $\Gamma (0) = ( \gamma^{(k)}_0 )_{k \geq 1}.$ We say that $\Gamma (t)$ is a {\it global} solution if $(- T_{\mathrm{min}}, T_{\mathrm{max}}) = \mathbb{R},$ i.e., $T_{\mathrm{max}} = T_{\mathrm{min}} = \8.$

\begin{remark}\label{rk:SobolevSpace}\rm
Let $\varphi \in \mathrm{H}^{s}(\mathbb{R}^n)$ and set for $k \ge 1,$
\be
\gamma^{(k)}_0 ({\bf x}_k; {\bf x}'_k) = \prod^k_{j=1} \varphi(x_j) \overline{\varphi(x'_j)}.
\ee
An immediate computation yields that
\be
\big \| ( \gamma^{(k)}_0 )_{k \ge 1} \big \|_{\mathcal{H}^{s}} = 2 \| \varphi \|^2_{\mathrm{H}^{s} (\mathbb{R}^n)}.
\ee
Thus, for $T>0,$ $\varphi_t \in C((-T,T), \mathrm{H}^{s})$ is a solution to \eqref{eq:GPEqua} with the initial value $\varphi_t |_{t=0} = \varphi$ if and only if \beq\label{eq:FactorGPHierarchy}
\Gamma (t) = ( \gamma^{(k)}_t )_{k \ge 1} \; \text{with}\; \gamma^{(k)}_t({\bf x}_k; {\bf x}'_k) = \prod^k_{j=1} \varphi_t(x_j) \overline{\varphi_t(x'_j)}
\eeq
is a solution to \eqref{eq:GPHierarchyEquaFunct} in $C((-T,T), \mathcal{H}^{s})$ with $\Gamma (0) = ( \gamma^{(k)}_0 )_{k \ge 1}.$
This yields that the Cauchy problem \eqref{eq:GPHierarchyEquaFunct} in $\mathcal{H}^{s}$ is equivalent to the one \eqref{eq:GPEqua}
in $\mathrm{H}^{s}$ for the case that initial conditions are factorized.
%\end{enumerate}
\end{remark}

The local well-posedness for the GP hierarchy \eqref{eq:GPHierarchyEquaFunct} in $\mathcal{H}^{s}$ was proved in \cite{Chen} for $s > \max \{1/2,\; (n-1)/2 \}.$ We state it as Theorems \ref{th:LocalWellposued-alpha>n/2} and \ref{th:LocalWellposued-alpha>(n-1)/2}.

\begin{theorem}\label{th:LocalWellposued-alpha>n/2} {\rm (cf. \cite{Chen}, Theorem 2.1)}
Assume that $n \geq 1$ and $s > \frac{n}{2}.$ The Cauchy problem \eqref{eq:GPHierarchyEquaFunct} is locally well posed. More precisely, there exists a constant $A_{n, s}>0$ depending only on $n$ and $s$ such that
\begin{enumerate}[{\rm (1)}]

\item For each $\Gamma_0 = ( \gamma^{(k)}_{0} )_{k \geq 1}\in \mathcal{H}^{s},$ let $I = [-T, T]$ with $T =\frac{A_{n, s}}{\| \Gamma_0 \|_{\mathcal{H}^{s}}}.$ Then there exists a solution $\Gamma (t) = ( \gamma^{(k)}_t )_{k \geq 1} \in C ( I, \mathcal{H}^{s})$ to the Gross-Pitaevskii hierarchy \eqref{eq:GPHierarchyEquaFunct} with the initial data $\Gamma_0$ such that
\begin{equation}\label{eq:SpacetimeEstimate-LocalWellposued1}
\| \Gamma (t) \|_{C( I,\mathcal{H}^{s})} \leq 2 \| \Gamma_0 \|_{\mathcal{H}^{s}}.
\end{equation}

\item Given $I_0 = [-T_0, T_0]$ with $T_0 > 0.$ If $\Gamma (t)$ and $\Gamma' (t)$ in $C ( I_0, \mathcal{H}^{s})$ are two solutions to \eqref{eq:GPHierarchyEquaFunct} with initial conditions $\Gamma_{t=0} = \Gamma_0$ and $\Gamma'_{t=0} = \Gamma'_0$ in $\mathcal{H}^{s}$ respectively, then
\begin{equation}\label{eq:SpacetimeEstimate-Stability}
\| \Gamma (t) - \Gamma' (t) \|_{C( I, \mathcal{H}^{s})} \leq 2 \| \Gamma_0 - \Gamma'_0  \|_{\mathcal{H}^{s}},
\end{equation}
with $I = [-T, T],$ where
\be
T = \min \left \{T_0, \frac{A_{n, s}}{\| \Gamma (t) - \Gamma' (t) \|_{C ( I_0; \mathcal{H}^{s})}} \right \}.
\ee
\end{enumerate}

\end{theorem}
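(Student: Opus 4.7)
The plan is to set up a Picard iteration for the Duhamel equation \eqref{eq:GPHierarchyFunctIntEquaGamma} and to resum the iterates directly in the quasi-norm of $\mathcal{H}^s$. The two ingredients are the unitarity of each $e^{\mathrm{i}t\triangle^{(k)}}$ on $\mathrm H^s_k$, already noted in the excerpt, and a trace-type bound on the collision operator. For $s>n/2$ the Sobolev embedding $\mathrm H^s(\mathbb R^n)\hookrightarrow L^\infty(\mathbb R^n)$ allows one to evaluate the $\delta$-restriction $(x_{k+1},x'_{k+1})\mapsto(x_j,x_j)$ continuously in the remaining $2k$ variables, giving $\|B^{(k)}_j\gamma^{(k+1)}\|_{\mathrm H^s_k}\le C_0\|\gamma^{(k+1)}\|_{\mathrm H^s_{k+1}}$ for $j=1,\dots,k$ with $C_0=C_0(n,s)$, and hence $\|B^{(k)}\gamma^{(k+1)}\|_{\mathrm H^s_k}\le C_0\,k\,\|\gamma^{(k+1)}\|_{\mathrm H^s_{k+1}}$. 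The factor $k$, combined with the shift by one in the sequence index caused by $\hat B$, is the sole source of technical trouble.

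Define $\Gamma^{[0]}(t):=e^{\mathrm{i}t\hat\triangle}\Gamma_0$ and $\Gamma^{[m+1]}(t):=e^{\mathrm{i}t\hat\triangle}\Gamma_0+\int_0^t e^{\mathrm{i}(t-\tau)\hat\triangle}\hat B\,\Gamma^{[m]}(\tau)\,d\tau$. Iterating and using the isometry of the free evolution on each $\mathrm H^s_k$, the $n$-th order term contributes at level $k$ a quantity bounded by $C_0^{\,n}\,k(k+1)\cdots(k+n-1)\,(|t|^n/n!)\,\|\gamma^{(k+n)}_0\|_{\mathrm H^s_{k+n}}=C_0^{\,n}\binom{k+n-1}{n}|t|^n\|\gamma^{(k+n)}_0\|_{\mathrm H^s_{k+n}}$, the factorial coming from the time-ordered integrals and the product $k(k+1)\cdots(k+n-1)$ from the successive collision operators. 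The key resummation is then: for any $\nu>0$, reindexing $j=k+n$,
\[
\sum_{k\ge1}\nu^{-k}\bigl\|(\Gamma^{[m]}(t))^{(k)}\bigr\|_{\mathrm H^s_k}\le\sum_{j\ge1}\nu^{-j}\|\gamma^{(j)}_0\|_{\mathrm H^s_j}\sum_{n=0}^{j-1}\binom{j-1}{n}(C_0|t|\nu)^n\le\sum_{j\ge1}\nu^{-j}(1+C_0|t|\nu)^{j-1}\|\gamma^{(j)}_0\|_{\mathrm H^s_j}.
\]
For any $\lambda_0>\|\Gamma_0\|_{\mathcal H^s}$, impose the self-consistency $\nu=\lambda_0(1+C_0|t|\nu)$, i.e.\ $\nu=\lambda_0/(1-C_0|t|\lambda_0)$; then $\nu^{-j}(1+C_0|t|\nu)^{j-1}=(\lambda_0/\nu)\lambda_0^{-j}$, and the right-hand side is at most $(\lambda_0/\nu)\sum_j\lambda_0^{-j}\|\gamma^{(j)}_0\|_{\mathrm H^s_j}\le\lambda_0/\nu\le1$. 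Hence $\|\Gamma^{[m]}(t)\|_{\mathcal H^s}\le\nu$ uniformly in $m$; requiring $C_0|t|\lambda_0\le1/2$ gives $\|\Gamma^{[m]}(t)\|_{\mathcal H^s}\le2\lambda_0$, and letting $\lambda_0\searrow\|\Gamma_0\|_{\mathcal H^s}$ yields \eqref{eq:SpacetimeEstimate-LocalWellposued1} on $I=[-T,T]$ with $T=A_{n,s}/\|\Gamma_0\|_{\mathcal H^s}$ and $A_{n,s}:=1/(2C_0)$.

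Convergence of the iterates in $C(I,\mathcal H^s)$ follows from the same resummation applied to the telescoped differences $\Gamma^{[m+1]}-\Gamma^{[m]}=\int_0^t e^{\mathrm{i}(t-\tau)\hat\triangle}\hat B(\Gamma^{[m]}-\Gamma^{[m-1]})(\tau)\,d\tau$, whose $m$-fold iterates pick up an extra $(C_0T\lambda_0)^m$-factor; hence $\{\Gamma^{[m]}\}$ is Cauchy, and continuity in $t$ is inherited from $e^{\mathrm{i}t\hat\triangle}\Gamma_0$ together with dominated convergence in the Duhamel term. For part (2), linearity of $\hat B$ shows $D:=\Gamma-\Gamma'$ satisfies the same integral equation with initial datum $\Gamma_0-\Gamma'_0$ on $I_0$; running the identical resummation with $\lambda_0$ replaced by $\|D\|_{C(I_0,\mathcal H^s)}$, and shrinking $T$ to $\min\{T_0,\,A_{n,s}/\|D\|_{C(I_0,\mathcal H^s)}\}$, gives \eqref{eq:SpacetimeEstimate-Stability}. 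The principal obstacle is the incompatibility of the $k$-loss in $B^{(k)}$ with a direct contraction in $\mathcal H^s$; the resolution is the closed-form identity $\sum_{n=0}^{j-1}\binom{j-1}{n}x^n=(1+x)^{j-1}$, which transfers the polynomial $k$-loss into a rescaling of the geometric weight $\nu$, combined with the self-consistent choice of $\nu$ that produces the correct lifespan scaling $T\sim1/\|\Gamma_0\|_{\mathcal H^s}$.
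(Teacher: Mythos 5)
Your overall strategy --- iterated Duhamel expansion, the collision bound $\|B^{(k)}\gamma^{(k+1)}\|_{\mathrm H^s_k}\le C_0 k\|\gamma^{(k+1)}\|_{\mathrm H^s_{k+1}}$ valid for $s>n/2$, and resummation of the $k$-loss against the geometric weight $\nu^{-k}$ --- is exactly the mechanism used in the cited reference and reproduced in this paper's own proof of the quintic analogue (Theorem \ref{th:LocalWellposuedQuintic2}); your closed-form identity $\sum_{n=0}^{j-1}\binom{j-1}{n}x^n=(1+x)^{j-1}$ with the self-consistent choice $\nu=\lambda_0/(1-C_0|t|\lambda_0)$ is a cleaner packaging of the same computation, and the resulting uniform bound $\|\Gamma^{[m]}(t)\|_{\mathcal H^s}\le 2\lambda_0$ is correct.

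There is, however, a genuine gap in the convergence step. From $\sum_k\nu^{-k}\|(\Gamma^{[m+1]}-\Gamma^{[m]})^{(k)}(t)\|_{\mathrm H^s_k}\le\varepsilon_m$ with $\varepsilon_m\to0$ one \emph{cannot} conclude that $\|\Gamma^{[m+1]}(t)-\Gamma^{[m]}(t)\|_{\mathcal H^s}\to0$: since $\|\Gamma\|_{\mathcal H^s}\ge\sup_k\|\gamma^{(k)}\|_{\mathrm H^s_k}^{1/k}$, multiplying a sequence by a small constant barely changes its quasi-norm at high levels $k$. Concretely, the $(m+1)$-st Duhamel term at level $k$ is of size $\binom{k+m}{m+1}(C_0|t|)^{m+1}\|\gamma_0^{(k+m+1)}\|$, and for data with genuinely geometric growth $\|\gamma_0^{(j)}\|\sim\lambda_0^j$ (e.g.\ factorized data) its quasi-norm stays comparable to $\nu$ for every $m$; your claimed extra factor $(C_0T\lambda_0)^m$ is absorbed by the binomial coefficient $\binom{j-1}{m+1}$ in the weighted sum and does not survive into the quasi-norm. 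So $\{\Gamma^{[m]}\}$ is \emph{not} Cauchy in $C(I,\mathcal H^s)$ in general. The repair is the one the paper uses for the quintic case: for each \emph{fixed} $k$ the series $\sum_n C_0^n\binom{k+n-1}{n}|t|^n\|\gamma_0^{(k+n)}\|\le\lambda_0^k(1-C_0|t|\lambda_0)^{-k}$ converges, so $(\Gamma^{[m]})^{(k)}(t)$ converges in $\mathrm H^s_k$ componentwise; one then defines $\Gamma(t)$ as the componentwise limit, recovers the bound $\|\Gamma(t)\|_{\mathcal H^s}\le 2\|\Gamma_0\|_{\mathcal H^s}$ from your uniform estimate by a Fatou-type passage to the limit in $\sum_k\nu^{-k}(\cdot)\le1$, and verifies the Duhamel formula \eqref{eq:MildSolution} in each $\mathrm H^s_k$ separately, which is what Definition \ref{df:StrongSolution} actually requires. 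A second, smaller omission is in part (2): there the given solutions are not Picard iterates, so the $m$-fold iteration of the integral equation leaves a remainder $\tilde B^{(k)}\cdots\tilde B^{(k+m-1)}$ acting on $D^{(k+m)}(t_m)$, and one must show this vanishes as $m\to\infty$ using the a priori bound $\|D^{(j)}(s)\|\le(\|D\|_{C(I_0,\mathcal H^s)}+\varepsilon)^j$ --- this is precisely why $T$ is capped by $A_{n,s}/\|D\|_{C(I_0,\mathcal H^s)}$ rather than by $A_{n,s}/\|\Gamma_0-\Gamma_0'\|_{\mathcal H^s}$; your sketch gestures at this but should state the remainder estimate explicitly.
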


\begin{remark}\label{rk:thalpha>n/2}\rm
Theorem \ref{th:LocalWellposued-alpha>n/2} shows that one has unconditional local wellposedness in $\mathcal{H}^{s}$ for the Cauchy problem of the GP hierarchy \eqref{eq:GPHierarchyEquaFunct} for any $s > n/2.$ This agrees with the case of the GP equation \eqref{eq:GPEqua} (see e.g. \cite{Tao2006}, Proposition 3.8). Using the classical argument (see e.g. the proof of Theorem 1.17 in \cite{Tao2006}), one can easily shows that given any $\Gamma_0 \in \mathcal{H}^{s},$ there exists a unique maximal interval of existence $I$ and a unique solution $\Gamma_t \in C (I, \mathcal{H}^{s}).$ Moreover, if $I$ has a finite endpoint $T,$ i.e., $T = T_{\max} < \8$ or $T = T_{\min} <\8,$ then the $\mathcal{H}^{s}$-norm of $\Gamma_t$ will go to infinity as $t \to T.$ (Indeed, we will give the corresponding blowup rate in Theorem \ref{th:BlowupRate-alpha>n/2} below.) Thus, the maximal lifespan of $\Gamma_t$ is necessarily open.
\end{remark}

\begin{theorem}\label{th:LocalWellposued-alpha>(n-1)/2} {\rm (cf. \cite{Chen}, Theorem 2.2)}
Assume that $n \ge 2$ and $s > \frac{n-1}{2}.$ Then, the Cauchy problem for the Gross-Pitaevskii hierarchy \eqref{eq:GPHierarchyEquaFunct} is locally well posed in $\mathcal{H}^{s}.$ More precisely, there exist an absolute constant $A>2$ and a constant $C=B_{n, s}>0$ depending only on $n$ and $s$ such that
\begin{enumerate}[{\rm (1)}]

\item For every $\Gamma_0 = ( \gamma^{(k)}_{0} )_{k \geq 1} \in \mathcal{H}^{s},$ let $I = [-T, T]$ with $T = \frac{B_{n, s}}{\| \Gamma_0 \|^2_{\mathcal{H}^{s}}}.$ Then there exists a solution $\Gamma (t) = ( \gamma^{(k)} (t) )_{k \geq 1} \in C ( I, \mathcal{H}^{s})$ to \eqref{eq:GPHierarchyEquaFunct} with the initial data $\Gamma (0) = \Gamma_0$ satisfying
\begin{equation}\label{eq:SpacetimeEstimate}
\| \hat{B} \Gamma (t) \|_{L^1_{t \in I} \mathcal{H}^{s}} \leq 4 A \| \Gamma_0 \|_{\mathcal{H}^{s}}.
\end{equation}

\item Given $I_0 = [-T_0, T_0]$ with $T_0 >0.$ If $\Gamma (t) \in C ( I_0, \mathcal{H}^{s})$ so that $\hat{B} \Gamma (t) \in L^1_{t \in I_0} \mathcal{H}^{s}$ is a solution to \eqref{eq:GPHierarchyEquaFunct} with the initial data $\Gamma (0) = \Gamma_0,$ then \eqref{eq:SpacetimeEstimate} holds as well for $I = [-T, T],$ where
\be
T = \min \left \{T_0, \; \frac{B_{n, s}}{ \| \hat{B} \Gamma (t) \|^2_{L^1_{t \in I_0} \mathcal{H}^{s}} + \| \Gamma_0 \|^2_{\mathcal{H}^{s}} } \right \}.
\ee

\item Given $I_0 = [-T_0, T_0]$ with $T_0 >0.$ If $\Gamma (t)$ and $\Gamma' (t)$ in $C ( I_0, \mathcal{H}^{s})$ with $\hat{B} \Gamma (t), \hat{B} \Gamma' (t) \in L^1_{t \in I_0}\mathcal{H}^{s}$ are two solutions to \eqref{eq:GPHierarchyEquaFunct} with initial conditions $\Gamma (0) = \Gamma_0$ and $\Gamma' (0) = \Gamma'_0$ in $\mathcal{H}^{s}$ respectively, then
\begin{equation}\label{eq:InitialValueContinuousDependence}
\| \Gamma (t) - \Gamma' (t) \|_{C( I, \mathcal{H}^{s})} \leq (1+ 4 A) \| \Gamma_0 - \Gamma'_0\|_{\mathcal{H}^{s}},
\end{equation}
with $I =[-T, T],$ where
\be
T = \min \left \{ T_0, \; \frac{B_{n, s}}{ \| \hat{B} [\Gamma (t) - \Gamma' (t) ]\|^2_{L^1_{t \in I_0} \mathcal{H}^{s}} + \| \Gamma_0 - \Gamma'_0 \|^2_{\mathcal{H}^{s}} } \right \}.
\ee

\end{enumerate}

In particular, the above results hold for $\mathcal{H}^1$ in the case $n=3.$
\end{theorem}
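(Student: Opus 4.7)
My plan is to solve the Cauchy problem \eqref{eq:GPHierarchyEquaFunct} at the level of the collision operator rather than of $\Gamma$ itself, because the natural a priori control at the subcritical level $s>(n-1)/2$ is a space-time bound on $\hat B\Gamma$ rather than a pointwise-in-time bound on $\Gamma$. Concretely, I would read \eqref{eq:GPStrongSolutionFunctGamma} as a fixed-point problem for
\begin{equation*}
\Phi(\Xi)(t) \;:=\; \hat B\, e^{\mathrm{i} t\hat{\triangle}}\Gamma_0 \;+\; \int_0^t \hat B\, e^{\mathrm{i}(t-\tau)\hat{\triangle}}\,\Xi(\tau)\,d\tau
\end{equation*}
on a ball in $L^1_{t\in I}\mathcal{H}^{s}$. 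Once a fixed point $\Xi=\hat B\Gamma$ is produced, $\Gamma(t)$ is recovered from \eqref{eq:GPHierarchyFunctIntEquaGamma} and belongs to $C(I,\mathcal{H}^{s})$ by the standard Duhamel continuity argument, while the fixed-point norm bound is exactly the space-time estimate \eqref{eq:SpacetimeEstimate}.

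The crux of the argument is a Klainerman--Machedon style bilinear space-time estimate with constants independent of $k$: there exists an absolute $A>0$ such that for every $k\ge 1$, $1\le j\le k$, $s>(n-1)/2$, and $T>0$,
\begin{equation*}
\big\|B^{(k)}_j\, e^{\mathrm{i} t\triangle^{(k+1)}}\rho^{(k+1)}\big\|_{L^1_{t\in[-T,T]}\mathrm{H}^{s}_k} \;\le\; A\sqrt{T}\,\|\rho^{(k+1)}\|_{\mathrm{H}^{s}_{k+1}}.
\end{equation*}
I would obtain this by freezing the $k-1$ spectator variables, Fourier-transforming in the relevant pair $(x_j,x_{k+1})$, and reducing to the standard bilinear Strichartz estimate on $\mathbb{R}^n\times\mathbb{R}^n$ for the free Schr\"odinger evolution; the range $s>(n-1)/2$ is exactly what is needed for the $T T^{*}$ argument to absorb the delta trace that defines $B^{(k)}_j$. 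The combinatorial factor $k$ arising from $B^{(k)}=\sum_{j=1}^k B^{(k)}_j$ is absorbed into the geometric weight $\lambda^{-k}$ built into the quasi-norm \eqref{eq:SobolevSpaceNorm} by taking $\lambda$ slightly above $\|\Gamma_0\|_{\mathcal{H}^{s}}$, which is exactly the flexibility that the quasi-Banach structure of $\mathcal{H}^{s}$ provides.

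Given this bilinear bound, the two terms of $\Phi(\Xi)$ are controlled by $A\sqrt{T}\,\|\Gamma_0\|_{\mathcal{H}^{s}}$ and $A\sqrt{T}\,\|\Xi\|_{L^1_{t\in I}\mathcal{H}^{s}}$ respectively, so choosing $T=B_{n,s}/\|\Gamma_0\|_{\mathcal{H}^{s}}^2$ with $B_{n,s}$ sufficiently small makes $\Phi$ a strict contraction on the ball of radius $4A\|\Gamma_0\|_{\mathcal{H}^{s}}$ in $L^1_{t\in I}\mathcal{H}^{s}$, which establishes part (1). Part (2) is an a posteriori statement: re-running the same contraction on a subinterval of length comparable to $1/(\|\hat B\Gamma\|_{L^1_{I_0}\mathcal{H}^{s}}^2+\|\Gamma_0\|_{\mathcal{H}^{s}}^2)$ and invoking uniqueness to identify the two solutions gives the desired bound. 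Part (3) then follows by applying the scheme to $\Gamma-\Gamma'$, exploiting linearity of $\hat B$ and of the free flow. The main obstacle is the $k$-uniformity of the Klainerman--Machedon estimate; what saves the argument is that $B^{(k)}_j$ couples only two of the $k+1$ variables, so after the free evolution the problem genuinely reduces to a two-body bilinear Strichartz inequality, and the remaining delicate bookkeeping consists in showing that this uniform constant survives the passage from the coordinate-wise $\mathrm{H}^{s}_k$-estimates to the scale-invariant quasi-norm on $\mathcal{H}^{s}$.
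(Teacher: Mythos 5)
Your overall architecture --- solving for $\Xi=\hat B\Gamma$ via \eqref{eq:GPStrongSolutionFunctGamma} and then recovering $\Gamma$ from \eqref{eq:GPHierarchyFunctIntEquaGamma} --- does match the paper's, and your bilinear estimate is essentially the cubic analogue of Lemma \ref{le:QOperatorEstimate} from \cite{KM2008} and \cite{CP2010} (the paper states it in $L^2_t$ and gets the $\sqrt{T}$ from Cauchy--Schwarz, a cosmetic difference). But the central claim, that $\Phi$ is a strict contraction on a ball in $L^1_{t\in I}\mathcal{H}^{s}$, fails for two compounding reasons. First, $\|\cdot\|_{\mathcal{H}^{s}}$ is only a quasi-norm: it is not homogeneous, so shrinking $T$ cannot drive $\|\Phi\Xi-\Phi\Xi'\|_{L^1_{t\in I}\mathcal{H}^{s}}$ below $\|\Xi-\Xi'\|_{L^1_{t\in I}\mathcal{H}^{s}}$; a single Duhamel step yields at best a bound by an absolute constant times the input quasi-norm, never a factor $\theta<1$, because making $\lambda$ smaller than $\|\Xi-\Xi'\|_{L^1_{t\in I}\mathcal{H}^{s}}$ forces you to multiply the level-$k$ terms by a quantity growing geometrically in $k$. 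Second, and more fundamentally, the one-step estimate at level $k$ reads $\|(\Phi\Xi)^{(k)}\|\le C k\sqrt{T}\,\|\Xi^{(k+1)}\|$ because $B^{(k)}=\sum_{j=1}^{k}B^{(k)}_j$. You can absorb one factor of $k$ into the weight $\lambda^{-k}$ once, but the Picard iterates are exactly the partial sums of the iterated Duhamel series, and after $m$ iterations the accumulated combinatorial factor is $k(k+1)\cdots(k+m-1)=\tfrac{(k+m-1)!}{(k-1)!}$, which defeats $(C\sqrt{T})^{m}$ for every $T>0$. This factorial divergence is the well-known obstruction for GP hierarchies; no naive fixed-point or Picard argument closes.

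What the paper actually does (it proves the quintic analogue, Theorem \ref{th:LocalWellposuedQuintic2}, in full; the cubic case is identical in structure and deferred to \cite{Chen}) is to expand the Duhamel series completely and invoke the Klainerman--Machedon ``board game'' combinatorics (Lemma \ref{le:DuhamelEstimate}, i.e.\ Proposition A.2 of \cite{CP2010}), which regroups the factorially many terms into $O(A^{k+j})$ equivalence classes, each bounded by $k A^{k+j}(C_{n,s}T)^{(j+1)/2}\|\gamma_0^{(k+2j+2)}\|_{\mathrm{H}^{s}_{k+2j+2}}$ after exchanging the order of the time integrations. Only then is the series geometrically summable against the weight $\lambda^{-k}$ with $\lambda=2A\|\Gamma_0\|_{\mathcal{H}^{s}}$, and convergence of the iterates is obtained level by level in the genuine Banach spaces $L^1_{t\in I}\mathrm{H}^{s}_k$, not via Banach's theorem in the quasi-Banach space. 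Your treatment of part (2) also needs repair: the paper does not ``invoke uniqueness'' (which is not available at that stage and is itself conditional on the a priori bound); it iterates Duhamel on the given solution and uses the assumption $\hat B\Gamma\in L^1_{t\in I_0}\mathcal{H}^{s}$ together with Lemma \ref{le:DuhamelEstimate}(2) to make the $m$-th remainder term vanish as $m\to\infty$.
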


\begin{remark}\label{rk:thalpha>(n-1)/2}\rm
As shown in Theorem \ref{th:LocalWellposued-alpha>(n-1)/2}, for the case $n/2 \ge s > (n-1)/2$ we require a priori assumption $\hat{B} \Gamma (t) \in L^1_{t \in I} \mathcal{H}^{s}$ in both the stability and uniqueness parts, although we prove that for the existence part,
such a priori assumption is not required. At the time of this writing, the question remains open whether the
condition $\hat{B} \Gamma (t) \in L^1_{t \in I} \mathcal{H}^{s}$ is necessary for the uniqueness of solutions.
\end{remark}

%\begin{remark}\label{rk:proof}\rm
%Theorems \ref{th:LocalWellposued-alpha>n/2} and \ref{th:LocalWellposued-alpha>(n-1)/2} are both proved in \cite{Chen} by use of the fully expanded iterated Duhamel series
%and a Cauchy convergence criterion involved in \cite{CL2011}. However, we fail to try a proof of using a contraction mapping argument as done in \cite{CP2010}.
%\end{remark}

\begin{remark}\label{rk:ChenPavlovicResult}\rm
Note that a similar but different solution space was previously introduced by Chen and Pavlovi\'{c} \cite{CP2010} for studying the initial problem of \eqref{eq:GPHierarchyEquaFunct}. Precisely, given $0<\xi <1,$ set
\be
\mathcal{H}^{s}_{\xi} = \left \{ \Gamma = ( \gamma^{(k)} )_{k \ge 1} \in \bigotimes_{k=1}^{\infty} \mathrm{H}^{s}_k :\; \| \Gamma \|_{\mathcal{H}^{s}_{\xi}}:= \sum_{k=1}^{\infty} \xi^k \|\gamma^{(k)}\|_{\mathrm{H}^{s}_k} < \infty \right \}.
\ee
Then, equipped with the norm $\| \cdot \|_{\mathcal{H}^{s}_{\xi}},$ $\mathcal{H}^{s}_{\xi}$ is a Banach space. The local well-posedness obtained in \cite{CP2010} states that for any initial data $\Gamma_0 \in \mathcal{H}^{s}_{\xi_1}$ with $\xi_1 > 0,$ there exists a unique solution $\Gamma (t) \in C([-T, T], \mathcal{H}^{s}_{\xi_2})$ for some $0 < \xi_2 < \xi_1$ and $T>0,$ under an additional assumption on $\hat{B} \Gamma (t)$ (see also \cite{CL2011} for some improvements in the case $s > \frac{n}{2}$). That is, there are two different parameters $\xi_1, \xi_2$ in their result. On the other hand, the norm $\| \varphi_t \|_{\mathrm{H}^{s}}$ is not compatible with $\| \Gamma (t) \|_{\mathcal{H}^{s}_{\xi}}$ for factorized hierarchies $\Gamma$ of the form \eqref{eq:FactorGPHierarchy}.

As shown in Theorems \ref{th:LocalWellposued-alpha>n/2} and \ref{th:LocalWellposued-alpha>(n-1)/2}, these two undesirable issues will be eliminated if the space $\mathcal{H}^{s}$
is involved instead of $\mathcal{H}^{s}_{\xi}.$ This shows that the space $\mathcal{H}^{s}$ with the
quantity \eqref{eq:SobolevSpaceNorm} seems more suitable for studying the Cauchy problem of the GP hierarchy \eqref{eq:GPHierarchyEquaFunct}.
\end{remark}

\section{Blowup alternative and blowup rate in finite time}\label{BlowupAlter}

In this section, we prove two results concerning the blowup alternative (e.g., see Definition 3.1.5 in \cite{C2003}) and the associated lower bounds on the blow-up rate of solutions to the GP hierarchy \eqref{eq:GPHierarchyEquaFunct}.

\begin{theorem}\label{th:BlowupRate-alpha>n/2}
Assume that $n\geq 1$ and $s > \frac{n}{2}.$ If $\Gamma(t)$ is a solution to the Gross-Pitaevskii hierarchy \eqref{eq:GPHierarchyEquaFunct} with initial condition $\Gamma(0)\in \mathcal{H}^{s}$ such that $T_{\mathrm{max}} < \8,$ then
$\lim_{t \nearrow T_{\mathrm{max}}} \| \Gamma (t) \|_{\mathcal{H}^{s}} = \8,$ and there exists a constant $A_{n,s}>0$ depending only on $n$ and $s$ such that the following lower bound on the blowup rate holds
\beq\label{eq:Belowuprate-alpha>n/2}\begin{split}
\|\Gamma(t)\|_{\mathcal{H}^{s}} & \geq \frac{A_{n,s}}{T_{\mathrm{max}}-t},\quad \forall \ 0<t<T_{\mathrm{max}}.
\end{split}\eeq

The similar results hold for $T_{\min}.$
\end{theorem}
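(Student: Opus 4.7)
The plan is to prove the quantitative lower bound \eqref{eq:Belowuprate-alpha>n/2} directly; the claim $\lim_{t\nearrow T_{\mathrm{max}}}\|\Gamma(t)\|_{\mathcal{H}^{s}}=\infty$ then follows at once by letting $t\nearrow T_{\mathrm{max}}$. The idea is the classical one for the blow-up alternative of semilinear evolution equations: re-solve the Cauchy problem starting from a time close to $T_{\mathrm{max}}$ and use the maximality of $T_{\mathrm{max}}$ to force a size bound on the datum of the restarted problem.

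Concretely, I fix $t_0\in(0,T_{\mathrm{max}})$ and set $M:=\|\Gamma(t_0)\|_{\mathcal{H}^{s}}$. Since \eqref{eq:GPHierarchyEquaFunct} is autonomous, after translating the initial time I would apply Theorem \ref{th:LocalWellposued-alpha>n/2}(1) with initial datum $\Gamma(t_0)$ at initial time $t_0$ to produce a local solution $\widetilde\Gamma\in C\bigl([t_0-T^{\ast},\,t_0+T^{\ast}],\mathcal{H}^{s}\bigr)$ with $\widetilde\Gamma(t_0)=\Gamma(t_0)$ and $T^{\ast}=A_{n,s}/M$. On the overlap $[t_0-T^{\ast},t_0+T^{\ast}]\cap(-T_{\mathrm{min}},T_{\mathrm{max}})$ the two $\mathcal{H}^{s}$-valued solutions $\Gamma$ and $\widetilde\Gamma$ coincide at $t_0$, and the stability estimate \eqref{eq:SpacetimeEstimate-Stability} (applied with $\Gamma_0=\Gamma_0'=\Gamma(t_0)$ and then iterated via a standard continuation argument from $t_0$) forces $\Gamma\equiv\widetilde\Gamma$ on the full overlap. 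The concatenation then yields a $C(\cdot,\mathcal{H}^{s})$-solution of \eqref{eq:GPHierarchyEquaFunct} on $(-T_{\mathrm{min}},T_{\mathrm{max}})\cup[t_0-T^{\ast},t_0+T^{\ast}]$. Maximality of $T_{\mathrm{max}}$ thus requires $t_0+T^{\ast}\le T_{\mathrm{max}}$, which is precisely
\[
\|\Gamma(t_0)\|_{\mathcal{H}^{s}}\;\ge\;\frac{A_{n,s}}{T_{\mathrm{max}}-t_0},
\]
i.e.\ \eqref{eq:Belowuprate-alpha>n/2}. The analogous statement at $T_{\mathrm{min}}$ is handled identically by taking $t_0<0$.

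The step that deserves care is the uniqueness/gluing, since Theorem \ref{th:LocalWellposued-alpha>n/2}(2) is stated as a stability estimate on a subinterval whose length depends on $\|\Gamma-\widetilde\Gamma\|_{C(I_0;\mathcal{H}^{s})}$. To conclude $\Gamma\equiv\widetilde\Gamma$ on the entire overlap rather than only in a small neighbourhood of $t_0$, I would show that the set $\{t:\Gamma(t)=\widetilde\Gamma(t)\}$ is both closed in the overlap (by continuity in $\mathcal{H}^{s}$) and open (by reapplying \eqref{eq:SpacetimeEstimate-Stability} after time-translating to each point of agreement). This is the only nontrivial technical point; beyond it, the proof relies solely on the local theory supplied by Theorem \ref{th:LocalWellposued-alpha>n/2} and the time-translation invariance of \eqref{eq:GPHierarchyEquaFunct}.
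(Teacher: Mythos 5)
Your proposal is correct and follows essentially the same route as the paper: restart the Cauchy problem at time $t_0$ via time translation, invoke Theorem \ref{th:LocalWellposued-alpha>n/2}(1) to extend the solution for a further time $A_{n,s}/\|\Gamma(t_0)\|_{\mathcal{H}^{s}}$, and use maximality of $T_{\mathrm{max}}$ to deduce \eqref{eq:Belowuprate-alpha>n/2}. The only difference is that you make explicit the uniqueness/gluing step needed to identify the restarted solution with $\Gamma$ on the overlap, which the paper leaves implicit (deferring to the classical maximal-solution argument of Remark \ref{rk:thalpha>n/2}); your open--closed argument handles it correctly.
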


\begin{proof}
Suppose $T_{\mathrm{max}} < \8.$ We only need to prove \eqref{eq:Belowuprate-alpha>n/2}. For any $0< t < T_{\mathrm{max}},$ let
\be
\mathcal{R}_{t}\gamma^{(k)}(\tau,{\bf x}_k;{\bf x}_k'):=\gamma^{(k)} \Big (t + \tau, {\bf x}_k;{\bf x}_k'\Big), \quad \tau\in \mathbb{R}.
\ee
We note that if $\Gamma(t)=\{\gamma^{(k)}(t)\}_{k\geq 1}$ solves the GP hierarchy \eqref{eq:GPHierarchyEquaFunct}, then $\mathcal{R}_{t}\Gamma(\tau) = ( \mathcal{R}_{t}\gamma^{(k)}(\tau) )_{k\geq 1}$ is also a solution of the GP hierarchy \eqref{eq:GPHierarchyEquaFunct}.

To prove \eqref{eq:Belowuprate-alpha>n/2}, we fix $t \in (0, T_{\mathrm{max}}).$ Clearly,
\begin{equation}\label{eq:S}
\| \mathcal{R}_{t}\gamma^{(k)}(\tau)\|_{\mathrm{H}^{s}_k} =\|\gamma^{(k)} (t+\tau)\|_{\mathrm{H}^{s}_k},
\end{equation}
which implies that
\begin{equation}\label{eq:Gamma}
\|\mathcal{R}_{t}\Gamma(\tau)\|_{{\mathcal H}^{s}} = \|\Gamma(t+\tau)\|_{{\mathcal H}^{s}}.
\end{equation}
Let $\tau=0$ in \eqref{eq:Gamma}, we obtain
\begin{equation}\label{eq:Gamma0}
\|\mathcal{R}_{t}\Gamma(0)\|_{{\mathcal H}^{s}} = \|\Gamma(t)\|_{{\mathcal H}^{s}} <\infty.
\end{equation}
Then, Theorem \ref{th:LocalWellposued-alpha>n/2} implies that there exists
$\tau^*= A_{n,s}/\|\mathcal{R}_{t}\Gamma(0)\|_{\mathcal{H}^{s}}$
such that
\begin{equation}\label{eq:RGamma}
\|\mathcal{R}_{t}\Gamma(\tau)\|_{{\mathcal H}^{s}}\leq
2\|\mathcal{R}_{t}\Gamma(0)\|_{\mathcal{H}^{s}}
\end{equation}
for any $0\leq \tau\leq \tau^*$. Therefore, combining \eqref{eq:Gamma}
 with \eqref{eq:RGamma} we get
\be\begin{split}
\|\Gamma(t+\tau^*)\|_{{\mathcal H}^{s}} =\|\mathcal{R}_{t}\Gamma(\tau^*)\|_{{\mathcal H}^{s}} \leq 2\|\mathcal{R}_{t}\Gamma(0)\|_{\mathcal{H}^{s}} = 2\|\Gamma(t)\|_{\mathcal{H}^{s}} <\infty.
\end{split}\ee
Hence, the upper maximal-lifespan time $T_{\mathrm{max}}$ is bounded from below by
\be
T_{\mathrm{max}} > t+\tau^*,
\ee
and thus
\be\begin{split}
T_{\mathrm{max}} - t & > \tau^* = \frac{A_{n,s}}{\|\mathcal{R}_{t}\Gamma(0)\|_{\mathcal{H}^{s}}} = \frac{A_{n,s}}{\|\Gamma(t)\|_{\mathcal{H}^{s}}}.
\end{split}
\ee
Consequently, we have
\be
\|\Gamma(t)\|_{\mathcal{H}^{s}} \geq \frac{A_{n,s}}{T_{\mathrm{max}} - t},\quad \forall 0< t < T_{\mathrm{max}},
\ee
as required.

If $T_{\mathrm{min}} < \8,$ we can proceed the same argument and omit the details.
\end{proof}

\begin{remark}\label{rk:BlowupLowerbound-alpha>n/2}\rm
Note that in the factorized case, $\| \Gamma \|_{\mathcal{H}^{s}} = 2 \| \varphi \|^2_{\mathrm{H}^{s}}$ if $\Gamma = (| \varphi \rangle \langle \varphi |^{\otimes^k} )_{k \ge 1}.$ Then the lower bound \eqref{eq:Belowuprate-alpha>n/2} on the blow-up rate coincides with the known one for solutions to the GP equation \eqref{eq:GPEqua} in the cases $n=1,2$ ( see e.g. \cite{C2003}).
\end{remark}

As shown in Theorem \ref{th:LocalWellposued-alpha>(n-1)/2}, we require an additional assumption that $\hat{B} \Gamma (t) \in L^1_{t \in [0,T]} \mathcal{H}^{s}$ for the uniqueness part in the case $s \le n/2.$ Then the lifespan $I$ of a solution $\Gamma (t)$ should be such that
\be
\|\Gamma(t)\|_{C(I,\mathcal{H}^{s})}+\|\hat{B}\Gamma(t)\|_{L_{t\in I}^1\mathcal{H}^{s}}<\infty.
\ee
Accordingly, we define the corresponding upper maximal-lifespan time $T_{\mathrm{max}}$ by
\be
T_{\max} = \sup \big \{ T>0: \|\Gamma(t)\|_{C([0, T],\mathcal{H}^{s})}+\|\hat{B}\Gamma(t)\|_{L_{t\in [0,T]}^1\mathcal{H}^{s}}<\infty \big \}.
\ee
Similarly, we can define $T_{\mathrm{min}}$ as well.

\begin{theorem}\label{th:BlowupRate-alpha>(n-1)/2}
Assume that $n\geq 2$ and $s > \frac{n-1}{2}.$ If $\Gamma(t)$ is a solution of the Gross-Pitaevskii hierarchy \eqref{eq:GPHierarchyEquaFunct} with initial condition $\Gamma(0)\in \mathcal{H}^{s}$ such that $T_{\mathrm{max}} < \8,$ then
$\lim_{t \nearrow T_{\mathrm{max}}} \| \Gamma (t) \|_{\mathcal{H}^{s}} = \8,$ and there exists a constant $B_{n,s}>0$ depending only on $n$ and $s$ such that the following lower bound on the blowup rate holds
\beq\label{eq:Belowuprate-alpha>(n-1)/2}\begin{split}
\|\Gamma(t)\|_{\mathcal{H}^{s}} & \geq \frac{B_{n,s}}{ ( T_{\mathrm{max}}-t )^{\frac{1}{2}}},\quad \forall 0<t<T_{\mathrm{max}}.
\end{split}\eeq
In particular, the above inequality holds for the $\mathcal{H}^1$-norm in the case $n=3.$

The similar results hold for $T_{\min}.$
\end{theorem}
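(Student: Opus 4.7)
The plan is to mimic the time-translation argument used in the proof of Theorem \ref{th:BlowupRate-alpha>n/2}, but substituting Theorem \ref{th:LocalWellposued-alpha>(n-1)/2} for Theorem \ref{th:LocalWellposued-alpha>n/2}. The exponent $1/2$ in \eqref{eq:Belowuprate-alpha>(n-1)/2} emerges because the local lifespan in Theorem \ref{th:LocalWellposued-alpha>(n-1)/2}(1) is $B_{n,s}/\|\Gamma_0\|_{\mathcal{H}^s}^2$, quadratic in the quasi-norm, rather than linear as in Theorem \ref{th:LocalWellposued-alpha>n/2}(1). Once the lower bound on the lifespan is established, $\lim_{t \nearrow T_{\mathrm{max}}}\|\Gamma(t)\|_{\mathcal{H}^s} = \infty$ is an immediate consequence.

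Fix $t \in (0, T_{\mathrm{max}})$ and consider the translated hierarchy $\mathcal{R}_t \Gamma(\tau) := \Gamma(t+\tau)$, which solves \eqref{eq:GPHierarchyEquaFunct} with initial datum $\mathcal{R}_t\Gamma(0) = \Gamma(t) \in \mathcal{H}^s$ (finite by definition of $T_{\mathrm{max}}$). Apply Theorem \ref{th:LocalWellposued-alpha>(n-1)/2}(1) to this initial datum: there exists a solution $\Phi(\tau) \in C([-\tau^*, \tau^*], \mathcal{H}^s)$ with $\hat{B}\Phi \in L^1_{\tau \in [-\tau^*, \tau^*]}\mathcal{H}^s$, where
\beqn
\tau^* = \frac{B_{n,s}}{\|\Gamma(t)\|_{\mathcal{H}^s}^2},
\eeqn
and moreover $\|\hat{B}\Phi\|_{L^1_{\tau \in [-\tau^*, \tau^*]}\mathcal{H}^s} \le 4A\|\Gamma(t)\|_{\mathcal{H}^s} < \infty$. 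By the uniqueness part, Theorem \ref{th:LocalWellposued-alpha>(n-1)/2}(3), applied on a sufficiently small subinterval and then iterated (or directly, since both $\Phi(\tau)$ and $\mathcal{R}_t\Gamma(\tau)$ solve the hierarchy with the same initial datum and have finite $\hat{B}$-norm on the relevant interval), we conclude $\Phi(\tau) = \mathcal{R}_t\Gamma(\tau) = \Gamma(t+\tau)$ for $\tau \in [0, \tau^*]$.

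Consequently, on the interval $[0, t+\tau^*]$ the solution $\Gamma$ continues to satisfy
\beqn
\|\Gamma\|_{C([0, t+\tau^*], \mathcal{H}^s)} + \|\hat{B}\Gamma\|_{L^1_{\sigma \in [0, t+\tau^*]}\mathcal{H}^s} < \infty,
\eeqn
where the $L^1$ norm is estimated by splitting $[0, t+\tau^*] = [0,t] \cup [t, t+\tau^*]$ and using property (4) of Remark \ref{rk:normL}. Hence $T_{\mathrm{max}} \ge t + \tau^*$, which rearranges to
\beqn
\|\Gamma(t)\|_{\mathcal{H}^s}^2 \ge \frac{B_{n,s}}{T_{\mathrm{max}}-t},
\eeqn
yielding \eqref{eq:Belowuprate-alpha>(n-1)/2}. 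The case of $T_{\mathrm{min}} < \infty$ proceeds by the same argument with $\tau$ running in the negative direction.

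The main obstacle to watch for is verifying that the time-translated local solution produced by Theorem \ref{th:LocalWellposued-alpha>(n-1)/2}(1) coincides with the given $\Gamma$ on the overlap, since uniqueness in Theorem \ref{th:LocalWellposued-alpha>(n-1)/2}(3) requires the a priori $\hat{B}$-bound on both candidate solutions. This is not automatic: on $[0, t]$ the solution $\Gamma$ has $\hat{B}\Gamma \in L^1 \mathcal{H}^s$ by assumption, while on $[t, t+\tau^*]$ the Duhamel-constructed extension has $\hat{B}\Phi \in L^1 \mathcal{H}^s$ by \eqref{eq:SpacetimeEstimate}, so concatenating via Remark \ref{rk:normL}(4) preserves the condition and makes the lifespan of the glued solution larger than $T_{\mathrm{max}}$, forcing $t+\tau^* \le T_{\mathrm{max}}$. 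This concatenation step is the only nontrivial ingredient beyond directly invoking the local theory.
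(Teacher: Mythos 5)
Your proposal is correct and follows essentially the same route as the paper: translate time by $t$, invoke Theorem \ref{th:LocalWellposued-alpha>(n-1)/2} to extend the solution by $\tau^* = B_{n,s}/\|\Gamma(t)\|_{\mathcal{H}^s}^2$, concatenate the $C$- and $L^1\mathcal{H}^s$-norms over $[0,t]$ and $[t,t+\tau^*]$ via Remark \ref{rk:normL}(4), and rearrange $T_{\mathrm{max}}-t\ge\tau^*$. Your explicit attention to identifying the Duhamel-constructed extension with $\mathcal{R}_t\Gamma$ via the uniqueness part is a point the paper glosses over, but the substance of the argument is the same.
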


\begin{proof}
It suffices to prove \eqref{eq:Belowuprate-alpha>(n-1)/2}. To this end, we let $T_{\mathrm{max}}< \8$ and fix $T \in (0, T_{\mathrm{max}})$ such that
\be
\|\Gamma(t)\|_{C([0,T],\mathcal{H}^{s})}+\|\hat{B}\Gamma(t)\|_{L_{t\in [0,T]}^1\mathcal{H}^{s}}<\infty.
\ee
Clearly, \eqref{eq:S} implies that for any $\tau >0$
\be
\|\hat{B}\Gamma(t)\|_{L_{t \in[T,T+\tau ]}^1{\mathcal H}^{s}}
= \|\mathcal{R}_{T}\hat{B}\Gamma(\tau)\|_{L_{\tau\in[0,\tau ]}^1{\mathcal H}^{s}}.
\ee
It follows from \eqref{eq:Gamma0} that
\be
\|\mathcal{R}_{T}\Gamma(0)\|_{{\mathcal H}^{s}}=
\|\Gamma(T)\|_{{\mathcal H}^{s}}<\infty.
\ee
Then, Theorem \ref{th:LocalWellposued-alpha>(n-1)/2} implies that there exists
$\tau^*=\frac{B_{n,s}}{\|\mathcal{R}_{T}\Gamma(0)\|^2_{\mathcal{H}^{s}}}$
such that
\begin{equation}\label{eq:BgammaL}
\|\hat{B}\mathcal{R}_{T}\Gamma(\tau)\|_{L_{\tau\in[0,\tau^*]}^1{\mathcal H}^{s}}\leq
4A\|\mathcal{R}_{T}\Gamma(0)\|_{\mathcal{H}^{s}}.
\end{equation}
Since
\begin{equation*}
\|\mathcal{R}_{T}\Gamma(\tau)\|_{C([0,\tau^*], \mathcal{H}^{s})}
\leq \| \mathcal{R}_{T}\Gamma(0)\|_{\mathcal{H}^{s}}+\|\hat{B}\mathcal{R}_{T}\Gamma(\tau)\|_{L_{\tau\in[0,\tau^*]}^1{\mathcal H}^{s}},
\end{equation*}
it follows from \eqref{eq:BgammaL} that
\begin{equation}\label{eq:CRT}
\|\mathcal{R}_{T}\Gamma(\tau)\|_{C([0,\tau^*]),{\mathcal H}^{s}}
\leq
(1+4A)\|\mathcal{R}_{T}\Gamma(0)\|_{\mathcal{H}^{s}}.
\end{equation}
Note that
\be
\|\hat{B}\Gamma(t)\|_{L_{t\in[0,T+\tau^*]}^1{\mathcal
H}^{s}}
=\|\hat{B}\Gamma(t)\|_{L_{t\in[0,T]}^1{\mathcal
H}^{s}}+\|\hat{B}\Gamma(t)\|_{L_{t\in[T,T+\tau^*]}^1{\mathcal
H}^{s}},
\ee
by \eqref{eq:BgammaL} we get
\begin{equation}\label{eq:BGamma}
\|\hat{B}\Gamma(t)\|_{L_{t\in[0,T+\tau^*]}^1{\mathcal
H}^{s}}
= \|\hat{B}\Gamma(t)\|_{L_{t\in[0,T]}^1{\mathcal
H}^{s}}+\|\mathcal{R}_{T}\hat{B}\Gamma(\tau)\|_{L_{\tau\in[0,\tau^*]}^1{\mathcal
H}^{s}}
\end{equation}
By \eqref{eq:Gamma} again, we obtain
\begin{equation}\label{eq:C0TGamma}
\|\Gamma(t)\|_{C([0,T+\tau^*],{\mathcal
H}^{s})}
\leq \|\Gamma(t)\|_{C([0,T],{\mathcal
H}^{s})}+\|\mathcal{R}_{T}\Gamma(\tau)\|_{C([0,\tau^*],{\mathcal
H}^{s})}
\end{equation}
Also, by the definitions of the operators $\hat{B}$ and $\mathcal{R}_T$ we have
\begin{equation}\label{eq:RBBR}
\mathcal{R}_{T}\hat{B}\Gamma(\tau)
=\hat{B}\mathcal{R}_{T}\Gamma(\tau).
\end{equation}
Then, it follows from \eqref{eq:BgammaL}, \eqref{eq:BGamma} and \eqref{eq:RBBR} that
\begin{equation}\label{eq:BGammaL0Ttau^*}
\begin{split}
\|\hat{B} \Gamma(t) \|_{L_{t\in[0,T+\tau^*]}^1{\mathcal H}^{s}}
&\leq \|\hat{B}\Gamma(t)\|_{L_{t\in[0,T]}^1{\mathcal H}^{s}}+4A\|\mathcal{R}_{T}\Gamma(0)\|_{\mathcal{H}^{s}}\\
& = \|\hat{B}\Gamma(t)\|_{L_{t\in[0,T]}^1{\mathcal H}^{s}}+4A\|\Gamma(T)\|_{{\mathcal H}^{s}}\\
&\leq 4A \Big ( \|\hat{B}\Gamma(t)\|_{L_{t\in[0,T]}^1{\mathcal H}^{s}}+\|\Gamma(t)\|_{C([0,T],{\mathcal H}^{s})} \Big ).
\end{split}
\end{equation}
On the other hand, it follows from \eqref{eq:CRT} and \eqref{eq:C0TGamma} that
\begin{equation}\label{eq:BGammaC0Ttau^*}
\begin{split}
\|\Gamma(t) \|_{C([0,T+\tau^*],{\mathcal H}^{s})}
& \leq \|\Gamma(t)\|_{C([0,T],{\mathcal
H}^{s})}+(1+4A)\|\Gamma(T)\|_{{\mathcal H}^{s}}\\
& \leq(2+4A) \|\Gamma(t)\|_{C([0,T],{\mathcal
H}^{s})}.
\end{split}
\end{equation}
Thus, from \eqref{eq:BGammaL0Ttau^*} and \eqref{eq:BGammaC0Ttau^*} we get
\begin{equation*}
\begin{split}
\|\Gamma(t) & \|_{C([0,T+\tau^*],{\mathcal
H}^{s})}+ \|\hat{B}\Gamma(t)\|_{L_{t\in[0,T+\tau^*]}^1{\mathcal
H}^{s}}\\
&\leq (2+8A) \Big ( \|\Gamma(t)\|_{C([0,T],{\mathcal H}^{s})}+\|\hat{B}\Gamma(t)\|_{L_{t\in[0,T]}^1{\mathcal
H}^{s}} \Big ) <\infty.
\end{split}
\end{equation*}
Hence, the upper maximal lifespan time $T_{\mathrm{max}}$ is bounded from below by
\begin{equation*}
T_{\mathrm{max}} > T + \tau^*
\end{equation*}
and thus
\begin{equation*}
\begin{split}
T_{\mathrm{max}}-T >\tau^* &=\frac{B_{n,s}}{\|\mathcal{R}_{T}\Gamma(0)\|^2_{\mathcal{H}^{s}}} = \frac{B_{n,s}}{\|\Gamma(T)\|^2_{\mathcal{H}^{s}}}.
\end{split}
\end{equation*}
Consequently, we have
\be\begin{split}
\|\Gamma(T)\|_{\mathcal{H}^{s}} \geq \frac{B'_{n,s}}{( T_{\max} -T )^{\frac{1}{2}}},
\end{split}\ee
with $B'_{n,s} = \sqrt{B_{n,s}},$ as required.

The proof for the case of the lower maximal-lifespan time $T_{\mathrm{min}}$ is similar and omitted.
\end{proof}

\begin{remark}\label{rk:BlopupLowerbound-alpha>(n-1)/2}\rm
In the factorized case, the lower bound \eqref{eq:Belowuprate-alpha>(n-1)/2} on the blow-up rate coincides with the
one for solutions to the GP equation \eqref{eq:GPEqua} in the case $n=3$ ( see e.g. \cite{C2003, CW1990}).
\end{remark}

\section{Conservation of energy and Virial identities}\label{EnergyVirialIdentity}

In this section, we will mostly work in Fourier (momentum) space. Following \cite{ESY2007a}, we use the convention that variables $p,q,r, p', q', r'$
always refer to $n$ dimensional Fourier variables, while $x, x', y, y', z, z'$ denote the position space variables. With this convention, the usual hat indicating the Fourier transform will be omitted. For example, for $k \geq 1$ the kernel of a bounded operator $A$ on $L^2 ( \mathbb{R}^{k n} )$ in position space is $K (
\mathbf{x}_k ; \mathbf{x}'_k),$ then in the momentum space it is given by the Fourier transform
\be
K ( \mathbf{q}_k ; \mathbf{q}'_k) = \big \langle K, e^{ - \mathrm{i} \langle \cdot, \mathbf{q}_k \rangle}e^{ \mathrm{i} \langle \cdot, \mathbf{q}'_k \rangle} \big \rangle = \int d \mathbf{x}_k d \mathbf{x}'_k K ( \mathbf{x}_k ; \mathbf{x}'_k ) e^{- \mathrm{i} \langle \mathbf{x}_k, \mathbf{q}_k \rangle}e^{ \mathrm{i} \langle \mathbf{x}'_k, \mathbf{q}'_k \rangle},
\ee
with the slight abuse of notation of omitting the hat on left hand side. Here,
\be
\langle \mathbf{x}_k, \mathbf{q}_k \rangle = \sum^k_{j=1} x_j \cdot q_j, \quad \forall \mathbf{x}_k = (x_1, \ldots , x_k), \mathbf{q}_k = (q_1, \ldots , q_k) \in \mathbb{R}^{k n}.
\ee
Thus, on kernels in the momentum space $B^{(k)}$ acts according to
\beq\label{eq:BOperatorFunctMomentum}
\begin{split}
\big [ B^{(k)}& \gamma^{(k+1)} \big ] ({\bf p}_k;{\bf p}'_k)\\
= & \sum^k_{j=1} \int d q_{k+1} d q'_{k+1}\\
&\;\; \times \Big \{ \gamma^{(k+1)}(p_1,\dotsc, p_j-q_{k+1}+q'_{k+1}, \dotsc, p_k,q_{k+1}; {\mathtt p}'_k, q'_{k+1})\\
& \quad - \gamma^{(k+1)} ({\bf p}_k, q_{k+1}; p'_1, \dotsc, p'_j + q_{k+1} - q'_{k+1}, \dotsc, p'_k, q'_{k+1} ) \Big \}\\
= & \sum^k_{j=1} \int d {\bf q}_{k+1} d {\bf q}'_{k+1} \Big [ \prod^k_{l \neq j} \delta(p_l - q_l) \delta (p'_l-q'_l) \Big ]\\
& \;\;\times \gamma^{(k+1)} ( {\bf q}_{k+1}; {\bf q}'_{k+1}) \Big \{ \delta(p'_j-q'_j) \delta \big ( p_j- [q_j+q_{k+1}-q'_{k+1} ] \big )\\
& \quad - \delta (p_j - q_j) \delta \big ( p'_j - [ q'_j+q'_{k+1}-q_{k+1} ] \big ) \Big \}.
\end{split}
\eeq

Following \cite{CPT2010}, we introduce
\beq\label{eq:energyformula1}
E_k(\Gamma ):= \sum_{j=1}^k \int d \mathbf{x}_k d \mathbf{x}'_k \delta (\mathbf{x}_k - \mathbf{x}'_k) \Big ( \frac{1}{2} \nabla_{x_j} \cdot \nabla_{x_j'}\gamma^{(k)} + \frac{\mu}{4} B^{(k)}_{j, +}\gamma^{(k+1)} \Big ).
\eeq
Note that
\beq\label{eq:identity}
\begin{split}
\int & dx dx' \delta(x-x')\nabla_x\cdot \nabla_x'A(x,x')\\
& = -\int dx dx'\delta(x-x')\Delta_xA(x,x')\\
& = -\int dx dx'\delta(x-x')\Delta_{x'}A(x,x')\\
\end{split}
\eeq
which was previously proved in \cite{CPT2010}. Then
\beq\label{eq:energyformula2}
E_k(\Gamma ):= \sum_{j=1}^k \int d \mathbf{x}_k d \mathbf{x}'_k \delta (\mathbf{x}_k - \mathbf{x}'_k) \Big ( - \frac{1}{2} \triangle_{x_j} \gamma^{(k)} + \frac{\mu}{4} B^{(k)}_{j, +}\gamma^{(k+1)} \Big ).
\eeq
Further, we note that each of the terms in the sum equals to the one obtained for $j=1,$ by the symmetry of $\gamma^{(k)} (\mathbf{x}_k ; \mathbf{x}'_k)$ and $\gamma^{(k+1)} (\mathbf{x}_{k+1} ; \mathbf{x}'_{k+1}).$ Indeed, since
\be
\nabla_{x_j} \cdot \nabla_{x_j'}\gamma^{(k)} (\mathbf{x}_k ; \mathbf{x}'_k) = \sum^n_{\el =1} \frac{\partial^2}{\partial x^{\el}_j \partial x'^{\el}_j} \gamma^{(k)} (\mathbf{x}_k ; \mathbf{x}'_k),
\ee
we have that\be\begin{split}
\int d \mathbf{x}_k d \mathbf{x}'_k \delta (\mathbf{x}_k - \mathbf{x}'_k ) \nabla_{x_j} \cdot \nabla_{x_j'}\gamma^{(k)} (\mathbf{x}_k ; \mathbf{x}'_k) = \int d \mathbf{x}_k \sum^n_{\el =1} \frac{\partial^2}{\partial x^{\el}_j \partial x^{\el}_j} \gamma^{(k)} (\mathbf{x}_k ; \mathbf{x}_k).
\end{split}\ee
Then, by the symmetry of $\gamma^{(k)} (\mathbf{x}_k ; \mathbf{x}'_k)$ with respect to the components of $\mathbf{x}_k$ and $\mathbf{x}'_k$ one has
\be
\int d \mathbf{x}_k d \mathbf{x}'_k \delta (\mathbf{x}_k - \mathbf{x}'_k ) \nabla_{x_j} \cdot \nabla_{x_j'}\gamma^{(k)} = \int d \mathbf{x}_k d \mathbf{x}'_k \delta (\mathbf{x}_k - \mathbf{x}'_k ) \nabla_{x_i} \cdot \nabla_{x_i'}\gamma^{(k)}
\ee
for all $1 \le i, j \le k.$ We note that the calculation for the interaction term was presented in \cite{CPT2010}. Thus,
\beq\label{eq:k-energyformula2}
E_k(\Gamma ) = k \int d \mathbf{x}_k d \mathbf{x}'_k \delta (\mathbf{x}_k - \mathbf{x}'_k) \Big ( \frac{1}{2} \nabla_{x_1} \cdot \nabla_{x_1'}\gamma^{(k)} + \frac{\mu}{4} B^{(k)}_{1, +} \gamma^{(k+1)} \Big ).
\eeq

\begin{remark}\label{rk:k-energyGPE}\rm
For factorized states $\Gamma = ( | \varphi\rangle \langle \varphi |^{\otimes^k} )_{k \ge 1}$ with $\varphi \in \mathrm{H}^1 (\mathbb{R}^n)$ one finds that
\be
E_k (\Gamma) = k \| \varphi \|^{2 (k-1)}_{L^2} \left ( \frac{1}{2} \| \nabla \varphi \|^2_{L^2}  + \frac{\mu}{4} \| \varphi \|^4_{L^4} \right ).
\ee
In this case, $E_1 (\Gamma)$ is the usual expression of the conserved energy for solutions of \eqref{eq:GPEqua}.
\end{remark}

In what follows, we prove energy conservation of $k$ particles for solutions $\Gamma (t)$ to the GP hierarchy \eqref{eq:GPHierarchyEquaFunct} for any $k.$

\begin{theorem}\label{th:k-energyconservation}
Assume that $\Gamma(t)$ is a solution of the Gross-Pitaevskii hierarchy \eqref{eq:GPHierarchyEquaFunct} with initial condition $\Gamma(0)\in \mathcal{H}^{s}.$ Then $E_k (\Gamma (t))$ is a conserved quantity, i.e.,
\beq\label{eq:k-energyconversation}
E_k (\Gamma (t)) = E_k (\Gamma (0)), \quad \forall t \in (-T_{\mathrm{min}}, T_{\mathrm{max}}).
\eeq
\end{theorem}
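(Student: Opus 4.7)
The plan is to prove $\frac{d}{dt}E_k(\Gamma(t))\equiv 0$ throughout the maximal lifespan; combined with continuity this will yield \eqref{eq:k-energyconversation}. First I use the regularity $\Gamma(t)\in C(I,\mathcal{H}^{s})$ to justify passing the time derivative inside the integrals defining $E_k$. Substituting the GP hierarchy equation $\partial_t \gamma^{(\ell)}_t = \mathrm{i}\triangle^{(\ell)}\gamma^{(\ell)}_t - \mathrm{i}\mu B^{(\ell)}\gamma^{(\ell+1)}_t$ at $\ell=k$ in the kinetic piece of $E_k$ and at $\ell=k+1$ in the interaction piece splits $\frac{d}{dt}E_k(\Gamma(t))$ into four contributions: \textbf{(I)} $\frac{\mathrm{i}}{2}\sum_j\!\int\!\delta(\mathbf{x}_k-\mathbf{x}'_k)\nabla_{x_j}\!\cdot\!\nabla_{x'_j}\triangle^{(k)}\gamma^{(k)}_t$, \textbf{(II)} $-\frac{\mathrm{i}\mu}{2}\sum_j\!\int\!\delta\,\nabla_{x_j}\!\cdot\!\nabla_{x'_j}B^{(k)}\gamma^{(k+1)}_t$, \textbf{(III)} $\frac{\mathrm{i}\mu}{4}\sum_j\!\int\!\delta\,B^{(k)}_{j,+}\triangle^{(k+1)}\gamma^{(k+1)}_t$, and \textbf{(IV)} $-\frac{\mathrm{i}\mu^2}{4}\sum_j\!\int\!\delta\,B^{(k)}_{j,+}B^{(k+1)}\gamma^{(k+2)}_t$.

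Terms (I) and (IV) are straightforward. For (I), applying identity \eqref{eq:identity} to the smooth symmetric kernel $\nabla_{x_j}\!\cdot\!\nabla_{x'_j}\gamma^{(k)}_t$ gives $\int\!\delta(\Delta_{x_i}-\Delta_{x'_i})(\nabla_{x_j}\!\cdot\!\nabla_{x'_j}\gamma^{(k)}_t)=0$ for each $i$, summing over $i$ to kill $\triangle^{(k)}$. For (IV), the operator $B^{(k)}_{j,+}$ pins $\mathbf{x}_{k+1}=(\mathbf{x}_k,x_j)$ and $\mathbf{x}'_{k+1}=(\mathbf{x}'_k,x_j)$, after which the outer $\delta(\mathbf{x}_k-\mathbf{x}'_k)$ forces $\mathbf{x}_{k+1}=\mathbf{x}'_{k+1}$. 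On this diagonal $B^{(k+1)}_{i,+}\gamma^{(k+2)}_t$ and $B^{(k+1)}_{i,-}\gamma^{(k+2)}_t$ agree for every $i$ (both evaluate $\gamma^{(k+2)}_t$ at $(\mathbf{x}_{k+1},x_i;\mathbf{x}_{k+1},x_i)$), so $B^{(k+1)}\gamma^{(k+2)}_t$ vanishes identically and (IV) $=0$.

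The heart of the argument is the cancellation (II) $+$ (III) $=0$, which is the hierarchy analogue of the Hamiltonian identity $\mathrm{Re}\!\int\!|u|^2u\overline{\Delta u}+\mathrm{Re}\!\int\!|u|^2\bar u\Delta u=0$ underlying NLS energy conservation. The strategy is: first rewrite the $\nabla_{x_j}\!\cdot\!\nabla_{x'_j}$ in (II) as $-\Delta_{x_j}$ via \eqref{eq:identity}; then expand $B^{(k)}\gamma^{(k+1)}_t=\sum_l(B^{(k)}_{l,+}-B^{(k)}_{l,-})\gamma^{(k+1)}_t$ in (II) and $\triangle^{(k+1)}=\sum_i(\Delta_{y_i}-\Delta_{y'_i})$ in (III); and apply the Leibniz rule to $\Delta_{y_l}[\gamma^{(k+1)}_t(\mathbf{x}_k,x_j;\mathbf{x}'_k,x_j)]$. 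Contributions with $l\notin\{j,k+1\}$ match (II) term-by-term after integration by parts against $\delta(\mathbf{x}_k-\mathbf{x}'_k)$, while the $l\in\{j,k+1\}$ contributions expand by the chain rule into $\Delta_{y_j}+2\nabla_{y_j}\!\cdot\!\nabla_{y_{k+1}}+\Delta_{y_{k+1}}$ pieces that, via the permutation symmetry of $\gamma^{(k+1)}_t$ in its $j$-th and $(k+1)$-th arguments, supply exactly the factor of two needed to reconcile the coefficient $\tfrac{1}{2}$ in (II) with $\tfrac{1}{4}$ in (III).

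The main obstacle will be the bookkeeping in this last step: one must systematically align the $\pm$ sign pattern of $B^{(k)}_{l,\pm}$ in (II) with the $\Delta_{y_i}-\Delta_{y'_i}$ split of $\triangle^{(k+1)}$ in (III), and verify that the chain-rule surplus from the coincident-variable Laplacians is absorbed exactly by the $1/2$ vs $1/4$ mismatch. A secondary technical point is the justification of differentiation under the integral and the integrations by parts, which is ensured by $\Gamma(t)\in\mathcal{H}^{s}$ for $s$ large enough; in the borderline regularity range one approximates by smoother initial data and passes to the limit using the continuous dependence from Theorems \ref{th:LocalWellposued-alpha>n/2}--\ref{th:LocalWellposued-alpha>(n-1)/2}.
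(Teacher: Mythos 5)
Your overall skeleton is the same as the paper's: differentiate $E_k$ in time, substitute the hierarchy, and split into the four contributions $(I)$--$(IV)$, proving $(I)=(IV)=0$ and $(II)+(III)=0$. Your treatment of $(I)$ (via the corollary $\int\delta\,(\Delta_{x_i}-\Delta_{x'_i})A=0$ of \eqref{eq:identity}, where the paper instead computes in Fourier variables) and of $(IV)$ (the $\pm$ parts of $B^{(k+1)}$ coincide once $B^{(k)}_{j,+}$ and the outer delta force the full diagonal) are correct and essentially identical to the paper's.

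The gap is in the one step that carries all the content, $(II)+(III)=0$, which you leave as an acknowledged ``bookkeeping'' obstacle, and the mechanism you sketch for it is not right as written. The kernel $\big(B^{(k)}_{j,+}\gamma^{(k+1)}\big)(\mathbf{x}_k;\mathbf{x}'_k)=\gamma^{(k+1)}(\mathbf{x}_k,x_j;\mathbf{x}'_k,x_j)$ carries the variable $x_j$ in \emph{three} argument slots of $\gamma^{(k+1)}$ (forward slot $j$, forward slot $k+1$, and backward slot $k+1$), so $\Delta_{x_j}$ applied to the contracted object expands as $(\nabla_{y_j}+\nabla_{y_{k+1}}+\nabla_{y'_{k+1}})^2$, not as your two-slot expression $\Delta_{y_j}+2\nabla_{y_j}\cdot\nabla_{y_{k+1}}+\Delta_{y_{k+1}}$; the omitted cross terms involving $\nabla_{y'_{k+1}}$ are precisely the ones that must survive to meet $(III)$, so the accounting as proposed does not close. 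The paper sidesteps the chain rule entirely by a small but essential device: in the surviving $l=1$ piece of $(II)$ it applies \eqref{eq:identity} so as to convert $\nabla_{x_1}\cdot\nabla_{x'_1}$ into $-\Delta_{x'_1}$ on the $B^{(k)}_{1,+}$ term (where $x'_1$ occupies a single slot) and into $-\Delta_{x_1}$ on the $B^{(k)}_{1,-}$ term (where $x_1$ occupies a single slot), so the Laplacian always lands on a once-occurring variable; meanwhile in $(III)$ the summands $2\le i\le k$ vanish and the $i=1$ and $i=k+1$ summands coincide by the slot symmetry of $\gamma^{(k+1)}$, doubling $-\mu/4$ to $-\mu/2$ and yielding exactly the negative of the reduced $(II)$. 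You correctly identified where the factor of two comes from, but without the single-slot conversion (or a complete and correct chain-rule expansion followed by an integration by parts on the diagonal) the cancellation is not established.
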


\begin{remark}\label{rk:AdmissibleCondition}\rm
This result was previously proved in \cite{CPT2010} under the assumption that $\Gamma (t)$ is admissible in the sense that
\be
\gamma^{(k)}_t (\mathbf{x}_k ; \mathbf{x}'_k ) = \int d x_{k+1} \gamma^{(k+1)}_t (\mathbf{x}_k, x_{k+1} ; \mathbf{x}'_k, x_{k+1} ),\quad \forall k \ge 1.
\ee
In that case, it can be shown that $E_k (\Gamma (t)) = k E_1 (\Gamma (t))$ and thus reduces to show that $E_1 (\Gamma (t))$ is a conserved quantity. On the other hand, we note that for a sequence of ``pure states" $\gamma^{(k)}  (\mathbf{x}_k ; \mathbf{x}'_k ) = \psi (\mathbf{x}_k) \overline{\psi(\mathbf{x}'_k )},$ the admissibility condition implies that $(\gamma^{(k)})_{k \ge 1}$ must be factorized, by Schmidt's decomposition theorem (see e.g. \cite{RS2003}). This indicates that the admissibility requirement seems a little restrictive.
\end{remark}

\begin{proof}
First of all, by \eqref{eq:k-energyformula2} we have
\be
E_k(\Gamma(t) ) = k \int d \mathbf{x}_k d \mathbf{x}'_k \delta (\mathbf{x}_k - \mathbf{x}'_k) \Big ( \frac{1}{2} \nabla_{x_1} \cdot \nabla_{x_1'}\gamma_t^{(k)} + \frac{\mu}{4} B^{(k)}_{1, +}\gamma_t^{(k+1)} \Big ).
\ee
Then, by \eqref{eq:GPHierarchyEquaFunct} we have
\be
\mathrm{i} \partial_t E_k(\Gamma(t))=k[(I)+(II)+(III)+(IV)]
\ee
where
\be\begin{split}
(I): & = -\frac{1}{2}\int d \mathbf{x}_k d \mathbf{x}'_k \delta (\mathbf{x}_k - \mathbf{x}'_k) \nabla_{x_1} \cdot \nabla_{x_1'}\Delta^{(k)}\gamma_t^{(k)},\\
(II): & = \frac{\mu}{2}\int d \mathbf{x}_k d \mathbf{x}'_k \delta (\mathbf{x}_k - \mathbf{x}'_k) \nabla_{x_1} \cdot \nabla_{x_1'}B^{(k)}\gamma_t^{(k+1)},\\
(III): & = -\frac{\mu}{4}\int d \mathbf{x}_k d \mathbf{x}'_k \delta (\mathbf{x}_k - \mathbf{x}'_k) B^{(k)}_{1, +} \Delta^{(k+1)}\gamma_t^{(k+1)},\\
(IV): & = \frac{1}{4}\int d \mathbf{x}_k d \mathbf{x}'_k \delta (\mathbf{x}_k - \mathbf{x}'_k)B^{(k)}_{1, +}B^{(k+1)}\gamma_t^{(k+2)}.
\end{split}\ee
In order to prove $\partial_tE_k(\Gamma(t))=0,$ we shall prove that $(I)=(IV)=0$ and $(II)+(III)=0.$

For term $(I)$, we note that
\be
\begin{split}
-2(I) = & \int d \mathbf{x}_k d \mathbf{x}'_k \delta (\mathbf{x}_k - \mathbf{x}'_k) \nabla_{x_1} \cdot \nabla_{x_1'}\Delta^{(k)}\gamma_t^{(k)}\\
%= & \int d \mathbf{x}_k d \mathbf{x}'_k \delta (\mathbf{x}_k - \mathbf{x}'_k) \Delta^{(k)}\nabla_{x_1} \cdot \nabla_{x_1'}\gamma_t^{(k)}\\
=&\sum_{j=1}^k\int d \mathbf{x}_k d \mathbf{x}'_k \delta (\mathbf{x}_k - \mathbf{x}'_k) (\Delta_{x_j}-\Delta_{x'_j})\nabla_{x_1} \cdot \nabla_{x_1'}\gamma_t^{(k)}\\
=&\sum_{j=1}^k\int \prod_{l\neq j}\delta ({x}_l - {x}'_l) d \mathbf{x}_k d \mathbf{x}'_k \delta ({x}_j - {x}'_j) (\Delta_{x_j}-\Delta_{x'_j}) \nabla_{x_1} \cdot \nabla_{x_1'}\gamma_t^{(k)}\\
=& \sum_{j=1}^k \int \prod_{l\neq j}\delta ({x}_l - {x}'_l) d \mathbf{x}_k d \mathbf{x}'_k \delta ({x}_j - {x}'_j)\\
& \quad \times \int e^{\mathrm{i} (\langle \mathbf{x}_k, \mathbf{p}_k \rangle - \langle \mathbf{x}_k', \mathbf{p}_k' \rangle)} d \mathbf{p} d \mathbf{p}' ( |p_j|^2 - |p_j'|^2 )( p_1 \cdot p_1') \gamma^{(k)} ( \mathbf{p}; \mathbf{p}').
\end{split}
\ee
Since
\be\begin{split}
\int  d x_j d x'_j & d p_j d p'_j \delta ({x}_j - {x}'_j) e^{\mathrm{i} (x_j \cdot p_j -x'_j \cdot p'_j )} ( |p_j|^2 - |p_j'|^2 ) ( p_1 \cdot p_1') \gamma^{(k)} ( \mathbf{p}; \mathbf{p}')\\
=& \int d p_j d p'_j \delta (p_j - p'_j) ( |p_j|^2 - |p_j'|^2 ) ( p_1 \cdot p_1') \gamma^{(k)} ( \mathbf{p}; \mathbf{p}') =0,
\end{split}\ee
for any $j,$ we proves $(I)=0.$

For $(IV)$, we have
\be
\begin{split}
4(IV) = & \int d \mathbf{x}_k d \mathbf{x}'_k \delta (\mathbf{x}_k - \mathbf{x}'_k) B^{(k)}_{1, +} B^{(k+1)}\gamma_t^{(k+2)}\\
=& \sum_{j=1}^{k+1} \int d \mathbf{x}_k d \mathbf{x}'_k \delta (\mathbf{x}_k - \mathbf{x}'_k) B^{(k)}_{1,+} \Big ( \gamma_t^{(k+2)}(\mathbf{x}_{k+1},x_j;\mathbf{x}'_{k+1},x_j) \\
& \quad - \gamma_t^{(k+2)}(\mathbf{x}_{k+1},x'_j;\mathbf{x}'_{k+1},x'_j) \Big )\\
= & \sum_{j=1}^{k}\int d \mathbf{x}_k d \mathbf{x}'_k \delta (\mathbf{x}_k - \mathbf{x}'_k) \Big ( \gamma_t^{(k+2)}(\mathbf{x}_{k},x_1,x_j;\mathbf{x}'_{k},x_1,x_j)\\
& \quad - \gamma_t^{(k+2)}(\mathbf{x}_{k},x_1,x'_j;\mathbf{x}'_{k},x_1,x'_j) \Big )\\
= & \sum_{j=1}^{k}\int d \mathbf{x}_k \Big[\gamma_t^{(k+2)}(\mathbf{x}_{k},x_1,x_j;\mathbf{x}_{k},x_1,x_j) -\gamma_t^{(k+2)}(\mathbf{x}_{k},x_1,x_j;\mathbf{x}_{k},x_1,x_j)\Big]\\
= & 0.
\end{split}
\ee
This proves $(IV)=0.$

For $(II)$, we have
\be
\begin{split}
(II) = & \frac{\mu}{2}\int d \mathbf{x}_k d \mathbf{x}'_k \delta (\mathbf{x}_k - \mathbf{x}'_k) \nabla_{x_1} \cdot \nabla_{x_1'}B^{(k)}\gamma_t^{(k+1)}\\
= & \frac{\mu}{2}\sum_{j=1}^{k}\int d \mathbf{x}_k d \mathbf{x}'_k \delta (\mathbf{x}_k - \mathbf{x}'_k) \nabla_{x_1} \cdot \nabla_{x_1'}[ B^{(k)}_{j,+} - B^{(k)}_{j,-} ] \gamma_t^{(k+1)}\\
= & \frac{\mu}{2}\int d \mathbf{x}_k d \mathbf{x}'_k \delta (\mathbf{x}_k - \mathbf{x}'_k) \nabla_{x_1} \cdot \nabla_{x_1'} [ B^{(k)}_{1,+} - B^{(k)}_{1,-} ] \gamma_t^{(k+1)}\\
& + \frac{\mu}{2}\sum_{j=2}^{k} \int d \mathbf{x}_k d \mathbf{x}'_k \delta (\mathbf{x}_k - \mathbf{x}'_k) \nabla_{x_1} \cdot \nabla_{x_1'} [ B^{(k)}_{j, +} - B^{(k)}_{j,-} ] \gamma_t^{(k+1)}\\
= & \frac{\mu}{2}\int d \mathbf{x}_k d \mathbf{x}'_k \delta (\mathbf{x}_k - \mathbf{x}'_k) \nabla_{x_1} \cdot \nabla_{x_1'} [ B^{(k)}_{1,+} - B^{(k)}_{1,-} ] \gamma_t^{(k+1)} + \frac{\mu}{2}\sum_{j=2}^{k} \int d \mathbf{x}_k d \mathbf{x}'_k \delta (\mathbf{x}_k - \mathbf{x}'_k)\\
& \quad \times \Big [ \big ( \Delta_{x_1} \gamma_t^{(k+1)} \big )( \mathbf{x}_k,x_j; \mathbf{x}'_k,x_j) - \big ( \Delta_{x_1} \gamma_t^{(k+1)} \big ) (\mathbf{x}_k,x'_j;\mathbf{x}'_k,x'_j) \Big ].\\
\end{split}
\ee
Note that
\be
\begin{split}
\int d \mathbf{x}_k d \mathbf{x}'_k \delta (\mathbf{x}_k - \mathbf{x}'_k) \Big [ \big ( \Delta_{x_1}\gamma_t^{(k+1)} \big )(\mathbf{x}_k,x_j;\mathbf{x}'_k,x_j)- \big (\Delta_{x_1}\gamma_t^{(k+1)} \big )(\mathbf{x}_k,x'_j;\mathbf{x}'_k,x'_j) \Big ] =0.
\end{split}
\ee
We get
\be
\begin{split}
(II) = & \frac{\mu}{2}\int d \mathbf{x}_k d \mathbf{x}'_k \delta (\mathbf{x}_k - \mathbf{x}'_k) \nabla_{x_1} \cdot \nabla_{x_1'}(B_{1,k+1}^+-B_{1,k+1}^-)\gamma_t^{(k+1)}\\
= & - \frac{\mu}{2}\int d \mathbf{x}_k d \mathbf{x}'_k \delta (\mathbf{x}_k - \mathbf{x}'_k) \Big [ \big (\Delta_{x'_1}\gamma_t^{(k+1)} \big ) (\mathbf{x}_k,x_1;\mathbf{x}'_k,x_1) - \big ( \Delta_{x_1}\gamma_t^{(k+1)} \big ) ( \mathbf{x}_k,x'_1;\mathbf{x}'_k,x'_1) \Big ]\\
\end{split}
\ee
where we have used the identity \eqref{eq:identity}.

Now we turn to the term $(III).$ By symmetry of $\gamma_t^{(k+1)}$ in $\mathbf{x}_k$ and $\mathbf{x}'_k,$ we have
\be
\begin{split}
(I I I) = &-\frac{\mu}{4}\int d \mathbf{x}_k d \mathbf{x}'_k \delta (\mathbf{x}_k - \mathbf{x}'_k) B^{(k)}_{1, +} \Delta^{(k+1)} \gamma_t^{(k+1)}\\
= & - \frac{\mu}{4} \sum_{j=1}^{k+1} \int d \mathbf{x}_k d \mathbf{x}'_k \delta (\mathbf{x}_k - \mathbf{x}'_k) B^{(k)}_{1, +} (\Delta_{x_j} - \Delta_{x'_j} ) \gamma_t^{(k+1)}\\
= &  - \frac{\mu}{4} \int d \mathbf{x}_k d \mathbf{x}'_k \delta (\mathbf{x}_k - \mathbf{x}'_k) B^{(k)}_{1, +} (\Delta_{x_1} - \Delta_{x'_1} ) \gamma_t^{(k+1)}\\
& \quad - \frac{\mu}{4}\sum_{j=2}^{k+1}\int d \mathbf{x}_k d \mathbf{x}'_k \delta (\mathbf{x}_k - \mathbf{x}'_k) B^{(k)}_{1, +} (\Delta_{x_j} - \Delta_{x'_j} ) \gamma_t^{(k+1)}\\
%= & - \frac{\mu}{4} \sum_{j=2}^{k} \int d \mathbf{x}_k d \mathbf{x}'_k \delta (\mathbf{x}_k - \mathbf{x}'_k) (\Delta_{x_j}-\Delta_{x'_j}) B^{(k)}_{1, +}\gamma_t^{(k+1)}\\
%& - \frac{\mu}{2}\int d \mathbf{x}_k d \mathbf{x}'_k \delta (\mathbf{x}_k - \mathbf{x}'_k) B^{(k)}_{1, +} (\Delta_{x_1} - \Delta_{x'_1})\gamma_t^{(k+1)}\\
= & - \frac{\mu}{2}\int d \mathbf{x}_k d \mathbf{x}'_k \delta (\mathbf{x}_k - \mathbf{x}'_k) B^{(k)}_{1, +} (\Delta_{x_1}-\Delta_{x'_1})\gamma_t^{(k+1)}\\
= & - \frac{\mu}{2}\int d \mathbf{x}_k d \mathbf{x}'_k \delta (\mathbf{x}_k - \mathbf{x}'_k)\\
& \quad \times \Big [ \big ( \Delta_{x_1}\gamma_t^{(k+1)} \big ) (\mathbf{x}_k,x'_1;\mathbf{x}'_k,x'_1) - \big ( \Delta_{x'_1}\gamma_t^{(k+1)} \big ) (\mathbf{x}_k,x_1;\mathbf{x}'_k,x_1) \Big ].\\
\end{split}
\ee
Thus, $(II)+(III)=0.$

In summary, we have $\partial_tE(\Gamma(t))=0.$ Therefore, $E(\Gamma(t))=E(\Gamma(0))$ is a conserved quantity.
\end{proof}

Now we turn to Virial type identities for the GP hierarchy \eqref{eq:GPHierarchyEquaFunct}. This is necessary for the application of Glassey's argument for blowup in finite time of solutions to the focusing ($\mu=-1$) GP hierarchy \eqref{eq:GPHierarchyEquaFunct}.

Given a solution $\Gamma (t)$ to \eqref{eq:GPHierarchyEquaFunct}, we define for any $k \ge 1,$
\be
V_k (\Gamma (t)): = \mathrm{T r} \big [ | \mathbf{x}_k |^2 \gamma^{(k)} (t) \big ] = \sum^k_{j=1} \int d \mathbf{x}_k |x_j|^2 \gamma^{(k)} (t, \mathbf{x}_k; \mathbf{x}_k).
\ee
By the symmetry of $\gamma^{(k)},$ we have
\be
V_k (\Gamma (t)) = k \int d \mathbf{x}_k |x_1|^2 \gamma^{(k)} (t, \mathbf{x}_k; \mathbf{x}_k).
\ee
Then, the following Virial type identity for the GP hierarchy \eqref{eq:GPHierarchyEquaFunct} holds.

\begin{theorem}\label{th:VirialIdentity}
Assume that $\Gamma(t)= ( \gamma^{(k)}_t )_{k \geq 1}$ solves the Gross-Pitaevskii hierarchy \eqref{eq:GPHierarchyEquaFunct}. Then for any $k \ge 1,$
\beq\begin{split}\label{eq:VirialIdentity}
\partial^2_t V_k (\Gamma (t)) & = 8 k \int d \mathbf{p}_k |p_1|^2 \gamma^{(k)} (\mathbf{p}_k; \mathbf{p}_k) + 2 nk \mu \int d \mathbf{x}_k \gamma^{(k+1)} (\mathbf{x}_k, x_1; \mathbf{x}_k, x_1).
\end{split}\eeq
\end{theorem}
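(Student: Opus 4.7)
The plan is to differentiate $V_k(\Gamma(t))$ twice in time, invoking the GP hierarchy \eqref{eq:GPHierarchyEquaFunct} at each step. Writing
\[
V_k(\Gamma(t)) = k\int d\mathbf{x}_k\,d\mathbf{x}'_k\,\delta(\mathbf{x}_k-\mathbf{x}'_k)|x_1|^2\gamma^{(k)}_t(\mathbf{x}_k;\mathbf{x}'_k)
\]
and substituting $i\partial_t\gamma^{(k)} = -\triangle^{(k)}\gamma^{(k)} + \mu B^{(k)}\gamma^{(k+1)}$, the first observation is that the interaction contribution to $\partial_t V_k$ drops out, because $B^{(k)}_{j,+}\gamma^{(k+1)}|_{\mathbf{x}'=\mathbf{x}} = B^{(k)}_{j,-}\gamma^{(k+1)}|_{\mathbf{x}'=\mathbf{x}} = \gamma^{(k+1)}(\mathbf{x}_k, x_j;\mathbf{x}_k, x_j)$. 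For the kinetic contribution, integration by parts transfers $\triangle^{(k)}$ onto $|x_1|^2\delta(\mathbf{x}_k - \mathbf{x}'_k)$; every summand with $j\neq 1$ vanishes (since $|x_1|^2$ is independent of $x_j, x'_j$ and $\Delta_{x_j}\delta = \Delta_{x'_j}\delta$ on $\delta(x_j-x'_j)$), and the $j=1$ piece, after using $\nabla_{x_1}|x_1|^2 = 2x_1$, $\Delta_{x_1}|x_1|^2 = 2n$, and the Hermitian symmetry of $\gamma^{(k)}$ to annihilate the residual $\mathrm{Tr}\gamma^{(k)}$ terms, simplifies to the intermediate identity
\[
\partial_t V_k = -2ik\int d\mathbf{x}_k\,d\mathbf{x}'_k\,\delta(\mathbf{x}_k - \mathbf{x}'_k)\,x_1\cdot(\nabla_{x_1} - \nabla_{x'_1})\gamma^{(k)}_t(\mathbf{x}_k;\mathbf{x}'_k).
\]

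A second time-differentiation of this intermediate formula, combined with the equation applied to $\partial_t\gamma^{(k)}$, splits $\partial^2_t V_k = (\mathrm{I}) + (\mathrm{II})$ into a kinetic piece $(\mathrm{I}) = 2k\int\delta\,x_1\cdot(\nabla_{x_1}-\nabla_{x'_1})\triangle^{(k)}\gamma^{(k)}_t$ and an interaction piece $(\mathrm{II}) = -2k\mu\int\delta\,x_1\cdot(\nabla_{x_1}-\nabla_{x'_1})B^{(k)}\gamma^{(k+1)}_t$. For $(\mathrm{I})$, since $(\nabla_{x_1}-\nabla_{x'_1})$ commutes with each $(\Delta_{x_j}-\Delta_{x'_j})$, integration by parts in $x_j$ eliminates all $j\neq 1$ contributions; for $j=1$ I would pass to Fourier variables, where the combined operator becomes $i(p_1+p'_1)(|p'_1|^2-|p_1|^2)$ and multiplication by $x_1$ dualizes to $-i\partial_{p_1}$. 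A short Parseval computation then produces the clean identity $(\mathrm{I}) = 8k\int|p_1|^2\gamma^{(k)}(\mathbf{p}_k;\mathbf{p}_k)\,d\mathbf{p}_k$.

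The interaction piece $(\mathrm{II})$ is the main obstacle. For $j\neq 1$, the variables $x_1$ and $x'_1$ occur in $B^{(k)}_{j,\pm}\gamma^{(k+1)}$ only through position $1$ of the arguments of $\gamma^{(k+1)}$, so $(\nabla_{x_1}-\nabla_{x'_1})B^{(k)}_{j,+}\gamma^{(k+1)}|_{\mathbf{x}'=\mathbf{x}}$ and $(\nabla_{x_1}-\nabla_{x'_1})B^{(k)}_{j,-}\gamma^{(k+1)}|_{\mathbf{x}'=\mathbf{x}}$ coincide and cancel in $B^{(k)}_j = B^{(k)}_{j,+} - B^{(k)}_{j,-}$. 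For $j=1$, a careful chain-rule bookkeeping of all the positions where $x_1$ and $x'_1$ appear yields
\[
(\nabla_{x_1} - \nabla_{x'_1})B^{(k)}_1\gamma^{(k+1)}\big|_{\mathbf{x}'=\mathbf{x}} = 2\nabla_y\gamma^{(k+1)}(\mathbf{x}_k, y;\mathbf{x}_k, y)\big|_{y = x_1}.
\]
Integrating $x_1$ against this expression is subtle because $x_1$ plays two simultaneous roles (as an integration variable inside $\mathbf{x}_k$ and as the substitution point $y = x_1$). The decisive trick is the permutation symmetry of $\gamma^{(k+1)}$ swapping positions $1$ and $k+1$: it makes $\phi(\mathbf{x}_k, y) := \gamma^{(k+1)}(\mathbf{x}_k, y;\mathbf{x}_k, y)$ symmetric under $x_1 \leftrightarrow y$, so on the coincidence locus $\nabla_y\phi|_{y=x_1} = \tfrac{1}{2}\nabla_{x_1}[\phi(\mathbf{x}_k, x_1)]$. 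A single integration by parts in $x_1$ then produces $-\tfrac{n}{2}\int d\mathbf{x}_k\,\gamma^{(k+1)}(\mathbf{x}_k, x_1;\mathbf{x}_k, x_1)$, and combining with the prefactor $-4k\mu$ recovers the claimed interaction contribution $2nk\mu\int d\mathbf{x}_k\,\gamma^{(k+1)}(\mathbf{x}_k, x_1;\mathbf{x}_k, x_1)$, completing the proof.
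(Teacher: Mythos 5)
Your proof is correct and follows essentially the same route as the paper: the first derivative kills the interaction term and yields the momentum current $-2ik\int\delta\,x_1\cdot(\nabla_{x_1}-\nabla_{x'_1})\gamma^{(k)}$ (the paper's $2M$), the kinetic part of the second derivative is evaluated in Fourier variables to give $8k\int|p_1|^2\gamma^{(k)}$, and the interaction part is reduced via the slot-$1\leftrightarrow k{+}1$ permutation symmetry of $\gamma^{(k+1)}$ and one integration by parts (the paper's identities \eqref{eq:x.nablax}--\eqref{eq:-n}) to $2nk\mu\int\gamma^{(k+1)}(\mathbf{x}_k,x_1;\mathbf{x}_k,x_1)$. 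The only difference is cosmetic: you track the interaction term by a position-space chain rule where the paper performs an equivalent change of momentum variables.
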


\begin{remark}\label{rk:VirialIdentity}\rm
The identity \eqref{eq:VirialIdentity} was previously proved in \cite{CPT2010} under the assumption that $\Gamma (t)$ is admissible (see Remark \ref{rk:AdmissibleCondition} for detailed information).
\end{remark}

\begin{proof}
We write
\be
\gamma^{(k)}({\bf x}_k;{\bf x}_k')= \int d{\bf p}_kd{\bf p}_k'e^{\mathrm{i} \langle{\bf p}_k,{\bf x}_k\rangle - \mathrm{i} \langle{\bf p}_k',{\bf x}_k'\rangle}\gamma^{(k)}({\bf p}_k;{\bf p}_k')
\ee
and define
\be
\rho({\bf
x}_k):= \gamma^{(k)}({\bf x}_k;{\bf x}_k) = \int d{\bf p}_kd{\bf p}_k'  e^{ \mathrm{i} \langle{\bf p}_k-{\bf p}_k',{\bf x}_k\rangle}\gamma^{(k)}({\bf p}_k;{\bf p}_k').
\ee
Then,
\begin{equation}\label{eq:partialtrho}
\begin{split}
\partial_t\rho({\bf
x}_k) = & \int d{\bf p}_kd{\bf p}_k' e^{ \mathrm{i} \langle{\bf p}_k-{\bf p}_k',{\bf x}_k\rangle}\partial_t\gamma^{(k)}({\bf p}_k;{\bf p}_k')\\
= & \mathrm{i} \int d{\bf p}_kd{\bf p}_k' e^{ \mathrm{i} \langle{\bf p}_k-{\bf p}_k',{\bf x}_k\rangle} ( \Delta^{(k)}\gamma^{(k)}) ({\bf p}_k;{\bf p}_k')\\
& \quad - \mu \mathrm{i} \int d{\bf p}_kd{\bf p}_k' e^{ \mathrm{i} \langle{\bf p}_k-{\bf p}_k',{\bf x}_k\rangle} ( B^{(k)} \gamma^{(k+1)})({\bf p}_k;{\bf p}_k').
\end{split}
\end{equation}
We first have
\begin{equation}\label{eq:laplacian}
\begin{split}
\mathrm{i} \int d{\bf p}_kd{\bf p}_k' & e^{ \mathrm{i} \langle{\bf p}_k-{\bf p}_k',{\bf x}_k\rangle} \Delta^{(k)} \gamma^{(k)}({\bf p}_k;{\bf p}_k')\\
& = - \mathrm{i} \int d{\bf p}_kd{\bf p}_k' e^{ \mathrm{i} \langle{\bf p}_k-{\bf p}_k',{\bf x}_k\rangle}(|{\bf p}_k|^2-|{\bf p}_k'|^2)\gamma^{(k)}({\bf p}_k;{\bf p}_k')\\
&= - \mathrm{i} \int d{\bf p}_kd{\bf p}_k' e^{ \mathrm{i} \langle{\bf p}_k-{\bf p}_k',{\bf x}_k\rangle}\langle{\bf p}_k+{\bf p}_k',{\bf p}_k-{\bf p}_k'\rangle\gamma^{(k)}({\bf p}_k;{\bf p}_k')\\
&= - \nabla_{{\bf x}_k}\cdot\int d{\bf p}_kd{\bf p}_k' e^{ \mathrm{i} \langle{\bf p}_k-{\bf p}_k',{\bf x}_k\rangle}({\bf p}_k+{\bf p}_k')\gamma^{(k)}({\bf p}_k;{\bf p}_k').
\end{split}
\end{equation}

On the other hand, for $j=1,2,\ldots,k,$
\be
\begin{split}
B^{(k)}_{j,+}\gamma^{(k+1)}({\bf x}_k;{\bf x}_k') = \int d{\bf q}_{k+1}d{\bf q}_{k+1}' e^{\mathrm{i} (\langle {\bf q}_k,{\bf x}_k\rangle+q_{k+1}\cdot x_j-\langle {\bf q}_k',{\bf x}_k'\rangle-q_{k+1}'\cdot x_j)}\gamma^{(k+1)}({\bf q}_{k+1};{\bf q}_{k+1}').
\end{split}
\ee
Thus,
\be
\begin{split}
B^{(k)}_{j,+} &  \gamma^{(k+1)} ({\bf p}_k;{\bf p}_k')\\
= &  \int d{\bf x}_kd{\bf x}_k' e^{ - \mathrm{i} \langle{\bf p}_k,{\bf x}_k\rangle + \mathrm{i} \langle{\bf p}_k',{\bf x}_k'\rangle}B^{(k)}_{j,+}\gamma^{(k+1)}({\bf x}_k;{\bf x}_k')\\
= & \int d{\bf x}_kd{\bf x}_k'd{\bf q}_{k+1}d{\bf q}_{k+1}' e^{ - \mathrm{i} \langle{\bf p}_k,{\bf x}_k\rangle + \mathrm{i} \langle{\bf p}_k',{\bf x}_k'\rangle} e^{ \mathrm{i} (\langle {\bf q}_k,{\bf x}_k\rangle+q_{k+1}\cdot x_j-\langle {\bf q}_k',{\bf x}_k'\rangle-q_{k+1}'\cdot x_j)}\gamma^{(k+1)}({\bf q}_{k+1};{\bf q}_{k+1}')\\
= & \int d{\bf q}_{k+1}d{\bf q}_{k+1}'\delta({\bf q}_k'-{\bf p}_k')\delta(q_{k+1}+q_j-q_{k+1}'-p_j)\prod_{l\neq j}^k\delta(p_l-q_l)
\gamma^{(k+1)}({\bf q}_{k+1};{\bf q}_{k+1}')\\
= & \int dq_{k+1}dq_{k+1}'\gamma^{(k+1)}(p_1,\ldots, p_{j-1}, q_{k+1}'-q_{k+1}+p_j, p_{j +1}, \ldots,p_k,q_{k+1}; {\bf p}_k',q_{k+1}')
\end{split}
\ee
Similarly, one has
\be
\begin{split}
B^{(k)}_{j,-}& \gamma^{(k+1)}({\bf p}_k;{\bf p}_k')\\
= & \int d{\bf x}_kd{\bf x}_k' e^{- \mathrm{i} \langle{\bf p}_k,{\bf x}_k\rangle + \mathrm{i} \langle{\bf p}_k',{\bf x}_k'\rangle}B^{(k)}_{j,-}\gamma^{(k+1)}({\bf x}_k;{\bf x}_k')\\
= & \int d{\bf x}_kd{\bf x}_k'd{\bf q}_{k+1}d{\bf q}_{k+1}' e^{- \mathrm{i} \langle{\bf p}_k,{\bf x}_k\rangle + \mathrm{i} \langle{\bf p}_k',{\bf x}_k'\rangle}e^{ \mathrm{i} (\langle {\bf q}_k,{\bf x}_k\rangle+q_{k+1}\cdot x_j'-\langle {\bf q}_k',{\bf x}_k'\rangle-q_{k+1}'\cdot x_j')}\gamma^{(k+1)}({\bf q}_{k+1};{\bf q}_{k+1}')\\
= & \int d {\bf q}_{k+1}d{\bf q}_{k+1}'\delta({\bf q}_k-{\bf p}_k)\delta(q_{k+1}-q_j'-q_{k+1}'+p_j')\prod_{l\neq j}^k\delta(p_l'-q_l')
\gamma^{(k+1)}({\bf q}_{k+1};{\bf q}_{k+1}')\\
= & \int dq_{k+1}dq_{k+1}'\gamma^{(k+1)}({\bf p}_k, q_{k+1}; p_1',\ldots, p_{j-1}', q_{k+1}-q_{k+1}'+p_j', p_{j+1}', \ldots, p_k', q_{k+1}').
\end{split}
\ee
Therefore,
\be
\begin{split}
\int &  d {\bf p}_k d{\bf p}_k' e^{ \mathrm{i} \langle {\bf p}_k-{\bf p}_k', {\bf x}_k \rangle} ( B^{(k)}_j \gamma^{(k+1)} ) ({\bf p}_k; {\bf p}_k')\\
& = \int d {\bf p}_k d{\bf p}_k' e^{ \mathrm{i} \langle{\bf p}_k-{\bf p}_k', {\bf x}_k \rangle} \big [ (B^{(k)}_{j,+} \gamma^{(k+1)} ) ({\bf p}_k; {\bf p}_k') - ( B^{(k)}_{j,-} \gamma^{(k+1)} ) ({\bf p}_k; {\bf p}_k') \big ]\\
& = \int dq_{k+1}dq_{k+1}'d{\bf p}_kd{\bf p}_k'e^{ \mathrm{i} \langle{\bf p}_k-{\bf p}_k',{\bf x}_k\rangle}\gamma^{(k+1)}(p_1,\ldots,q_{k+1}'-q_{k+1}+p_j,\ldots,p_k,q_{k+1};{\bf p}_k',q_{k+1}')\\
& \quad - \int dq_{k+1}dq_{k+1}'d{\bf p}_kd{\bf p}_k'e^{ \mathrm{i} \langle{\bf p}_k-{\bf p}_k',{\bf x}_k\rangle}\gamma^{(k+1)}({\bf p}_k,q_{k+1};p_1',\ldots,q_{k+1}-q_{k+1}'+p_j',\ldots,p_k',q_{k+1}')\\
&=0.
\end{split}
\ee
where the last equality is obtained by applying the change of variables $p_j \rightarrow q_{k+1}' - q_{k+1}+p_j$ and $p'_j \rightarrow q'_{k+1}-q_{k+1}+p'_j$ in the second term of the second equality so that the difference $p_j-p_j'$ remains unchanged. Since
\be
B^{(k)}=\sum_{j=1}^kB^{(k)}_j=\sum_{j=1}^k(B^{(k)}_{j,+}-B^{(k)}_{j,-}),
\ee
we conclude that
\beq\label{eq:Bkgamma}
\int d{\bf p}_kd{\bf p}_k' e^{ \mathrm{i} \langle{\bf p}_k-{\bf p}_k',{\bf x}_k\rangle} ( B^{(k)} \gamma^{(k+1)})({\bf p}_k;{\bf p}_k') = 0.
\eeq

Therefore, by combining \eqref{eq:partialtrho},\eqref{eq:laplacian} and \eqref{eq:Bkgamma} we have
\beq\label{eq:DensityFlow}
\partial_t \rho+\nabla_{{\bf x}_k}\cdot P=0,
\eeq
where
\be
P:=\int d{\bf p}_kd{\bf p}_k' e^{ \mathrm{i} \langle{\bf p}_k-{\bf p}_k',{\bf x}_k\rangle}({\bf p}_k+{\bf p}_k')\gamma^{(k)}({\bf p}_k;{\bf p}_k').
\ee

Now, we define
\be
M:=\int d{\bf x}_k\langle{\bf x}_k,P\rangle.
\ee
The time derivative is given by
\beq\label{eq:MorawetzAction}
\partial_t M=\int d{\bf x}_k\langle{\bf x}_k,\partial_tP\rangle= I_M + II_M,
\eeq
where
\be
I_M = \mathrm{i} \int d {\bf x}_k d {\bf p}_k d {\bf p}_k'e^{ \mathrm{i} \langle{\bf p}_k-{\bf p}_k',{\bf x}_k \rangle} \langle{\bf x}_k, {\bf p}_k+{\bf p}_k' \rangle ( \Delta^{(k)}\gamma^{(k)} ) ({\bf p}_k;{\bf p}_k'),
\ee
and
\be
II_M = -\mu \mathrm{i} \int d{\bf x}_k d{\bf p}_kd{\bf p}_k' e^{ \mathrm{i} \langle{\bf p}_k-{\bf p}_k',{\bf x}_k\rangle}\langle{\bf x}_k,{\bf p}_k+{\bf p}_k'\rangle ( B^{(k)}\gamma^{(k+1)} ) ({\bf p}_k;{\bf p}_k').
\ee
For the term $I_M$ we have
\be
\begin{split}
I_M & = - \mathrm{i} \int d{\bf x}_kd{\bf p}_kd{\bf p}_k'e^{ \mathrm{i} \langle{\bf p}_k-{\bf p}_k',{\bf x}_k\rangle}\langle{\bf x}_k,{\bf p}_k+{\bf p}_k'\rangle(|{\bf p}_k|^2-|{\bf p}_k'|^2)\gamma^{(k)}({\bf p}_k;{\bf p}_k')\\
& = - \mathrm{i} \int d{\bf x}_kd{\bf p}_kd{\bf p}_k'e^{ \mathrm{i} \langle{\bf p}_k-{\bf p}_k',{\bf x}_k\rangle}\langle{\bf x}_k,{\bf p}_k+{\bf p}_k'\rangle\langle{\bf p}_k+{\bf p}_k',{\bf p}_k-{\bf p}_k'\rangle\gamma^{(k)}({\bf p}_k;{\bf p}_k')\\
& = - \mathrm{i} \int d{\bf x}_kd{\bf p}_kd{\bf p}_k'e^{ \mathrm{i} \langle{\bf p}_k-{\bf p}_k',{\bf x}_k\rangle}\langle
  ({\bf p}_k+{\bf p}_k')({\bf p}_k+{\bf p}_k')^T {\bf x}_k,{\bf p}_k-{\bf p}_k'\rangle\gamma^{(k)}({\bf p}_k;{\bf p}_k')\\
& = -\int d{\bf p}_kd{\bf p}_k'\gamma^{(k)}({\bf p}_k;{\bf p}_k')\int d{\bf x}_k \langle
  ({\bf p}_k+{\bf p}_k')({\bf p}_k+{\bf p}_k')^T{\bf x}_k, \nabla_{{\bf x}_k}e^{ \mathrm{i} \langle{\bf p}_k-{\bf p}_k',{\bf x}_k\rangle}\rangle\\
& = \int d{\bf p}_kd{\bf p}_k'\gamma^{(k)}({\bf p}_k;{\bf p}_k')\mathrm{T r}
({\bf p}_k+{\bf p}_k')({\bf p}_k+{\bf p}_k')^T\int d{\bf x}_k e^{ \mathrm{i} \langle{\bf p}_k-{\bf p}_k',{\bf x}_k\rangle}\\
& = \int d{\bf p}_kd{\bf p}_k'\gamma^{(k)}({\bf p}_k;{\bf p}_k')|{\bf p}_k+{\bf p}_k'|^2\delta({\bf p}_k-{\bf p}_k')\\
& = 4 \int d{\bf p}_k|{\bf p}_k|^2\gamma^{(k)}({\bf p}_k;{\bf p}_k'),
\end{split}
\ee
where  ${\bf x}_k,{\bf p}_k,{\bf p}_k'$ are all considered as $n k \times 1$ matrices, and $A^T$ denotes the transpose of a matrix $A.$

Next, we determine the term $II_M.$ To this end, for each $ j=1,2,\ldots,k$ we have
\be
\begin{split}
\int d & {\bf p}_k d{\bf p}_k' e^{ \mathrm{i} \langle{\bf p}_k-{\bf p}_k',{\bf x}_k\rangle}({\bf p}_k+{\bf p}_k') ( B^{(k)}_j\gamma^{(k+1)}) ({\bf p}_k;{\bf p}_k')\\
= & \int d{\bf p}_kd{\bf p}_k' e^{ \mathrm{i} \langle{\bf p}_k-{\bf p}_k',{\bf x}_k\rangle}({\bf p}_k+{\bf p}_k') \big [ ( B^{(k)}_{j,+}\gamma^{(k+1)}) ({\bf p}_k;{\bf p}_k')- ( B^{(k)}_{j,-}\gamma^{(k+1)}) ({\bf p}_k;{\bf p}_k' \big ]\\
= & \int d{\bf p}_kd{\bf p}_k'dq_{k+1}dq_{k+1}' e^{ \mathrm{i} \langle{\bf p}_k-{\bf p}_k',{\bf x}_k\rangle}({\bf p}_k+{\bf p}_k')\\
& \times \big [ \gamma^{(k+1)}(p_1,\ldots,q_{k+1}'-q_{k+1}+p_j,\ldots,p_k,q_{k+1};{\bf p}_k',q_{k+1}')\\
& \quad - \gamma^{(k+1)}({\bf p}_k,q_{k+1};p_1',\ldots,q_{k+1}-q_{k+1}'+p_j',\ldots,p_k',q_{k+1}') \big ]\\
\end{split}
\ee
In the last term, we apply the change of variables $p_j \rightarrow p_j - q_{k+1} + q_{k+1}'$ and $p'_j \rightarrow p_j' - q_{k+1}+q_{k+1}'$ so that the difference $p_j-p_j'$ remains unchanged. Then the above integral equals
\be
\begin{split}
\int d & {\bf p}_k d{\bf p}_k'dq_{k+1}dq_{k+1}' e^{ \mathrm{i} \langle{\bf p}_k-{\bf p}_k',{\bf x}_k\rangle}({\bf p}_k+{\bf p}_k')\\
& \quad \times \gamma^{(k+1)}(p_1,\ldots,q_{k+1}'-q_{k+1}+p_j,\ldots,p_k,q_{k+1};{\bf p}_k',q_{k+1}')\\
& - \int d{\bf p}_kd{\bf p}_k'dq_{k+1}dq_{k+1}' e^{ \mathrm{i} \langle{\bf p}_k-{\bf p}_k',{\bf x}_k\rangle}(p_1+p_1',\ldots,p_j+p_j'-2q_{k+1}+2q_{k+1}',\ldots,p_k+p_k')\\
& \quad \times\gamma^{(k+1)}(p_1,\ldots,q_{k+1}'-q_{k+1}+p_j,\ldots,p_k,q_{k+1};{\bf p}_k',q_{k+1}')\\
= & \int d{\bf p}_kd{\bf p}_k'dq_{k+1}dq_{k+1}' e^{ \mathrm{i} \langle{\bf p}_k-{\bf p}_k',{\bf x}_k\rangle}2(0,\ldots,q_{k+1}-q_{k+1}',\ldots,0)\\
& \quad \times \gamma^{(k+1)}(p_1,\ldots,q_{k+1}'-q_{k+1}+p_j,\ldots,p_k,q_{k+1};{\bf p}_k',q_{k+1}')\\
\end{split}
\ee
The contribution of this term to the integral $\int d{\bf x}_k \langle {\bf x}_k,\partial_tP\rangle$ is given by
\be
\begin{split}
-\mu \mathrm{i} \int d & {\bf x}_kd{\bf p}_kd{\bf p}_k'dq_{k+1}dq_{k+1}' e^{ \mathrm{i} \langle{\bf p}_k-{\bf p}_k',{\bf x}_k\rangle}2x_j\cdot (q_{k+1}-q_{k+1}')\\
&\quad \times \gamma^{(k+1)}(p_1,\ldots,q_{k+1}'-q_{k+1}+p_j,\ldots,p_k,q_{k+1};{\bf p}_k',q_{k+1}').
\end{split}
\ee
Now, we apply Fourier transform again. This integral equals
\begin{equation}\label{eq:2n+2x.nablax}
\begin{split}
-\mu \mathrm{i} \int d & {\bf x}_kd{\bf p}_kd{\bf p}_k'dq_{k+1}dq_{k+1}'d{\bf y}_{k+1}d{\bf y}_{k+1}'e^{ \mathrm{i} \langle{\bf p}_k-{\bf p}_k',{\bf x}_k\rangle}2x_j\cdot (q_{k+1}-q_{k+1}')\\
& \quad \times e^{- \mathrm{i} \langle {\bf p}_k,{\bf y}_k\rangle- \mathrm{i} (q_{k+1}'-q_{k+1})\cdot y_j- \mathrm{i} q_{k+1}\cdot y_{k+1} + \mathrm{i} \langle {\bf p}_k',{\bf y}_k'\rangle + \mathrm{i} q_{k+1}'\cdot y_{k+1}'} \gamma^{(k+1)}({\bf y}_{k+1};{\bf y}_{k+1}')\\
& = - \mu \mathrm{i} \int d{\bf x}_kd{\bf y}_{k+1}d{\bf y}_{k+1}'\gamma^{(k+1)}({\bf y}_{k+1};{\bf y}_{k+1}')
\int d{\bf p}_kd{\bf p}_k'dq_{k+1}dq_{k+1}'\\
& \quad \times e^{ \mathrm{i} \langle{\bf p}_k,{\bf x}_k -{\bf y}_k\rangle - \mathrm{i} \langle{\bf p}_k',{\bf x}_k -{\bf y}_k'\rangle}2x_j\cdot (q_{k+1}-q_{k+1}')e^{ \mathrm{i} q_{k+1}\cdot(y_j-y_{k+1}) - \mathrm{i} q_{k+1}'\cdot(y_j-y_{k+1}')}\\
&=-\mu\int d{\bf x}_kd{\bf y}_{k+1}d{\bf y}_{k+1}'\gamma^{(k+1)}({\bf y}_{k+1};{\bf y}_{k+1}')\delta({\bf x}_k -{\bf y}_k)\delta({\bf x}_k -{\bf y}_k')\\
& \quad \times \int dq_{k+1}dq_{k+1}'2x_j\cdot\nabla_{y_j}e^{ \mathrm{i} q_{k+1}\cdot(y_j-y_{k+1}) - \mathrm{i} q_{k+1}'\cdot(y_j-y_{k+1}')}\\
&= - \mu \int d {\bf x}_k d {\bf y}_{k+1} d y_{k+1}'\gamma^{(k+1)}({\bf y}_{k+1};{\bf x}_k,y_{k+1}')\delta({\bf x}_k -{\bf y}_k)\\
& \quad \times 2x_j \cdot \nabla_{y_j} \delta(y_j-y_{k+1})\delta(y_j-y_{k+1}')\\
& = - \mu \int d{\bf x}_kd y_{k+1}d y_{k+1}'\gamma^{(k+1)}({\bf x}_k,y_{k+1};{\bf x}_k,y_{k+1}')\\
& \quad \times 2x_j \cdot \nabla_{x_j} \delta(x_j-y_{k+1}) \delta(y_{k+1}-y_{k+1}')\\
&= - \mu \int d {\bf x}_kd y_{k+1}\gamma^{(k+1)}({\bf x}_k,y_{k+1};{\bf x}_k,y_{k+1}) 2x_j\cdot\nabla_{x_j}\delta(x_j-y_{k+1})\\
&=\mu\int d{\bf x}_kd y_{k+1}\delta(x_j-y_{k+1})(2n+2x_j\cdot\nabla_{x_j})\gamma^{(k+1)}({\bf x}_k,y_{k+1};{\bf x}_k,y_{k+1}). \\
\end{split}
\end{equation}
Note that
\begin{equation}\label{eq:x.nablax}
\begin{split}
\int & d{\bf x}_kx_j\cdot\nabla_{x_j}\gamma^{(k+1)}({\bf x}_k,x_j;{\bf x}_k,x_j)\\
= & \int d{\bf x}_kd y_{k+1}\delta(x_j-y_{k+1})(x_j\cdot\nabla_{x_j}+y_{k+1}\cdot\nabla_{y_{k+1}})\gamma^{(k+1)}({\bf x}_k,y_{k+1};{\bf x}_k,y_{k+1})\\
= & \int d{\bf x}_kd y_{k+1}\delta(x_j-y_{k+1}) \big [ x_j\cdot\nabla_{x_j}\gamma^{(k+1)}({\bf x}_k,y_{k+1};{\bf x}_k,y_{k+1})\\
& \quad +y_{k+1}\cdot\nabla_{y_{k+1}}\gamma^{(k+1)}(x_1,\ldots,y_{k+1},\ldots,x_k,x_j;x_1,\ldots,y_{k+1},\ldots,x_k,x_j) \big ]\\
= & \int d{\bf x}_kd y_{k+1}\delta(x_j-y_{k+1}) 2x_j \cdot \nabla_{x_j}\gamma^{(k+1)}({\bf x}_k,y_{k+1};{\bf x}_k,y_{k+1}) \\
\end{split}
\end{equation}
where we used the symmetry of $\gamma^{(k+1)},$ and renamed the variables in the last term. Also,
\begin{equation}\label{eq:-n}
\int d{\bf x}_kx_j\cdot\nabla_{x_j}\gamma^{(k+1)}({\bf x}_k,x_j;{\bf x}_k,x_j)=
-n\int d{\bf x}_k\gamma^{(k+1)}({\bf x}_k,x_j;{\bf x}_k,x_j)
\end{equation}
from integrating by parts.

Therefore, combing \eqref{eq:2n+2x.nablax},\eqref{eq:x.nablax} and \eqref{eq:-n} yields
\be
\begin{split}
II_M= & \sum_{j=1}^k\mu\int d{\bf x}_kd y_{k+1}\delta(x_j-y_{k+1})(2n+2x_j\cdot\nabla_{x_j})\gamma^{(k+1)}({\bf x}_k,y_{k+1};{\bf x}_k,y_{k+1})\\
= & \sum_{j=1}^k\mu\int d{\bf x}_kd y_{k+1}\delta(x_j-y_{k+1})(2n-n)\gamma^{(k+1)}({\bf x}_k,y_{k+1};{\bf x}_k,y_{k+1})\\
= & n k \mu\int d{\bf x}_k\gamma^{(k+1)}({\bf x}_k,x_1;{\bf x}_k,x_1).
\end{split}
\ee
Finally, we combine \eqref{eq:DensityFlow} and \eqref{eq:MorawetzAction} to conclude that
\be
\begin{split}
\partial_t^2 \int & d {\bf x}_k|{\bf x}_k|^2\gamma^{(k)}(t,{\bf x}_k;{\bf x}_k)\\ \
& = 2 \int d{\bf x}_k\langle {\bf x}_k,\partial_tP\rangle\\
&=8\int d{\bf p}_k|{\bf p}_k|^2\gamma^{(k)}({\bf p}_k;{\bf p}_k)+2nk\mu\int d{\bf x}_k\gamma^{(k+1)}({\bf x}_k,x_1;{\bf x}_k,x_1)\\
&=8k\int d{\bf p}_k| p_1|^2\gamma^{(k)}({\bf p}_k;{\bf p}_k)+2nk\mu\int d{\bf x}_k\gamma^{(k+1)}({\bf x}_k,x_1;{\bf x}_k,x_1).
\end{split}
\ee
This completes the proof of Theorem \ref{th:VirialIdentity}.
\end{proof}

\section{Blowup of solutions to the focusing GP hierarchy}\label{BlowupFinite}

In this section, using conservation of energy and Virial type identities obtained in the previous section, we prove a result on blowup of solutions to focusing GP hierarchies in finite time.

%\begin{definition}\label{df:Belowup}
%A solution $\Gamma (t)$ of the GP hierarchy \eqref{eq:GPHierarchyEquaFunct} is said to blow up in finite time with respect
%to $\mathcal{H}^{s}$ if there exists $T^* < \8$ such that $\| \Gamma (t) \|_{\mathcal{H}^{s}} \to \8$ as $t \nearrow T^*.$
%\end{definition}

Recall that $\mathfrak{H}^1$ is the space of all sequences $\Gamma = (\gamma^{(k)})$ of trace class operators satisfying $\| \Gamma \|_{\mathfrak{H}^1} < \8,$ where
\be
\| \Gamma \|_{\mathfrak{H}^1} = \inf \left \{ \lambda>0:\; \sum_{k=1}^{\infty} \frac{1}{\lambda^k} |||\gamma^{(k)}|||_k \le 1 \right \},
\ee
and $||| \gamma^{(k)} |||_k = \mathrm{T r} \big [ | S^{(k)} \gamma^{(k)} | \big ]$ for any $k \ge 1.$

\begin{theorem}\label{th:Blowup}
Let $n \ge 3.$ Assume that $\Gamma(t)= ( \gamma^{(k)}_t )_{k \geq 1}$ solves the focusing (i.e. $\mu=-1$) Gross-Pitaevskii hierarchy \eqref{eq:GPHierarchyEquaFunct} with initial condition $\Gamma(0) = ( \gamma^{(k)}_0 )_{k \geq 1} \in {\mathfrak{H}^{1}}.$ If $( \gamma^{(k)}_t )_{k \geq 1}$ is a sequence of density operators such that $V_k (\Gamma(0)) < \8$ and $E_k (\Gamma(0))<0$ for some $k \ge 1,$ then the solution $\Gamma(t)$ blows up in finite time with respect to $\mathfrak{H}^1.$
\end{theorem}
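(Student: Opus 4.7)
The plan is a Glassey-type virial argument that combines energy conservation (Theorem \ref{th:k-energyconservation}) with the Virial identity (Theorem \ref{th:VirialIdentity}). First I would introduce the shorthands
\[
K_k(t) := \int d\mathbf{p}_k\, |p_1|^2 \gamma_t^{(k)}(\mathbf{p}_k;\mathbf{p}_k), \qquad P_k(t) := \int d\mathbf{x}_k\, \gamma_t^{(k+1)}(\mathbf{x}_k, x_1; \mathbf{x}_k, x_1).
\]
Passing to Fourier in the kinetic term of \eqref{eq:k-energyformula2} (so that $\int\delta(\mathbf{x}_k-\mathbf{x}'_k)\nabla_{x_1}\cdot\nabla_{x'_1}\gamma^{(k)} = K_k$) and noting $B^{(k)}_{1,+}\gamma^{(k+1)}(\mathbf{x}_k;\mathbf{x}_k) = \gamma^{(k+1)}(\mathbf{x}_k,x_1;\mathbf{x}_k,x_1)$ rewrites the energy as
\[
E_k(\Gamma(t)) = k\Bigl(\tfrac{1}{2}K_k(t) + \tfrac{\mu}{4}P_k(t)\Bigr),
\]
while \eqref{eq:VirialIdentity} reads $\partial_t^2 V_k(\Gamma(t)) = 8k\, K_k(t) + 2nk\mu\, P_k(t)$. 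Eliminating $K_k(t)$ between the two formulas and invoking $E_k(\Gamma(t)) = E_k(\Gamma(0))$ yields the single scalar identity
\[
\partial_t^2 V_k(\Gamma(t)) = 16\, E_k(\Gamma(0)) + 2k(n-2)\mu\, P_k(t).
\]

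Next I specialize to the focusing case $\mu=-1$ and the dimensional assumption $n\ge 3$, so that the coefficient $2k(n-2)\mu$ is non-positive. Since each $\gamma_t^{(k+1)}$ is a density operator, it is positive semidefinite, its diagonal is non-negative, and therefore $P_k(t)\ge 0$; together with $E_k(\Gamma(0))<0$ this produces
\[
\partial_t^2 V_k(\Gamma(t)) \le 16\, E_k(\Gamma(0)) < 0
\]
on the entire forward lifespan. Integrating twice in $t$ gives
\[
V_k(\Gamma(t)) \le V_k(\Gamma(0)) + t\,\partial_t V_k(\Gamma(0)) + 8\, E_k(\Gamma(0))\, t^2,
\]
a downward parabola that is strictly negative for all $t$ past some finite $T^{\ast}$. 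On the other hand, the same positivity of $\gamma_t^{(k)}$ forces $V_k(\Gamma(t)) = k\int |x_1|^2 \gamma_t^{(k)}(\mathbf{x}_k;\mathbf{x}_k)\, d\mathbf{x}_k \ge 0$ wherever it is defined, so the solution cannot persist past $T^{\ast}$; hence $T_{\max}\le T^{\ast}<\infty$, and a blowup alternative in the $\mathfrak{H}^1$ scale (an analogue of Theorem \ref{th:BlowupRate-alpha>n/2} read off the same argument used there) then forces $\|\Gamma(t)\|_{\mathfrak{H}^1}\to\infty$ as $t\nearrow T_{\max}$. The backward lifespan $T_{\min}$ is handled identically.

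The main obstacle is not the Glassey algebra but the analytic justification that it applies on $\mathfrak{H}^1$-solutions built from density operators. One needs to check that, while $\|\Gamma(t)\|_{\mathfrak{H}^1}<\infty$: (i) positivity $\gamma_t^{(k)}\ge 0$ propagates in time so that both $V_k\ge 0$ and $P_k\ge 0$; (ii) the moments $V_k(\Gamma(t))$ and $\partial_t V_k(\Gamma(t))$ are finite and of class $C^1$ (so that the initial data $V_k(\Gamma(0))<\infty$ together with a control on $\partial_tV_k(\Gamma(0))$ give meaning to the double integration above); and (iii) the formal differentiations under the integral in the proofs of Theorems \ref{th:k-energyconservation} and \ref{th:VirialIdentity} are rigorous at this level of regularity. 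A standard approximation/regularization procedure together with a blowup alternative for $\mathfrak{H}^1$-solutions parallel to Section \ref{BlowupAlter} should handle all three points and complete the argument.
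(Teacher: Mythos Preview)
Your virial computation is exactly the paper's: eliminate the kinetic term between \eqref{eq:k-energyformula2} and \eqref{eq:VirialIdentity}, use energy conservation, and arrive at $\partial_t^2 V_k \le 16 E_k(\Gamma(0))<0$ from positivity and $n\ge 3$. The difference is in how you pass from ``$V_k$ cannot stay nonnegative forever'' to ``$\|\Gamma(t)\|_{\mathfrak{H}^1}\to\infty$''.

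You try to conclude $T_{\max}<\infty$ and then invoke a blowup alternative in $\mathfrak{H}^1$ analogous to Theorem \ref{th:BlowupRate-alpha>n/2}. That step is a genuine gap: the blowup alternatives in Section \ref{BlowupAlter} rest on the local well-posedness Theorems \ref{th:LocalWellposued-alpha>n/2} and \ref{th:LocalWellposued-alpha>(n-1)/2}, which are proved for $\mathcal{H}^s$, not for the trace-class space $\mathfrak{H}^1$. No local existence theory in $\mathfrak{H}^1$ is supplied, so there is nothing to ``read off'' the argument there. Without it, knowing $T_{\max}<\infty$ does not by itself force $\|\Gamma(t)\|_{\mathfrak{H}^1}$ to diverge.

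The paper bypasses this entirely with a direct quantitative link between $V_k$ and the $\mathfrak{H}^1$ norm. From $\partial_t^2 V_k\le 16E_k(\Gamma(0))<0$ and $V_k\ge 0$ one gets a finite $T^*$ with $V_k(\Gamma(t))\searrow 0$ as $t\nearrow T^*$. Then, using that $\gamma^{(k)}_t$ is a density operator with $\mathrm{Tr}\,\gamma^{(k)}_t=1$, Cauchy--Schwarz and Hardy's inequality (here $n\ge 3$) give
\[
1=\mathrm{Tr}\,\gamma^{(k)}_t \le C\,\big(\mathrm{Tr}[|x_1|^2\gamma^{(k)}_t]\big)^{1/2}\big(\mathrm{Tr}[-\Delta_{x_1}\gamma^{(k)}_t]\big)^{1/2},
\]
so $\mathrm{Tr}[-\Delta_{x_1}\gamma^{(k)}_t]\gtrsim V_k(\Gamma(t))^{-1}\to\infty$. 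Since $|||\gamma^{(k)}_t|||_k\ge k\,\mathrm{Tr}[-\Delta_{x_1}\gamma^{(k)}_t]$ and $\|\Gamma(t)\|_{\mathfrak{H}^1}^k\ge |||\gamma^{(k)}_t|||_k$, the $\mathfrak{H}^1$ norm blows up at $T^*$. This avoids any appeal to well-posedness in $\mathfrak{H}^1$ and is the missing idea in your proposal.
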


\begin{remark}\label{rk:GPBlowup}\rm
We note that in the factorized case, Theorem \ref{th:Blowup} coincides with the corresponding result for the focusing GP equation \eqref{eq:GPEqua} (e.g., see Theorem 6.5.4 in \cite{C2003}).
\end{remark}

\begin{proof}
By \eqref{eq:VirialIdentity}, we have
\be\begin{split}
\partial^2_t V_k (\Gamma (t)) & = 16 k \mathrm{T r} \big [ - \frac{1}{2} \triangle_{x_1} \gamma^{(k)} \big ] + 2 n k \mu \mathrm{T r} \big [ B^{(k)}_{1, +} \gamma^{(k+1)}\big ]\\
& = 16 E_k (\Gamma (t)) + \mu k (2 n -4) \mathrm{T r} \big [ B^{(k)}_{1, +} \gamma^{(k+1)}\big ]\\
& = 16 E_k (\Gamma(0)) - k(2 n  -4) \mathrm{T r} \big [ B^{(k)}_{1, +} \gamma^{(k+1)}\big ]\\
\end{split}\ee
where we have used Theorem \ref{th:k-energyconservation}. Since $( \gamma^{(k)}_t )_{k \geq 1}$ is a sequence of nonnegative operators and $n \ge 3,$ we have
\beq\label{eq:V}
\partial^2_t V_k (\Gamma (t)) \le 16 E_k (\Gamma(0)).
\eeq
Since $V_k (\Gamma (t))$ is nonnegative, we conclude from the assumption $E_k (\Gamma(0)) <0$ that there exists a finite time $T^*$ such that $V_k (\Gamma (t)) \searrow 0$ as $t \nearrow T^*.$

On the other hand, we have
\be
\begin{split}
1 = {\rm T r} [ \gamma^{(k)} (t) ] & \leq \Big ( {\rm T r} \big [ |x_1|^2 \gamma^{(k)} (t) \big ] \Big )^{\frac{1}{2}} \Big ( {\rm T r} \big [ \frac{1}{|x_1|^2} \gamma^{(k)}(t) \big ] \Big )^{\frac{1}{2}}\\
& \leq C \Big ( {\rm T r} \big [ |x_1|^2 \gamma^{(k)} (t) \big ] \Big )^{\frac{1}{2}} \Big( {\rm T r} \big [ (- \Delta_{x_1})^{\frac{1}{2}} (- \Delta_{x'_1})^{\frac{1}{2}} \gamma^{(k)} (t) \big ] \Big )^{\frac{1}{2}}\\
\end{split}
\ee
where we have first used the Cauchy-Schwarz inequality, and then the Hardy inequality (noticing that $n > 2$). Thus,
\be
\mathrm{T r} \big [ - \Delta_{x_1} \gamma^{(k)} (t) \big ] = \mathrm{T r} \big [ (- \Delta_{x_1})^{\frac{1}{2}} (- \Delta_{x'_1})^{\frac{1}{2}} \gamma^{(k)} (t) \big ] \gtrsim \frac{1}{V_k (\Gamma (t))} \to \8 \quad \text{as}\quad t \to T^*,
\ee
where we have used \eqref{eq:identity} in the first equality.

Now, by \eqref{normH} and \eqref{eq:SobolevSpaceNorm}, one has
\begin{equation*}
\begin{split}
\|\Gamma(t)\|^k_{\mathfrak{H}^1} \geq ||| \gamma^{(k)} (t) |||_k &  = {\rm T r} \big [ S^{(k)} \gamma^{(k)} (t) \big ] = \mathrm{T r} \Big [ \prod^k_{j =1} (1 - \Delta_{x_j}) \gamma^{(k)} (t) \Big ]\\
& \ge  \sum^k_{j=1} {\rm T r} \Big [ (- \Delta_{x_j}) \gamma^{(k)} (t) \Big ] = k {\rm T r} \Big [ -\Delta_{x_1} \gamma^{(k)} (t) \Big ] \to \infty
\end{split}
\end{equation*}
as $t\nearrow T^*,$ which establishes blowup in finite time.
\end{proof}

\section{The quintic Gross-Pitaevskii hierarchy}\label{GPQuintic}

In this section, we consider the so-called quintic Gross-Pitaevskii hierarchy and present the corresponding results similar to the ones obtained for the cubic case in previous sections.

\subsection{Preliminaries}

Recall that the quintic Gross-Pitaevskii (or, GP in short) hierarchy $\Gamma(t)= ( \gamma^{(k)}(t) )_{k\geq 1}$ is given by
\beq\label{eq:GPHierarchyEquaQuintic}
\mathrm{i} \partial_t \gamma^{(k)}_t = \big [\sum^k_{j=1} (- \Delta_{x_j}), \gamma^{(k)}_t \big ] + \mu Q^{(k)} \gamma^{(k+2)}_t,\quad \mu = \pm 1,
\eeq
in $n$ dimensions, for $k \in \mathbb{N},$ where the operator $Q^{(k)}$ is defined by
\be
Q^{(k)} \gamma^{(k+2)}_t = \sum^k_{j=1} \mathrm{T r}_{k+1, k+2} \left [ \delta (x_j - x_{k+1}) \delta (x_j - x_{k+2}), \gamma^{(k+2)}_t \right ].
\ee
It is {\it defocusing} if $\mu =1,$ and {\it focusing} if $\mu= -1.$ We note that the quintic GP hierarchy accounts for $3$-body interactions between the Bose particles (see \cite{CP2011} and references therein for details).

In terms of kernel functions, the Cauchy problem for the quintic GP hierarchy \eqref{eq:GPHierarchyEquaQuintic} can be written as follows
\beq\label{eq:GPHierarchyEquaFunctQuintic}
\left \{ \begin{split}
& \big ( \mathrm{i} \partial_t + \triangle^{(k)} \big ) \gamma^{(k)}_t ({\bf x}_k;{\bf x}'_k) = \mu \big ( Q^{(k)} \gamma^{(k+2)}_t \big ) ({\bf x}_k; {\bf x}'_k ),\\
& \gamma^{(k)}_{t=0} ({\bf x}_k;{\bf x}'_k) = \gamma^{(k)}_0 ({\bf x}_k;{\bf x}'_k),\; k \in \mathbb{N},
\end{split}\right.
\eeq
where $Q^{(k)}: = \sum^k_{j=1} Q^{(k)}_j$ with the action of $Q^{(k)}_j = Q^{(k)}_{j, +} - Q^{(k)}_{j, -}$ on $\gamma^{(k+2)} ({\bf x}_{k+2}, {\bf x}'_{k+2}) \in \mathcal{S} (\mathbb{R}^{(k+2)n} \times \mathbb{R}^{(k+2)n})$ being defined according to
\be\begin{split}
 \big ( Q^{(k)}_{j, +} & \gamma^{(k+2)} \big ) ({\bf x}_k, {\bf x}'_k)\\
& : = \int d x_{k+1} d x_{k+2} d x'_{k+1} d x'_{k + 2} \gamma^{(k+2)} ({\bf x}_k, x_{k+1}, x_{k+2}; {\bf x}'_k, x'_{k+1}, x'_{k+2}) \\
& \; \quad \times \prod^{k+2}_{\el=k+1} \delta (x_j - x_{\el}) \delta (x_j - x'_{\el})\\
& = \gamma^{(k+2)} ({\bf x}_k, x_j, x_j; {\bf x}'_k, x_j, x_j ),
\end{split}\ee
and
\be\begin{split}
 \big ( Q^{(k)}_{j, -} & \gamma^{(k+2)} \big ) ({\bf x}_k, {\bf x}'_k)\\
& : = \int d x_{k+1} d x_{k+2} d x'_{k+1} d x'_{k + 2} \gamma^{(k+2)} ({\bf x}_k, x_{k+1}, x_{k+2}; {\bf x}'_k, x'_{k+1}, x'_{k+2}) \\
& \; \quad \times \prod^{k+2}_{\el=k+1} \delta (x'_j - x_{\el}) \delta (x'_j - x'_{\el})\\
& = \gamma^{(k+2)} ({\bf x}_k, x'_j, x'_j; {\bf x}'_k, x'_j, x'_j ),
\end{split}\ee
for $j=1, \ldots, k.$

Let $\varphi \in \mathrm{H}^1(\mathbb{R}^n),$ then one can easily verify that a particular solution to \eqref{eq:GPHierarchyEquaFunctQuintic} with initial conditions
\be
\gamma^{(k)}_{t=0}({\bf x}_k; {\bf x}'_k) = \prod^k_{j=1} \varphi(x_j) \overline{\varphi(x'_j)},\quad k=1,2,\ldots,
\ee
is given by
\be
\gamma^{(k)}_t ({\bf x}_{k}; {\bf x}'_{k} ) = \prod^k_{j=1} \varphi_t (x_j) \overline{\varphi_t ( x'_j )},\quad k=1,2,\ldots,
\ee
where $\varphi_t$ satisfies the quintic non-linear Schr\"odinger equation
\beq\label{eq:GPEquaQuintic}
\mathrm{i} \partial_t \varphi_t = -\Delta \varphi_t + \mu |\varphi_t|^4 \varphi_t,\quad \varphi_{t=0}=\varphi.
\eeq

The GP hierarchy \eqref{eq:GPHierarchyEquaFunctQuintic} can be written in the integral form
\beq\label{eq:GPHierarchyIntEquaQuintic}
\gamma^{(k)}_t = e^{ \mathrm{i} t{\Delta}^{(k)}}  \gamma^{(k)}_0 + \int^{t}_{0} d s~ e^{ \mathrm{i} (t-s){\Delta}^{(k)}}  \tilde{Q}^{(k)} \gamma^{(k+2)}_s,\; k=1,2,\ldots,
\eeq
where $\tilde{Q}^{(k)} = - \mathrm{i} \mu Q^{(k)}.$ Evidently, such a solution can be obtained by solving the following infinity linear hierarchy of integral equations
\begin{equation}\label{eq:GPStrongSolutionFunctQuintic}
\tilde Q^{(k)}\gamma^{(k+2)}_t = \tilde Q^{(k)}e^{ \mathrm{i} t{\Delta}^{(k+2)}} \gamma^{(k+2)}_0 + \int^{t}_{0} d s~ \tilde{Q}^{(k)}e^{ \mathrm{i} (t-s){\Delta}^{(k+2)}} \tilde{Q}^{(k+2)}\gamma^{(k+4)}_s,
\end{equation}
for any $k\geq 1.$ If we write
\be
\hat \Delta \Gamma:= ( \Delta^{(k)}\gamma^{(k)} )_{k\geq 1} \quad \text{and} \quad \hat{Q} \Gamma: = ( \tilde{Q}^{(k)}\gamma^{(k+2)} )_{k\geq 1} ,
\ee
then \eqref{eq:GPHierarchyIntEquaQuintic} and \eqref{eq:GPStrongSolutionFunctQuintic} can be written as
\begin{equation}\label{Gamma:1}
\Gamma(t)=e^{ \mathrm{i} t\hat{\Delta}}\Gamma_0+\int_0^tds~ e^{ \mathrm{i} (t-s)\hat{\Delta}} \hat{Q} \Gamma(s),
\end{equation}
and
\begin{equation}\label{Gamma:2}
\hat Q\Gamma(t) =\hat Qe^{ \mathrm{i} t\hat{\Delta}}\Gamma_0+\int_0^tds~ \hat Q e^{ \mathrm{i} (t-s)\hat{\Delta}}\hat Q\Gamma(s),
\end{equation}
respectively.

Formally we can expand the solution $\gamma^{(k)}_t$ of \eqref{eq:GPHierarchyIntEquaQuintic}
for any $m \geq 2$ as
\beq\label{eq:DuhamelExpanQuintic}
\begin{split}
\gamma^{(k)}_t = & e^{ \mathrm{i} t{\Delta}^{(k)}}  \gamma^{(k)}_0 + \sum^{m-1}_{j=1} \int^t_0 d t_2 \int^{t_2}_0 d t_4 \cdots
\int^{t_{2(j-1)}}_0 d t_{2 j}  e^{ \mathrm{i} (t-t_2 ){\Delta}^{(k)}} \tilde{Q}^{(k)} \cdots\\
& \; \times e^{ \mathrm{i} ( t_{2 (j-1)} - t_{2 j} ){\Delta}^{(k+2(j-1))}} \tilde{Q}^{(k+2(j-1))}   e^{ \mathrm{i} t_j {\Delta}^{(k+2j)}}\gamma^{(k+2j)}_0\\
& \; + \int^t_0 d t_2 \int^{t_2}_0 d t_4 \cdots \int^{t_{2(m-1)}}_0 d t_{2 m} e^{ \mathrm{i} (t-t_2){\Delta}^{(k)}} \tilde{Q}^{(k)} \cdots \\
& \; \times e^{ \mathrm{i} ( t_{2(m-1)} - t_{2 m} ){\Delta}^{(k+2(m-1))}} \tilde{Q}^{(k+2(m-1))} \gamma^{(k+2m)}_{t_m},
\end{split}
\eeq
with the convention $t_0 =t.$

\subsection{Local well-posedness}

In this subsection, we present the local well-posedness results for the quintic GP hierarchy \eqref{eq:GPHierarchyEquaFunctQuintic}.

Let us make the notion of solution more precise.

\begin{definition}\label{df:StrongSolutionQuintic}
A function $\Gamma (t) = ( \gamma^{(k)}_t )_{k \geq 1}: I \mapsto \mathcal{H}^{s}$ on a non-empty time interval $0 \in I \subset \mathbb{R}$ is said to be a local $(strong)$ solution
to the Gross-Pitaevskii hierarchy \eqref{eq:GPHierarchyEquaFunctQuintic} if it lies in the class $C (K, \mathcal{H}^{s})$ for all compact sets $K \subset I$ and obeys the Duhamel formula
\beq\label{eq:MildSolution}
\gamma^{(k)}_t = e^{\mathrm{i} t \triangle^{(k)}} \gamma^{(k)}_0 - \mathrm{i} \mu \int^{t}_{0} d s\; e^{\mathrm{i} (t-s) \triangle^{(k)}} Q^{(k)} \gamma^{(k+2)}_s,\quad \forall t \in I,
\eeq
holds in $\mathrm{H}^{s}_k$ for every $k=1,2,\ldots.$
\end{definition}

The following theorem was stated in \cite{Chen}.

\begin{theorem}\label{th:LocalWellposuedQuintic1} {\rm (cf. \cite{Chen}, Theorem 5.1)}
Assume that $n \geq 1$ and $s > \frac{n}{2}.$  The Cauchy problem  \eqref{eq:GPHierarchyEquaFunctQuintic} is locally well posed. More precisely, there exists a constant $ K_{ n,s} > 0$ depending only on $n$ and $s$ such that
\begin{enumerate}[{\rm (1)}]

\item For every $\Gamma_0= ( \gamma_0^{(k)} )_{k\geq 1}\in \mathcal{H}^{s},$ we let $T= \frac{K_{n,s}}{\|\Gamma_0\|_{\mathcal{H}^{s}}^2}$ and $I = [-T, T].$ Then there exists a solution $\Gamma(t)= ( \gamma_t^{(k)} )_{k\geq 1}\in C(I,\mathcal{H}^{s})$ to the Gross-Pitaevskii hierarchy \eqref{eq:GPHierarchyEquaFunctQuintic} with the initial data $\Gamma_0$ satisfying
\begin{equation}
\|\Gamma(t)\|_{C(I,\mathcal{H}^{s})}\leq 2 \|\Gamma_0\|_{\mathcal{H}^{s}}
\end{equation}

\item Given $I_0 = [-T_0, T_0 ]$ with $T_0>0.$ If $\Gamma(t),\Gamma'(t)$  in $C(I_0, \mathcal{H}^{s})$ are two solutions to the Gross-Pitaevskii
hierarchy \eqref{eq:GPHierarchyEquaFunctQuintic} with the initial conditions $\Gamma (0) =\Gamma_0$ and  $\Gamma' ( 0 ) = \Gamma_0'$ in $\mathcal{H}^{s},$ respectively, then
\begin{equation}
\|\Gamma(t)-\Gamma'(t)\|_{C(I, \mathcal{H}^{s})} \leq 2 \|\Gamma_0-\Gamma_0'\|_{\mathcal{H}^{s}}
\end{equation}
with $I = [-T, T],$ where
\be
T = \min \left \{ T_0,\; \frac{K_{n,s}}{\|\Gamma(t)-\Gamma'(t)\|_{C(I_0, \mathcal{H}^{s})}^2} \right \}.
\ee
\end{enumerate}
\end{theorem}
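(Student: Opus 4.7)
The plan is to run a contraction-mapping argument in $C(I, \mathcal{H}^{s})$ that exactly parallels the proof of Theorem \ref{th:LocalWellposued-alpha>n/2}, with the quintic collision operator $\hat{Q}$ replacing the cubic operator $\hat{B}$. The critical structural difference is that $Q^{(k)}$ relates level $k$ to level $k+2$ rather than $k+1$, and this 2-step shift interacts with the infimum defining the quasi-norm on $\mathcal{H}^{s}$ to produce the characteristic quadratic dependence $T \sim \|\Gamma_0\|_{\mathcal{H}^{s}}^{-2}$ rather than the linear dependence observed in the cubic case.

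The first task is to establish the pointwise hierarchy-level estimate. Since $s > n/2$ gives Sobolev embedding $\mathrm{H}^{s}(\mathbb{R}^n) \hookrightarrow L^{\infty}(\mathbb{R}^n)$, the evaluation $\gamma^{(k+2)}(\mathbf{x}_k, x_j, x_j; \mathbf{x}'_k, x_j, x_j)$ defining $Q^{(k)}_{j,\pm}\gamma^{(k+2)}$ is controlled as a trace-type operation in the two coincident pairs of variables $(x_{k+1}, x_{k+2})$ and $(x'_{k+1}, x'_{k+2})$. A direct Fourier-side computation combined with the Sobolev multiplier bound gives
\begin{equation*}
\big\| Q^{(k)}_{j,\pm} \gamma^{(k+2)} \big\|_{\mathrm{H}^{s}_k} \le C_{n,s} \big\| \gamma^{(k+2)} \big\|_{\mathrm{H}^{s}_{k+2}}
\end{equation*}
for each $j = 1, \ldots, k$, and summing over $j$ yields $\|Q^{(k)}\gamma^{(k+2)}\|_{\mathrm{H}^{s}_k} \le C_{n,s}\, k\, \|\gamma^{(k+2)}\|_{\mathrm{H}^{s}_{k+2}}$.

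The second task is to lift this into the quasi-norm: I would show that $\|\hat{Q}\Gamma\|_{\mathcal{H}^{s}} \le C'_{n,s} \|\Gamma\|_{\mathcal{H}^{s}}^2$. The mechanism is the two-level shift: writing $\lambda = \|\Gamma\|_{\mathcal{H}^{s}}$ and trying $\mu = A\lambda^2$ in the defining infimum, I bound
\begin{equation*}
\sum_{k \ge 1} \mu^{-k} \big\| Q^{(k)}\gamma^{(k+2)} \big\|_{\mathrm{H}^{s}_k} \le C_{n,s} \mu^2 \sum_{k \ge 1} k\, \mu^{-(k+2)} \big\| \gamma^{(k+2)} \big\|_{\mathrm{H}^{s}_{k+2}},
\end{equation*}
and absorb the combinatorial factor $k$ into a small geometric correction $(\lambda'/\mu)^k$ with $\lambda' > \lambda$ slightly larger than $\lambda$, using the fact that $k(\lambda'/\mu)^k$ is summable when $\mu > \lambda'$. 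Choosing $A$ large enough collapses the remaining sum to $\le 1$. With this estimate in hand, plus the propagator invariance $\|e^{it\hat{\Delta}}\Gamma\|_{\mathcal{H}^{s}} = \|\Gamma\|_{\mathcal{H}^{s}}$ inherited from the individual bounds $\|e^{it\Delta^{(k)}}\gamma^{(k)}\|_{\mathrm{H}^{s}_k} = \|\gamma^{(k)}\|_{\mathrm{H}^{s}_k}$, I set up the Picard map
\begin{equation*}
\Phi(\Gamma)(t) = e^{it\hat{\Delta}}\Gamma_0 + \int_0^t e^{i(t-s)\hat{\Delta}} \tilde{Q}\Gamma(s)\, ds
\end{equation*}
on the closed ball of radius $2\|\Gamma_0\|_{\mathcal{H}^{s}}$ in $C(I, \mathcal{H}^{s})$. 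Using Remark \ref{rk:normL}(3) to pass from $C$-norms to integrated bounds, the Duhamel term is controlled by $\max(1,T) \cdot C'_{n,s} \|\Gamma\|_{C(I,\mathcal{H}^{s})}^2$, and imposing $T \le K_{n,s}/\|\Gamma_0\|_{\mathcal{H}^{s}}^2$ with $K_{n,s}$ sufficiently small makes $\Phi$ a strict contraction on the ball.

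For part (2), I apply the same estimate to the difference: the Duhamel identity yields
\begin{equation*}
\Gamma(t) - \Gamma'(t) = e^{it\hat{\Delta}}(\Gamma_0 - \Gamma'_0) + \int_0^t e^{i(t-s)\hat{\Delta}} \tilde{Q}\bigl(\Gamma(s) - \Gamma'(s)\bigr)\, ds,
\end{equation*}
and the quadratic bound on $\hat{Q}$ applied to $\Gamma - \Gamma'$ (which is itself in $\mathcal{H}^{s}$, with norm controlled by the difference $\|\Gamma(t) - \Gamma'(t)\|_{C(I_0,\mathcal{H}^{s})}$) gives the stated Lipschitz bound, with $T$ chosen to match. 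The main obstacle I anticipate is the careful justification of Step 2: the quasi-norm on $\mathcal{H}^{s}$ is not homogeneous, so the passage from the componentwise bound to the claimed $\|\hat{Q}\Gamma\|_{\mathcal{H}^{s}} \lesssim \|\Gamma\|_{\mathcal{H}^{s}}^2$ requires a precise choice of the auxiliary radius $\lambda'$ and a clean convergence argument for $\sum_k k(\lambda'/\mu)^k$; a subsidiary nuisance is tracking the constants through the non-homogeneous quasi-norm inequalities in Remark \ref{rk:normL} so that the final threshold $K_{n,s}$ depends only on $n$ and $s$.
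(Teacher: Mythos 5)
Your Step 1 (the componentwise bound $\|Q^{(k)}\gamma^{(k+2)}\|_{\mathrm{H}^{s}_k}\le C_{n,s}\,k\,\|\gamma^{(k+2)}\|_{\mathrm{H}^{s}_{k+2}}$) is exactly the inequality the paper builds on, but your Step 2 is where the argument breaks down. The claimed estimate $\|\hat{Q}\Gamma\|_{\mathcal{H}^{s}}\le C'_{n,s}\|\Gamma\|_{\mathcal{H}^{s}}^{2}$ is false, and your mechanism for proving it does not close: with $\lambda=\|\Gamma\|_{\mathcal{H}^{s}}$ and the trial value $\mu=A\lambda^{2}$, the relevant ratio in the shifted sum is $\lambda/\mu=1/(A\lambda)$, which exceeds $1$ whenever $\lambda<1/A$, so $\sum_k k(\lambda/\mu)^k$ diverges precisely in the small-data regime. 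Testing against the extremal sequence $\|\gamma^{(k)}\|_{\mathrm{H}^{s}_k}=\lambda^{k}$ shows what is actually true: for large $\lambda$ the $k=1$ term already forces $\|\hat{Q}\Gamma\|_{\mathcal{H}^{s}}\gtrsim\lambda^{3}$ (cubic, not quadratic), while for small $\lambda$ the tail $k\to\infty$ forces $\|\hat{Q}\Gamma\|_{\mathcal{H}^{s}}\gtrsim\lambda$. The second fact is fatal to the whole contraction strategy, not just to the constant: since $\hat{Q}$ is linear and $\|\hat{Q}\Theta\|_{\mathcal{H}^{s}}\gtrsim\|\Theta\|_{\mathcal{H}^{s}}$ for small $\Theta$, the Lipschitz constant of your Picard map on $C(I,\mathcal{H}^{s})$ is bounded below by $cT$ uniformly in the data, so a one-step contraction can only give $T\lesssim 1$, never the stated $T=K_{n,s}/\|\Gamma_0\|_{\mathcal{H}^{s}}^{2}$, which is large for small data (and must be, to match the $\|\varphi\|_{\mathrm{H}^{s}}^{-4}$ lifespan of the quintic NLS in the factorized case). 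The root cause is the non-homogeneity of the quasi-norm: the genuine gain of a factor $\lambda^{2}$ per application of $\hat{Q}$ is distributed across the tail of the sequence and is invisible to $\|\cdot\|_{\mathcal{H}^{s}}$ applied to a single difference.

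This is why the paper (following \cite{Chen}, Theorem 5.1, and as carried out explicitly in its proof of Theorem \ref{th:LocalWellposuedQuintic2}) does not run a fixed-point argument in $\mathcal{H}^{s}$ but instead uses the fully expanded iterated Duhamel series \eqref{eq:DuhamelExpanQuintic} together with a level-by-level Cauchy convergence criterion. There the $j$-th Duhamel term at level $k$ involves $\gamma_0^{(k+2j)}$ and is bounded by $(C_{n,s}T)^{j}$ times combinatorial factors times $\|\gamma_0^{(k+2j)}\|_{\mathrm{H}^{s}_{k+2j}}\lesssim\lambda^{k+2j}=\lambda^{k}(\lambda^{2})^{j}$; the series in $j$ converges geometrically once $T\lambda^{2}$ is small, which is exactly where $T\sim\|\Gamma_0\|_{\mathcal{H}^{s}}^{-2}$ comes from, and the resummation over $k$ against $\mu^{-k}$ with $\mu=2A\|\Gamma_0\|_{\mathcal{H}^{s}}$ then yields the bound $\|\Gamma(t)\|_{C(I,\mathcal{H}^{s})}\le 2\|\Gamma_0\|_{\mathcal{H}^{s}}$. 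Convergence of the Picard iterates must likewise be established separately in each $\mathrm{H}^{s}_k$ (as in the paper's proof of Theorem \ref{th:LocalWellposuedQuintic2}), not in the quasi-norm of differences. To repair your write-up you should abandon the global quadratic estimate on $\hat{Q}$ and restructure the proof around the full expansion.
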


The proof can be obtained by use of the fully expanded iterated Duhamel series and a Cauchy convergence criterion, based on the following inequality
\be
\| Q^{(k)} \gamma^{(k+2)} \|_{\mathrm{H}^{s}_k} \le C_{n, s} k \| \gamma^{(k+2)} \|_{\mathrm{H}^{s}_{k+2}},\quad \forall k \ge 1,
\ee
with $C_{n, s}>0$ being a constant depending only on $n$ and $s,$ which was proved in \cite{CP2011} (Theorem 4.3 there).

For the case $s \le n/2$ we have

\begin{theorem}\label{th:LocalWellposuedQuintic2}
Assume that $n \ge 2$ and $s > \frac{n-1}{2}.$ Then, the Cauchy problem for the Gross-Pitaevskii hierarchy \eqref{eq:GPHierarchyEquaFunctQuintic}  is locally well posed in $\mathcal{H}^{s}.$ More precisely, there exist an absolute constant $A>2$ and a constant $C=M_{n, s}>0$ depending only on $n$ and $s$ such that
\begin{enumerate}[{\rm (1)}]

\item For every $\Gamma_0 = ( \gamma^{(k)}_{0} )_{k \geq 1} \in \mathcal{H}^{s},$ we let $T = \frac{M_{n, s}}{\| \Gamma_0 \|^4_{\mathcal{H}^{s}}}$ and $I = [- T, T].$ Then there exists a solution $\Gamma (t) = ( \gamma^{(k)} (t) )_{k \geq 1} \in C ( I, \mathcal{H}^{s})$ to \eqref{eq:GPHierarchyEquaFunctQuintic} with the initial data $\Gamma (0) = \Gamma_0$ such that
\begin{equation}\label{eq:SpacetimeEstimateQuintic}
\| \hat{Q} \Gamma (t) \|_{L^1_{t \in I} \mathcal{H}^{s}} \leq 2 A \| \Gamma_0 \|_{\mathcal{H}^{s}}.
\end{equation}

\item Given $I_0 = [- T_0, T_0]$ with $T_0 >0.$ If $\Gamma (t) \in C (I_0, \mathcal{H}^{s})$ so that $\hat{Q} \Gamma (t) \in L^1_{t \in I_0} \mathcal{H}^{s}$ is a solution to \eqref{eq:GPHierarchyEquaFunctQuintic} with the initial data $\Gamma (0) = \Gamma_0,$ then \eqref{eq:SpacetimeEstimateQuintic} holds true as well for $I = [-T, T],$ where
\be
T = \min \left \{ T_0, \; \frac{M_{n, s}}{ \| \hat{Q} \Gamma (t) \|^4_{L^1_{t \in I_0} \mathcal{H}^{s}} + \| \Gamma_0 \|^4_{\mathcal{H}^{s}} } \right \}.
\ee

\item Given $I_0 = [-T_0, T_0]$ with $T_0 >0.$ If $\Gamma (t)$ and $\Gamma' (t)$ in $C ( I_0, \mathcal{H}^{s})$ with $\hat{Q} \Gamma (t), \hat{Q} \Gamma' (t) \in L^1_{t \in I_0}\mathcal{H}^{s}$ are two solutions to \eqref{eq:GPHierarchyEquaFunctQuintic} with initial conditions $\Gamma (0) = \Gamma_0$ and $\Gamma' (0) = \Gamma'_0$ in $\mathcal{H}^{s},$ respectively, then
\begin{equation}\label{eq:InitialValueContinuousDependence}
\| \Gamma (t) - \Gamma' (t) \|_{C( I, \mathcal{H}^{s})} \leq (1+ 2 A) \| \Gamma_0 - \Gamma'_0\|_{\mathcal{H}^{s}},
\end{equation}
with $I = [-T, T],$ where
\be
T = \min \left \{ T_0,\; \frac{M_{n, s}}{ \| \hat{Q} [\Gamma (t) - \Gamma' (t) ]\|^4_{L^1_{t \in I_0} \mathcal{H}^{s}} + \| \Gamma_0 - \Gamma'_0 \|^4_{\mathcal{H}^{s}} } \right \}.
\ee
\end{enumerate}
\end{theorem}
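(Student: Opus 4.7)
The plan is to follow the cubic template of Theorem~\ref{th:LocalWellposued-alpha>(n-1)/2} essentially line by line, substituting the collision operator $\hat{B}$ by the quintic operator $\hat{Q}$ and carefully tracking the fact that $\hat{Q}$ shifts the density level by $2$ instead of $1$. The only genuinely PDE-flavoured input I would need is the quintic Klainerman--Machedon type space-time estimate of Chen--Pavlovi\'c \cite{CP2011}: for $s > (n-1)/2$ and $n \ge 2$ there exists $C_{n,s}>0$ such that, for every $k \ge 1$,
\begin{equation*}
\|Q^{(k)} e^{\mathrm{i} t \triangle^{(k+2)}} \gamma^{(k+2)}\|_{L^1_{t \in [-T,T]} \mathrm{H}^{s}_k} \le C_{n,s}\, T^{1/2}\, k\, \|\gamma^{(k+2)}\|_{\mathrm{H}^{s}_{k+2}},
\end{equation*}
together with its inhomogeneous Duhamel counterpart bounding the $L^1_{t\in[-T,T]}\mathrm{H}^{s}_k$--norm of $Q^{(k)}\int_0^t e^{\mathrm{i}(t-\tau)\triangle^{(k+2)}}\tilde Q^{(k+2)}\gamma^{(k+4)}_\tau\,d\tau$ by $C_{n,s}T^{1/2} k$ times $\|\tilde Q^{(k+2)}\gamma^{(k+4)}\|_{L^1_{t}\mathrm{H}^{s}_{k+2}}$. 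Everything else is bookkeeping in the quasi-norm $\mathcal{H}^{s}$.

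Next I would upgrade these per-level estimates to $\mathcal{H}^{s}$. Writing $\lambda := \|\Gamma_0\|_{\mathcal{H}^{s}}$ and testing against the weight $\mu = c \lambda$ for an absolute constant $c > 1$, the key computation is
\begin{equation*}
\sum_{k\ge 1} \mu^{-k}\,\|Q^{(k)} e^{\mathrm{i} t \triangle^{(k+2)}} \gamma^{(k+2)}_0\|_{L^1_t \mathrm{H}^{s}_k}
\le C_{n,s} T^{1/2} \mu^{2} \sum_{j \ge 3} \mu^{-j}(j-2)\|\gamma^{(j)}_0\|_{\mathrm{H}^{s}_j}
\le C'_{n,s} T^{1/2}\lambda^{2}.
\end{equation*}
The prefactor $\mu^{2}$ (in place of the single $\mu$ of the cubic analysis) is exactly what forces the quartic endpoint $T \lesssim \lambda^{-4}$; taking $T = M_{n,s}/\lambda^{4}$ makes the right-hand side at most $1$, and so yields an absolute constant $A > 2$ with
\begin{equation*}
\|\hat{Q} e^{\mathrm{i} t \hat{\triangle}} \Gamma_0\|_{L^1_{t\in I} \mathcal{H}^{s}} \le A\, \|\Gamma_0\|_{\mathcal{H}^{s}}, \qquad I = [-T, T].
\end{equation*}
The same manipulation applied to the Duhamel term shows that $\hat{Q}\int_0^t e^{\mathrm{i}(t-\tau)\hat{\triangle}} \Phi(\tau)\,d\tau$ is controlled in $L^1_{t \in I} \mathcal{H}^{s}$ by at most $\tfrac12 \|\Phi\|_{L^1_{t\in I}\mathcal{H}^{s}}$ on the same interval, provided $T$ is of the claimed size.

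With these two bounds secured, I would close the argument by running a contraction scheme on
\begin{equation*}
\mathcal{X}_T := \Big\{\Phi \in L^1_{t\in I} \mathcal{H}^{s} :\; \|\Phi\|_{L^1_{t\in I} \mathcal{H}^{s}} \le 2 A\, \|\Gamma_0\|_{\mathcal{H}^{s}}\Big\}
\end{equation*}
via $(\mathcal{T}\Phi)(t) := \hat{Q} e^{\mathrm{i} t \hat{\triangle}} \Gamma_0 + \int_0^t \hat{Q} e^{\mathrm{i} (t-\tau)\hat{\triangle}} \Phi(\tau)\,d\tau$. A fixed point $\Phi = \hat{Q}\Gamma$ recovers a solution through $\Gamma(t) = e^{\mathrm{i} t \hat{\triangle}} \Gamma_0 + \int_0^t e^{\mathrm{i}(t-\tau)\hat{\triangle}} \Phi(\tau)\,d\tau \in C(I, \mathcal{H}^{s})$, giving part (1) and the space-time bound \eqref{eq:SpacetimeEstimateQuintic}. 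Part (2) follows by rerunning the same contraction with the horizon shrunk in terms of $\|\hat{Q}\Gamma\|_{L^1_{t\in I_0}\mathcal{H}^{s}}^{4} + \|\Gamma_0\|_{\mathcal{H}^{s}}^{4}$ rather than $\lambda^{4}$, treating the restriction of $\hat{Q}\Gamma$ to $I_0$ as a known source; part (3) follows by applying the same difference estimates to $\Gamma - \Gamma'$, which solves the same Duhamel equation with data $\Gamma_0 - \Gamma_0'$. The only genuinely hard step is Step 1, the KM-type estimate for $Q^{(k)}$, which is imported from \cite{CP2011}; the rest is the literal quintic analogue of the argument in \cite{Chen}, with the single substantive change being the $\mu^{2}$ accounting that promotes the quadratic scaling of $T$ to quartic scaling in the initial-data norm.
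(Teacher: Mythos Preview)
Your plan has a genuine gap at the heart of the contraction step. You correctly import the Klainerman--Machedon estimate with its factor $k$ (coming from $Q^{(k)}=\sum_{j=1}^k Q^{(k)}_j$), and your free-term computation is fine: testing at a parameter $\mu=c\lambda$ with $c>1$ absorbs the polynomial weight $(j-2)$ at the cost of \emph{enlarging} the output quasi-norm to $c\lambda>\lambda$. But then ``the same manipulation applied to the Duhamel term'' cannot possibly yield a factor $\tfrac12$. Concretely, if $\|\Psi\|_{L^1_{t\in I}\mathcal{H}^{s}}=\nu$, the single-step bound gives
\[
\sum_{k\ge1}\mu^{-k}\Big\|\tilde Q^{(k)}\!\!\int_0^t e^{\mathrm{i}(t-\tau)\triangle^{(k+2)}}\psi^{(k+2)}\Big\|_{L^1_t\mathrm H^s_k}
\;\le\; C_{n,s}T^{1/2}\,\mu^{2}\sum_{j\ge3}(j-2)\,\mu^{-j}\|\psi^{(j)}\|_{L^1_t},
\]
and to control $\sum_{j}(j-2)\mu^{-j}\|\psi^{(j)}\|$ by $\sum_j\nu^{-j}\|\psi^{(j)}\|\le1$ you are forced to take $\mu>\nu$; no choice of $T$ removes the unbounded factor $(j-2)$. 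Equivalently, working in the genuine Banach norm $N_\mu(\Phi)=\sum_k\mu^{-k}\|\phi^{(k)}\|_{L^1_t}$ for fixed $\mu$, contraction would require $C_{n,s}T^{1/2}\mu^{2}(j-2)\le\tfrac12$ uniformly in $j$, which is impossible. So $\mathcal T$ is not a contraction on $\mathcal X_T$, and the fixed-point argument as written does not close.

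The paper handles this not by a one-step contraction but by expanding the Duhamel series to arbitrary depth $m$ and invoking the Klainerman--Machedon \emph{board game} combinatorics (Lemma~\ref{le:DuhamelEstimate}): the $j$-fold iterate $\tilde Q^{(k)}P_{k+2,j}$ is bounded by $k\,A^{k+j}(C_{n,s}T)^{(j+1)/2}$ rather than by the naive product $k(k+2)\cdots(k+2j)$ that repeated application of the single-step estimate would give. With that exponential-in-$j$ bound in hand, the paper sums over $k$ and $j$ directly, obtains the uniform estimate \eqref{eq:SpacetimeEstimateQuintic} for the Picard iterates $\Xi_m$, and then shows $(\rho_m^{(k)})_m$ is Cauchy in $L^1_{t\in I}\mathrm H^s_k$ for each $k$, bypassing any contraction in the quasi-norm. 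Parts~(2) and~(3) then follow by the same expansion plus the remainder estimate \eqref{eq:QoperatorEstimation2}. In short, the missing ingredient in your proposal is precisely the board-game reduction; without it the factor $k$ in the collision estimate obstructs any fixed-point scheme in $L^1_{t\in I}\mathcal H^s$.
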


This result was announced in \cite{Chen}. For the sake of completeness, we present the proof. To this end, we need two preliminary results

\begin{lemma}\label{le:QOperatorEstimate}
Assume that $n\geq 2$ and $s> \frac{n-1}{2}.$ Then there exists a constant $C_{n,s}>0$ depending only on $n$ and $s$ such that, for any symmetric $\gamma^{(k+2)}\in {\mathcal S}({\mathbb R}^{(k+2)n}\times{\mathbb R}^{(k+2)n} ),$
\begin{equation}
\|Q^{(k)}_{j} e^{\mathrm{i} t\Delta^{(k+2)}}\gamma^{(k+2)}\|_{L_t^2({\mathrm H}_k^{s})}\leq C_{n,s}\|\gamma^{(k+2)}\|_{{\mathrm H}_{k+2}^{s}}
\end{equation}
for all $k\geq 1,$ where $j=1,2,\ldots,k.$

Consequently, $Q^{(k)}$ can be extended to the space ${\mathrm H}_{k+2}^{s}$ such that
\begin{equation}
\|Q^{(k)}e^{\mathrm{i} t\Delta^{(k+1)}}\gamma^{(k+2)}\|_{L_t^2({\mathrm H}_k^{s})}\leq C_{n,s}k\|\gamma^{(k+2)}\|_{{\mathrm H}_{k+2}^{s}}
\end{equation}
for all $\gamma^{(k+2)}\in {\mathrm H}_{k+2}^{s}.$
\end{lemma}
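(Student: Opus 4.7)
The plan is to transplant the Klainerman--Machedon space-time approach to the quintic (three-body) collision operator. By the permutation symmetry of $\gamma^{(k+2)}$ it suffices to prove the bound for $j=1$: a relabeling of variables yields the same constant for every $j$, and the consequence $\|Q^{(k)}e^{\mathrm{i} t\Delta^{(k+2)}}\gamma^{(k+2)}\|_{L^{2}_{t}\mathrm{H}^{s}_{k}}\leq C_{n,s}k\|\gamma^{(k+2)}\|_{\mathrm{H}^{s}_{k+2}}$ then follows from $Q^{(k)}=\sum_{j=1}^{k}(Q^{(k)}_{j,+}-Q^{(k)}_{j,-})$ and the triangle inequality.

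First I would pass to momentum space. On the Fourier side $e^{\mathrm{i} t\Delta^{(k+2)}}$ acts by multiplication by $e^{-\mathrm{i} t(|\mathbf{q}_{k+2}|^{2}-|\mathbf{q}'_{k+2}|^{2})}$, and a direct computation analogous to the one performed in Section~\ref{EnergyVirialIdentity} for $B^{(k)}$ shows that $[Q^{(k)}_{1,+}e^{\mathrm{i} t\Delta^{(k+2)}}\gamma^{(k+2)}](\mathbf{p}_k;\mathbf{p}'_k)$ is an integral over four auxiliary frequencies $q_{k+1},q_{k+2},q'_{k+1},q'_{k+2}$ of $\gamma^{(k+2)}$ evaluated with its first unprimed argument shifted by $-(q_{k+1}+q_{k+2}-q'_{k+1}-q'_{k+2})$ and multiplied by $e^{-\mathrm{i} t\Psi}$, where $\Psi$ is an explicit real quadratic polynomial in all frequencies. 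Applying Plancherel in $t$ to $\|\cdot\|_{L^{2}_{t}\mathrm{H}^{s}_{k}}^{2}$ converts this oscillatory factor into a delta distribution $\delta(\tau-\Psi)$.

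Next I would absorb the Sobolev weights into $u:=S^{(k+2)}_{s}\gamma^{(k+2)}\in L^{2}$ and apply Cauchy--Schwarz in the four auxiliary variables. This separates the estimate into $\|u\|_{L^{2}}^{2}=\|\gamma^{(k+2)}\|_{\mathrm{H}^{s}_{k+2}}^{2}$ multiplied by the geometric supremum
\[
K:=\sup_{\tau\in\mathbb{R},\,p_{1}\in\mathbb{R}^{n}}\int\frac{\delta(\tau-\Psi_{0}(q,p_{1}))\,dq_{k+1}\,dq_{k+2}\,dq'_{k+1}\,dq'_{k+2}}{\langle q_{k+1}\rangle^{2s}\langle q_{k+2}\rangle^{2s}\langle q'_{k+1}\rangle^{2s}\langle q'_{k+2}\rangle^{2s}\langle p_{1}+q_{k+1}+q_{k+2}-q'_{k+1}-q'_{k+2}\rangle^{2s}},
\]
where $\Psi_{0}$ is obtained from $\Psi$ by absorbing the contributions of the spectator variables $p_{2},\ldots,p_{k},p'_{1},\ldots,p'_{k}$ into a shift of $\tau$. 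Crucially $K$ is independent of $k$, which is why the final constant is $k$-independent.

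The main obstacle is establishing $K<\infty$, i.e.\ the quintic analogue of the Klainerman--Machedon bilinear Strichartz estimate. Resolving the delta function eliminates one integration variable and localizes the remaining three onto a hypersurface; the worst radial direction along this surface reduces to an integral of the form $\int_{1}^{\infty}\langle r\rangle^{-2s}r^{n-2}\,dr$, which converges exactly when $2s>n-1$, thereby explaining the hypothesis $s>\tfrac{n-1}{2}$. This geometric inequality is precisely the trilinear space-time bound established by Chen and Pavlovi\'{c} \cite{CP2011} for the quintic collision operator, and the same argument applies here without modification. Once $K$ is controlled, Plancherel in $t$ closes the estimate with a constant $C_{n,s}$ depending only on $n$ and $s$.
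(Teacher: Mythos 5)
The paper does not actually prove this lemma: its entire ``proof'' is the sentence citing \cite{KM2008} and \cite{CP2010} (Proposition A.1) --- which are in fact the \emph{cubic} references for $B^{(k)}_j$; the correct source for the quintic collision operator is \cite{CP2011}, which you identify. In substance your proposal follows the same route: symmetry reduction to $j=1$, Fourier representation, Plancherel in $t$ to produce the resonance measure $\delta(\tau-\Psi)$, Cauchy--Schwarz, and then an appeal to the literature for the finiteness of the resulting surface integral. So both you and the paper ultimately rest on the cited spacetime estimate; you merely add a sketch of how that estimate goes, and your citation is arguably the more accurate one.

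Two points in the sketch would need repair to make it self-contained. First, the quantity $K$ you display is not the one whose finiteness is needed: the left-hand side of the estimate carries the weight $\langle p_1\rangle^{2s}$, while after writing $\gamma^{(k+2)}=(S^{(k+2)}_s)^{-1}u$ the integrand only supplies $\langle p_1-q_{k+1}-q_{k+2}+q'_{k+1}+q'_{k+2}\rangle^{-s}$ in that slot, so Cauchy--Schwarz forces one to control
\[
\sup_{\tau,\,p_1}\;\langle p_1\rangle^{2s}\int\frac{\delta(\tau-\Psi_0)\,dq_{k+1}\,dq_{k+2}\,dq'_{k+1}\,dq'_{k+2}}{\langle p_1-q_{k+1}-q_{k+2}+q'_{k+1}+q'_{k+2}\rangle^{2s}\,\langle q_{k+1}\rangle^{2s}\langle q_{k+2}\rangle^{2s}\langle q'_{k+1}\rangle^{2s}\langle q'_{k+2}\rangle^{2s}}.
\]
The numerator $\langle p_1\rangle^{2s}$ is precisely what makes the Klainerman--Machedon bound nontrivial (it must be distributed among the five denominator weights before integrating over the resonance surface); bounding the $K$ you wrote, which omits it, would not close the estimate. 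Second, the claim that the cubic argument ``applies without modification'' is too quick: here the surface integral lives over a codimension-one hypersurface in $\mathbb{R}^{4n}$ with five weights rather than $\mathbb{R}^{2n}$ with three, so the dimension count verifying that $s>\tfrac{n-1}{2}$ suffices is genuinely different from the cubic one --- that verification is exactly the content of the appendix of \cite{CP2011}, which is why the appeal to that reference is doing all the work in your argument, just as the (mis-)citation does in the paper's.
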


This lemma was proved first for $n=3$ in \cite{KM2008}, and then in \cite{CP2010} (Proposition A.1 there) for general case.

For any $\Gamma = (\gamma^{(k)}_t )_{k \ge 1}$ we define
\be\begin{split}
P_{k+2, j} (\Gamma) (t) : = & \int^t_0 d t_2 \int^{t_2}_0 d t_4 \cdots \int^{t_{2(j-1)}}_0 d t_{2 j} e^{\mathrm{i} (t - t_2) \triangle^{(k+2)}} \tilde{Q}^{(k+2)} \cdots \\
& \times e^{\mathrm{i} (t_{2(j-1)} - t_{2 j}) \triangle^{(k+ 2 j)}} \tilde{Q}^{(k+ 2 j)} e^{\mathrm{i} t_{2 j} \triangle^{(k+2(j+1))}} \gamma^{(k+2(j+1))}_{t_{2 j}}
\end{split}\ee
with the convention $t = t_0.$

The following lemma is crucial for the proof of Theorem \ref{th:LocalWellposuedQuintic2}.

\begin{lemma}\label{le:DuhamelEstimate}
Assume that $n\geq 2$ and $s> \frac{n-1}{2}.$  Then there exists an absolute constant $A>2$ and a constant $C_{n,s}$ depending only on $n$ and $s$ so that the estimates below hold
\begin{enumerate}[{\rm (1)}]

\item  For any $\Gamma_0 = ( \gamma_0^{(k)} )_{k\geq 1}\in \bigotimes_{k=1}^{\infty} {\mathrm H}_{k}^{s},$
\beq\label{eq:QoperatorEstimation1}
\|\tilde{Q}^{(k)} P_{k+2, j} ( e^{\mathrm{i} t \Delta} \Gamma_0) (t)\|_{L_{t\in I}^1{\mathrm H}_{k}^{s}}\leq kA^{k+j}(C_{n,s}T)^{\frac{j+1}{2}}\|\gamma^{(k+2j+2)}_0\|_{{\mathrm H}_{k+2j+2}^{s}}
\eeq
for $k,j\geq 1$ and $T>0,$ where $I =[-T, T]$ and $ e^{\mathrm{i} t \Delta} \Gamma_0 = ( e^{\mathrm{i} t\Delta^{(k)}}\gamma_0^{(k)} )_{k\geq 1}.$

\item  For any $T>0$ and $\Gamma(t)= ( \gamma^{(k)}_t )_{k \geq 1}$  with $\gamma^{(k)}_t \in L_{t \in [-T,T]}^1{\mathrm H}_{k}^{s},$
\beq\label{eq:QoperatorEstimation2}
\begin{split}
\|\tilde{Q}^{(k)}& P_{k+2, m} (\Gamma )(t)\|_{L_{t \in I}^1{\mathrm H}_{k}^{s}}\\ &\leq kA^{k+m}(C_{n,s}T)^{\frac{m}{2}}\|Q^{(k+2m)}\gamma^{(k+2m+2)}(t)\|_{L_{t \in I}^1{\mathrm H}_{k+2m}^{s}},
\end{split}
\eeq
for $k,m\geq 1,$ where $I =[-T, T].$

\end{enumerate}
\end{lemma}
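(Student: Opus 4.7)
The plan is to mimic the proof of the analogous estimate for the cubic case (which underlies Theorem \ref{th:LocalWellposued-alpha>(n-1)/2}), adapting it to the structure of $Q^{(k)}$, which raises the hierarchy index by two rather than one. The main engine is Lemma \ref{le:QOperatorEstimate}---the collapsing $L^2_t H^s_k$ bound for $\tilde Q^{(k)} e^{it\Delta^{(k+2)}}$---supplemented by Cauchy--Schwarz in time.

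First I would fully expand $\tilde Q^{(k)} P_{k+2,j}(\Gamma)(t)$ into a sum of $\prod_{\ell=0}^{j}(k+2\ell)$ nested-time-integral terms, one for each choice of indices $j_\ell \in \{1,\dots,k+2\ell\}$ specifying on which particle $Q^{(k+2\ell)}_{j_\ell,\pm}$ acts at the $\ell$-th level. I would then invoke a quintic analog of the Klainerman--Machedon board-game argument (as worked out for this hierarchy in \cite{CP2011}): by exploiting the symmetry of the innermost kernel in its particle labels and reshuffling the nested time integrals, this sum collapses to a sum of at most $A^{k+j}$ representative terms, each of the same form with a canonical contraction pattern.

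Next I would estimate each representative by iterating Lemma \ref{le:QOperatorEstimate}. Starting from the outermost integral, the bound $\|f\|_{L^1_{t\in I}} \le (2T)^{1/2}\|f\|_{L^2_{t\in I}}$ combined with Lemma \ref{le:QOperatorEstimate} applied to $\tilde Q^{(k)}_{j_0} e^{i(t-t_2)\Delta^{(k+2)}}(\cdot)$ yields a factor $(C_{n,s}T)^{1/2}$ and moves the argument down to $H^s_{k+2}$ evaluated at the fixed time $t_2$. Peeling off further $\tilde Q$'s in the same way produces additional factors $(C_{n,s}T)^{1/2}$. In case (1) the innermost object is $e^{it_{2j}\Delta^{(k+2(j+1))}}\gamma^{(k+2(j+1))}_0$, so one final application of Lemma \ref{le:QOperatorEstimate} to $\tilde Q^{(k+2j)}$ yields the $(j{+}1)$-st factor $(C_{n,s}T)^{1/2}$ and leaves $\|\gamma^{(k+2(j+1))}_0\|_{H^s_{k+2(j+1)}}$. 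In case (2), the innermost object is not a free Schr\"odinger evolution of a fixed datum, so I would stop one step earlier: after $m$ peelings, the remaining integral over $t_{2m}$ is estimated by its $L^1_{t_{2m}\in I}H^s_{k+2m}$ norm, producing $(C_{n,s}T)^{m/2}\|Q^{(k+2m)}\gamma^{(k+2m+2)}(t)\|_{L^1_{t\in I}H^s_{k+2m}}$. The outer factor $k$ is the cardinality of the sum $Q^{(k)}=\sum_{j_0=1}^{k} Q^{(k)}_{j_0}$, while the remaining combinatorial weight is absorbed by the $A^{k+j}$ (resp.\ $A^{k+m}$) from the board-game step.

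I expect the main obstacle to be the board-game reduction itself: because each application of $Q$ in the quintic hierarchy introduces two new variables instead of one, verifying that the symmetries of $\gamma^{(k+2(j+1))}$ still permit the combinatorial sum over contraction patterns to collapse to a geometric bound $A^{k+j}$ requires careful bookkeeping of the equivalence classes of orderings. Once this combinatorial step is in hand, the analytic part is a routine iteration of Lemma \ref{le:QOperatorEstimate} together with Cauchy--Schwarz, and the exact powers of $T$ fall out as described.
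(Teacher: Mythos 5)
Your proposal is correct and follows exactly the route the paper intends: the paper's own ``proof'' of this lemma is a two-line citation of the Klainerman--Machedon board-game argument and of Theorem 6.2 in \cite{CP2011} (cf.\ Proposition A.2 in \cite{CP2010}), and what you write out --- full expansion of the Duhamel iterate, collapse of the combinatorial sum to $O(A^{k+j})$ representatives via the quintic board game, then iteration of Lemma \ref{le:QOperatorEstimate} with Cauchy--Schwarz in time to produce the factors $(C_{n,s}T)^{1/2}$, stopping one peeling early in case (2) --- is precisely the deferred argument. You also correctly identify the board-game bookkeeping for the two-variable contractions as the only genuinely delicate step.
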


\begin{proof}
The inequalities \eqref{eq:QoperatorEstimation1} and \eqref{eq:QoperatorEstimation2} can be proved by using the so-called "board game" argument presented in \cite{KM2008}. For the details see the proof of Theorem 6.2 in \cite{CP2011} (e.g., Proposition A.2 in \cite{CP2010}).
\end{proof}

Now we are ready to prove Theorem \ref{th:LocalWellposuedQuintic2}. To this end, we introduce the system
\begin{equation}
\Gamma(t)=e^{\mathrm{i} t\hat{\Delta}}\Gamma_0+\int_0^tds~ e^{\mathrm{i} (t-s)\hat{\Delta}}\Xi_s
\end{equation}

\begin{equation}\label{eq:GPStrongSolutionFunctXi}
\Xi_t=\hat Qe^{\mathrm{i} t\hat{\Delta}}\Gamma_0+\int_0^tds~ \hat Q e^{\mathrm{i} (t-s) \hat{\Delta}} \Xi_s
\end{equation}
which is formally equivalent to the system  \eqref{Gamma:1} and \eqref{Gamma:2}.

The proof is divided into three parts as follows.

\begin{proof}
(1).\; Let  $n\geq 2$ and $s>(n-1)/2.$ Let $\Gamma_0=\{\gamma_0^{(k)}\}_{k\geq 1}\in \mathcal{H}^{s}$ and
$\Xi_0=\{\rho_0^{(k)}(t)\}_{k\geq 1}=0.$ Given $k\geq 1,$ for any $m\geq 1$ we define
\begin{equation}\label{eq:m-iteraPf}
\rho_m^{(k)}(t)= \tilde Q^{(k)}e^{\mathrm{i} t{\Delta}^{(k+2)}} \gamma^{(k+2)}_0 + \int^{t}_{0} d s~ \tilde{Q}^{(k)}e^{\mathrm{i} (t-s){\Delta}^{(k+2)}}\rho_{m-1}^{(k+2)}(s),
\end{equation}
for $t \in I =[-T , T],$ where $T > 0$ will be fixed later. Set $\Xi_m(t)=\{\rho_m^{(k)}(t)\}_{k\geq 1}$ for every
$m\geq 1.$ By iterating \eqref{eq:m-iteraPf}, for every $m\geq 2$ one has
\be
\begin{split}
\rho_m^{(k)} (t) = \sum^{m-1}_{j=0}  \tilde Q^{(k)} P_{k+2, j} (e^{\mathrm{i} t \Delta} \Gamma_0) (t)
\end{split}
\ee
with the convenience $P_{k+2, 0} ( e^{\mathrm{i} t \Delta} \Gamma_0)(t) = e^{\mathrm{i} t\Delta^{(k+2)}}\gamma_0^{(k+2)}.$ By Lemma \ref{le:DuhamelEstimate} (1) we have
\be
\begin{split}
\Big \|\sum_{j=0}^{m-1} & \tilde{Q}^{(k)} P_{k+2, j} (e^{\mathrm{i} t \Delta} \Gamma_0) (t) \Big \|_{L_{t \in I}^1{\mathrm H}_{k}^{s}}\\
&\leq k
\sum_{j=0}^{m-1}A^{k+j}(\sqrt{C_{n,s}T})^{j+1}\|\gamma_0^{(k+2j+2)}\|_{{\mathrm H}_{k+2j+2}^{s}}
\end{split}
\ee
i.e.,
\begin{equation}\label{pho}
\|\rho_m^{(k)}\|_{L_{t\in t}^1{\mathrm H}_{k}^{s}}\leq k
\sum_{j=0}^{m-1}A^{k+j}(\sqrt{C_{n,s}T})^{j+1}\|\gamma_0^{(k+2j+2)}\|_{{\mathrm H}_{k+2j+2}^{s}}.
\end{equation}
Set $T : = \frac{M_{n, s}}{\|\Gamma_0\|_{\mathcal{H}^{s}}^4}$ with $M_{n, s} = \frac{1}{8C_{n,s}A^2}.$  For $\lambda>0$ one has by \eqref{pho}
\be\begin{split}
\sum_{k\geq 1}&\frac{1}{\lambda^k}\|\rho_m^{(k)}\|_{L_{t\in I}^1{\mathrm H}_{k}^{s}}\\
&\leq \frac{1}{A} \sum_{k\geq 1}k\Big(\frac{ A}{\lambda}\Big)^k\sum_{j=0}^{m-1}(A\sqrt{C_{n,s}T})^{j+1}\|\gamma_0^{(k+2j+2)}\|_{{\mathrm H}_{k+2j+2}^{s}}\\
&\leq\frac{1}{A} \sum_{k\geq 1}k\Big(\frac{ A}{\lambda}\Big)^k\sum_{\el=k+2}^{\infty}(A\sqrt{C_{n,s}T})^{\frac{\el - k}{2}}\|\gamma_0^{(\el)}\|_{{\mathrm H}_{\el}^{s}}\\
&= \frac{1}{A}\sum_{k\geq 1}k\Big(\frac{ A}{\lambda}\Big)^k\sum_{\el=k+2}^{\infty}\frac{1}{(\|\Gamma_0\|_{\mathcal{H}^{s}})^{\el-k}}\|\gamma_0^{(\el)}\|_{{\mathrm H}_{\el}^{s}}\\
&\leq\frac{1}{A} \sum_{k\geq 1}k\Big(\frac{A}{\lambda} \Big)^ k (\|\Gamma_0\|_{\mathcal{H}^{s}})^{k} \sum_{\el=1}^{\infty} \frac{1}{(2\|\Gamma_0 \|_{\mathcal{H}^{s}})^{\el}} \|\gamma_0^{(\el)}\|_{{\mathrm H}_{\el}^{s}}\\
&\leq\frac{1}{A} \sum_{k\geq 1}k \Big(\frac{A\|\Gamma_0\|_{\mathcal{H}^{s}} }{\lambda}\Big)^k =\frac{1}{A}\frac{\xi}{(1-\xi)^2},
\end{split}
\ee
with $\xi= A\|\Gamma_0\|_{\mathcal{H}^{s}}/\lambda.$  Note that $A>2,$ choosing $\xi=\frac{1}{2},$ in other words, $\lambda=2A\|\Gamma_0\|_{\mathcal{H}^{s}},$ we have
\be
\sum_{k\geq 1}\frac{1}{\lambda^k}\|\rho_m^{(k)}\|_{L_{t\in I}^1{\mathrm H}_{k}^{s}}
\leq\frac{1}{A}\frac{\xi}{(1-\xi)^2}\leq 1.
\ee
This concludes that for every $m \ge 1,$ $\Xi_m \in L^1_{t \in I} \mathcal{H}^{s}$ and
\begin{equation}\label{eq:m-iteraApriorBound}
\| \Xi_m \|_{L^1_{t \in I} \mathcal{H}^{s}} \le 2 A \| \Gamma_0 \|_{\mathcal{H}^{s}}.
\end{equation}

Now, for fixed $k \ge 1$ and any $n, m$ with $n > m$ we have
\be
\begin{split}
\| \rho^{(k)}_m & - \rho^{(k)}_n \|_{L^1_{t \in I} \mathrm{H}^{s}_k}\\
\le & \frac{k}{A}\sum^{n-1}_{j=m} A^{k+j} ( \sqrt{C_{n, s} T})^{j+1} \big \| \gamma^{(k+2j+2)}_0 \big \|_{\mathrm{H}^{s}_{k+2j+2}}\\
\le & \frac{k}{A}( A \| \Gamma_0 \|_{\mathcal{H}^{s}})^k \sum_{\el \ge k+2m+4} \frac{1}{( \| \Gamma_0 \|_{\mathcal{H}^{s}})^{\el}} \big \| \gamma^{(\el)}_0 \big \|_{\mathrm{H}^{s}_{l}}.
\end{split}
\ee
This concludes that for each $k \ge 1,$ $\rho^{(k)}_m$ converges in $L^1_{t \in I} \mathrm{H}^{s}_k$ as $m \to \8,$
whose limitation is denoted by $\rho^{(k)}.$

Set $\Xi (t) = \{\rho^{(k)} (t) \}_{k \ge 1}.$ Note that for any $m \ge 1,$
\be
\begin{split}
\Big \| \int^t_0 d s & \tilde{Q}^{(k)}  e^{\mathrm{i} (t-s) \triangle^{(k+2)}} [ \rho^{(k+2)}_{m-1} (s) - \rho^{(k+2)} (s) ]\Big \|_{L^1_{t \in I} \mathrm{H}^{s}_k}\\
\le & \sum^k_{\el =1} \int^T_{-T} \int^T_{-T} d t d s \big \| Q_{\el, k} e^{\mathrm{i} (t-s) \triangle^{(k+2)}} [ \rho^{(k+2)}_{m-1} (s) - \rho^{(k+2)} (s) ] \big \|_{\mathrm{H}^{s}_k}\\
\le & T^{1/2} \sum^k_{\el =1} \int^T_{-T} d s \big \| Q_{\el, k} e^{\mathrm{i} (t-s) \triangle^{(k+2)}} [ \rho^{(k+2)}_{m-1} (s) - \rho^{(k+2)} (s) ] \big \|_{L^2_{t \in I}\mathrm{H}^{s}_k}\\
\le & C_{n, s} k T^{1/2} \big \| \rho^{(k+2)}_{m-1} - \rho^{(k+2)} \big \|_{L^1_{t \in I} \mathrm{H}^{s}_{k+1}},
\end{split}
\ee
where we have used the Cauchy-Schwarz inequality with respect to the integral in $t$ in the second inequality and used Lemma \ref{le:QOperatorEstimate} in the last inequality. Thus, taking $m \to \8$ in \eqref{eq:m-iteraPf} we prove that $\Xi$ is a solution to \eqref{eq:GPStrongSolutionFunctXi}.
Moreover, taking $m \to \8$ in \eqref{eq:m-iteraApriorBound} we obtain \eqref{eq:SpacetimeEstimateQuintic}.

(2).\; Fix $T_0 >0.$ Suppose $\Gamma (t) \in C(I_0, \mathcal{H}^{s} )$ is a solution to \eqref{eq:GPHierarchyEquaFunctQuintic} so that $\hat{Q} \Gamma (t) \in L^1_{t \in [0,T_0]}\mathcal{H}^{s}.$ Let $ T \in (0, T_0],$ which will be fixed later. By iterating \eqref{eq:GPStrongSolutionFunctQuintic}, for every $m \geq 2$ one has
\be
\begin{split}
\tilde{Q}^{(k)}\gamma^{(k+2)}_t = & \tilde{Q}^{(k)} e^{\mathrm{i} t \triangle^{(k+2)}} \gamma^{(k+2)}_0\\
& \quad + \sum^{m-1}_{j=1} \tilde{Q}^{(k)} P_{k+2, j} (e^{\mathrm{i} t \Delta} \Gamma_0) (t) + \tilde{Q}^{(k)} P_{k+2, m} ( \Gamma) (t).
\end{split}
\ee
By Lemma \ref{le:DuhamelEstimate} we have
\begin{equation*}
\begin{split}
\big \| \tilde{Q}^{(k)} \gamma^{(k+2)}_t \big \|_{L^1_{t \in I} \mathrm{H}^{s}_k}
& \le \sum^{m-1}_{j=0} k A^{k+j} ( \sqrt{C_{n, s} T})^{j+1} \big \| \gamma^{(k+2j+2)}_0 \big \|_{\mathrm{H}^{s}_{k+2j+2}}\\
&\quad + k A^{k+m} (\sqrt{C_{n, s} T})^m \big \| Q^{(k+2m)} \gamma^{(k+2m+2)}_t \big \|_{L^1_{t \in I}\mathrm{H}^{s}_{k+2m}}.
\end{split}
\end{equation*}
Set
\be
T = \min \left \{ T_0, \; \frac{M_{n, s}}{ \| \hat{Q} \Gamma (t) \|^4_{L^1_{t \in [0,T_0]}\mathcal{H}^{s}} + \| \Gamma_0 \|^4_{\mathcal{H}^{s}} } \right \}.
\ee
Taking $m \to \8$ we have
\begin{equation}
\big \| \tilde{Q}^{(k)} \gamma^{(k+2)}_t \big \|_{L^1_{t \in I} \mathrm{H}^{s}_k} \le \frac{k}{A}(A \| \Gamma_0 \|_{\mathcal{H}^{s}})^k \sum^{\8}_{\el=1} \frac{1}{( \| \Gamma_0 \|_{\mathcal{H}^{s}})^{\el}} \big \| \gamma^{(\el)}_0 \big \|_{\mathrm{H}^{s}_{\el}}
\end{equation}
Then, for $\lambda>0$ we have
\be
\begin{split}
\sum_{k \ge 1} \frac{1}{\lambda^k} & \big \|\tilde{Q}^{(k)} \gamma^{(k+2)}_t \big \|_{L^1_{t \in I} \mathrm{H}^{s}_k} \\
\le & \frac{1}{A}\sum_{k \ge 1}  k\Big ( \frac{A\| \Gamma_0 \|_{\mathcal{H}^{s}}}{\lambda}\Big )^k\sum^{\8}_{\el=1} \frac{1}{( \| \Gamma_0 \|_{\mathcal{H}^{s}})^{\el}} \big \| \gamma^{(\el)}_0 \big \|_{\mathrm{H}^{s}_{\el}} \leq \frac{\xi}{A(1-\xi)^2},
\end{split}
\ee
with $\xi= A\|\Gamma_0\|_{\mathcal{H}^{s}} /\lambda.$  Note that $A>2,$ choosing $\lambda = 2 A \|\Gamma_0\|_{\mathcal{H}^{s}},$ that is, $\xi=\frac{1}{2},$ we have
\be
\sum_{k \ge 1} \frac{1}{\lambda^k}  \big \| \tilde{Q}^{(k)} \gamma^{(k+2)}_t \big \|_{L^1_{t \in I} \mathrm{H}^{s}_k}\leq
 \frac{1}{A}\frac{\xi}{(1-\xi)^2}\leq 1.
\ee
This completes the proof of (2).

(3).\; Fix $T_0 >0$ with $I_0 = [- T_0, T_0].$ Suppose $\Gamma (t), \Gamma'(t) \in C (I_0, \mathcal{H}^{s})$ are two solutions to \eqref{eq:GPHierarchyEquaFunctQuintic} such that $\hat{Q} \Gamma (t), \hat{Q} \Gamma' (t) \in L^1_{t \in I_0}\mathcal{H}^{s}$ with $\Gamma (0) = \Gamma_0$ and $\Gamma' (0) = \Gamma'_0$ in $\mathcal{H}^{s},$ respectively. Since \eqref{eq:GPHierarchyEquaFunctQuintic} is linear, it suffices to consider $\Gamma (t)$ instead of $\Gamma (t) - \Gamma'(t).$ Set
\be
T = \min \left \{ T_0, \; \frac{M_{n, s}}{ \| \hat{Q} \Gamma (t) \|^4_{L^1_{t \in I_0}\mathcal{H}^{s}} + \| \Gamma_0 \|^4_{\mathcal{H}^{s}} } \right \}.
\ee
Then, by (2) we have for $I = [-T, T],$
\be
\| \Gamma (t) \|_{C (I, \mathcal{H}^{s})} \le \|\Gamma_0 \|_{\mathcal{H}^{s}} + \| \hat{Q} \Gamma (t) \|_{L^1_{t \in I}\mathcal{H}^{s}} \le (1+ 2 A) \|\Gamma_0 \|_{\mathcal{H}^{s}}.
\ee
This completes the proof.
\end{proof}

\subsection{Blowup alternative and blowup rate in finite time}

By slightly repeating the proofs of Theorems \ref{th:BlowupRate-alpha>n/2} and \ref{th:BlowupRate-alpha>(n-1)/2} with the help of Theorems \ref{th:LocalWellposuedQuintic1} and \ref{th:LocalWellposuedQuintic2}, respectively, we can obtain two results concerning the blowup alternative of solutions to the quintic GP hierarchy \eqref{eq:GPHierarchyEquaFunctQuintic} as follows.

\begin{theorem}\label{th:CauchyProbQuintic}
Assume that $n\geq 1$ and $s > \frac{n}{2}.$ If $\Gamma(t)$ is a solution of the Gross-Pitaevskii hierarchy \eqref{eq:GPHierarchyEquaQuintic} with initial condition $\Gamma(0)\in \mathcal{H}^{s}$ such that $T_{\mathrm{max}}< \8$ (resp. $T_{\mathrm{min}}< \8$), then $\lim_{t \nearrow T_{\mathrm{max}}} \| \Gamma (t) \|_{\mathcal{H}^{s}} = \8$ (resp. $\lim_{t \searrow - T_{\mathrm{min}}} \| \Gamma (t) \|_{\mathcal{H}^{s}} = \8$), and the following lower bound on the blowup rate holds
\beq\label{eq:BelowuprateQ1}\begin{split}
\|\Gamma(t)\|_{\mathcal{H}^{s}} & \geq \frac{K_{n,s}}{ | T_{\mathrm{max}}-t |^{\frac{1}{2}}},\quad \forall 0< t < T_{\mathrm{max}},\\
\Big ( \|\Gamma(- t)\|_{\mathcal{H}^{s}} & \geq \frac{K_{n,s}}{ | T_{\mathrm{min}}-t |^{\frac{1}{2}}},\quad \forall 0< t < T_{\mathrm{min}}\Big ),
\end{split}\eeq
where $K_{n,s}>0$ is a constant depending only on $n$ and $s.$
\end{theorem}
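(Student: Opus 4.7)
The plan is to essentially transcribe the argument of Theorem \ref{th:BlowupRate-alpha>n/2} to the quintic setting, with the sole difference being that the local existence time from Theorem \ref{th:LocalWellposuedQuintic1} is quadratic rather than linear in the inverse norm of the initial data, which accounts for the square-root type blowup rate claimed.

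Suppose $T_{\mathrm{max}} < \8$ and fix $t \in (0, T_{\mathrm{max}}).$ First I would introduce the time-translation operator
\be
\mathcal{R}_t \gamma^{(k)}(\tau, \mathbf{x}_k; \mathbf{x}_k') := \gamma^{(k)}(t + \tau, \mathbf{x}_k; \mathbf{x}_k'), \quad \tau \in \mathbb{R},
\ee
and observe that since the quintic GP hierarchy \eqref{eq:GPHierarchyEquaFunctQuintic} is autonomous (neither $\triangle^{(k)}$ nor $Q^{(k)}$ depends explicitly on time), $\mathcal{R}_t \Gamma(\tau) = (\mathcal{R}_t \gamma^{(k)}(\tau))_{k \ge 1}$ is again a solution to \eqref{eq:GPHierarchyEquaFunctQuintic}. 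By the definition of the quasi-norm, $\|\mathcal{R}_t\Gamma(\tau)\|_{\mathcal{H}^s} = \|\Gamma(t+\tau)\|_{\mathcal{H}^s},$ and in particular setting $\tau = 0$ gives $\|\mathcal{R}_t\Gamma(0)\|_{\mathcal{H}^s} = \|\Gamma(t)\|_{\mathcal{H}^s} < \8$ (since $t$ lies strictly below the maximal time).

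Now I would apply the local existence part of Theorem \ref{th:LocalWellposuedQuintic1} with initial data $\mathcal{R}_t\Gamma(0) \in \mathcal{H}^s.$ This yields a constant $K_{n,s} > 0$ and a time
\be
\tau^* = \frac{K_{n,s}}{\|\mathcal{R}_t\Gamma(0)\|_{\mathcal{H}^s}^2} = \frac{K_{n,s}}{\|\Gamma(t)\|_{\mathcal{H}^s}^2}
\ee
such that a solution exists on $[-\tau^*, \tau^*]$ obeying $\|\mathcal{R}_t\Gamma(\tau)\|_{\mathcal{H}^s} \le 2\|\mathcal{R}_t\Gamma(0)\|_{\mathcal{H}^s}$ on that interval. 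Combined with the identity in the previous paragraph, this means
\be
\|\Gamma(t+\tau^*)\|_{\mathcal{H}^s} = \|\mathcal{R}_t\Gamma(\tau^*)\|_{\mathcal{H}^s} \le 2\|\Gamma(t)\|_{\mathcal{H}^s} < \8,
\ee
so $\Gamma$ extends past $t + \tau^*$ and hence $T_{\mathrm{max}} > t + \tau^*.$ Rearranging gives $T_{\mathrm{max}} - t > K_{n,s}/\|\Gamma(t)\|_{\mathcal{H}^s}^2,$ which after taking a square root yields the required lower bound \eqref{eq:BelowuprateQ1} with constant $\sqrt{K_{n,s}}$ (renamed $K_{n,s}$). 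The claim $\lim_{t \nearrow T_{\mathrm{max}}} \|\Gamma(t)\|_{\mathcal{H}^s} = \8$ then follows immediately from this bound, and the case $T_{\mathrm{min}} < \8$ is handled identically by reversing time.

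The argument involves no genuine obstacle — the only substantive input is the autonomy of the hierarchy (ensuring $\mathcal{R}_t \Gamma$ is again a solution), together with the uniform existence time in Theorem \ref{th:LocalWellposuedQuintic1}; the quadratic dependence of that time on the norm is precisely what converts the cubic-case rate $(T_{\mathrm{max}} - t)^{-1}$ into the quintic-case rate $(T_{\mathrm{max}} - t)^{-1/2}.$ The only small point to be careful about is to confirm, before extracting $\tau^*,$ that $t$ is strictly less than $T_{\mathrm{max}}$ so that $\|\Gamma(t)\|_{\mathcal{H}^s}$ is finite and the denominator $\tau^*$ is strictly positive; this is immediate from the definition of the maximal lifespan together with the continuity of $\Gamma$ on $[0, T_{\mathrm{max}}).$
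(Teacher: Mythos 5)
Your proposal is correct and is exactly what the paper intends: the authors explicitly state that Theorem \ref{th:CauchyProbQuintic} follows by "slightly repeating" the proof of Theorem \ref{th:BlowupRate-alpha>n/2} with Theorem \ref{th:LocalWellposuedQuintic1} in place of Theorem \ref{th:LocalWellposued-alpha>n/2}, and they omit the details. Your transcription of that argument, including the observation that the quadratic dependence $\tau^* = K_{n,s}/\|\Gamma(t)\|^2_{\mathcal{H}^{s}}$ of the local existence time is what produces the $(T_{\mathrm{max}}-t)^{-1/2}$ rate, is precisely the intended proof.
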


\begin{theorem}\label{th:CauchyProbQuintic2}
Assume that $n \geq 2$ and $s > \frac{n-1}{2}.$ If $\Gamma(t)$ is a solution of the Gross-Pitaevskii hierarchy \eqref{eq:GPHierarchyEquaQuintic} with initial condition $\Gamma(0)\in \mathcal{H}^{s}$ such that $T_{\mathrm{max}}< \8$ (resp. $T_{\mathrm{min}}< \8$), then $\lim_{t \nearrow T_{\mathrm{max}}} \| \Gamma (t) \|_{\mathcal{H}^{s}} = \8$ (resp. $\lim_{t \searrow - T_{\mathrm{min}}} \| \Gamma (t) \|_{\mathcal{H}^{s}} = \8$), and the following lower bound on the blowup rate holds
\beq\label{eq:BelowuprateQ2}\begin{split}
\|\Gamma(t)\|_{\mathcal{H}^{s}} & \geq \frac{M_{n, s}}{|T_{\mathrm{max}}-t|^{\frac{1}{4}}},\quad \forall 0< t < T_{\mathrm{max}},\\
\Big ( \|\Gamma( - t)\|_{\mathcal{H}^{s}} & \geq \frac{M_{n, s}}{|T_{\mathrm{min}}-t|^{\frac{1}{4}}},\quad \forall 0< t < T_{\mathrm{min}} \Big ),
\end{split}\eeq
where $M_{n,s}>0$ is a constant depending only on $n$ and $s.$
\end{theorem}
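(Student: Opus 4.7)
The plan is to mimic the shift-operator argument used in the proof of Theorem \ref{th:BlowupRate-alpha>(n-1)/2}, but with the local-existence time from Theorem \ref{th:LocalWellposuedQuintic2} in place of the one from Theorem \ref{th:LocalWellposued-alpha>(n-1)/2}. Concretely, for $0<T<T_{\mathrm{max}}$ I define the translate
\be
\mathcal{R}_T \gamma^{(k)}(\tau,{\bf x}_k;{\bf x}_k'):=\gamma^{(k)}(T+\tau,{\bf x}_k;{\bf x}_k'),\quad \tau\in\mathbb{R},
\ee
and observe that $\mathcal{R}_T\Gamma$ solves the quintic GP hierarchy \eqref{eq:GPHierarchyEquaQuintic} with initial datum $\mathcal{R}_T\Gamma(0)=\Gamma(T).$ Moreover $\mathcal{R}_T$ commutes with $\hat Q$ and preserves the $\mathcal{H}^s$ and $L^1_t\mathcal{H}^s$ quasi-norms on the corresponding shifted intervals.

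Fix $T\in(0,T_{\max})$ so that
$\|\Gamma(t)\|_{C([0,T],\mathcal{H}^s)}+\|\hat Q\Gamma(t)\|_{L^1_{t\in[0,T]}\mathcal{H}^s}<\infty.$
Since $\|\mathcal{R}_T\Gamma(0)\|_{\mathcal{H}^s}=\|\Gamma(T)\|_{\mathcal{H}^s}<\infty,$ part (1) of Theorem \ref{th:LocalWellposuedQuintic2} gives a solution on $[0,\tau^*]$ with
\be
\tau^*=\frac{M_{n,s}}{\|\mathcal{R}_T\Gamma(0)\|_{\mathcal{H}^s}^4}=\frac{M_{n,s}}{\|\Gamma(T)\|_{\mathcal{H}^s}^4},
\ee
satisfying
\be
\|\hat Q\mathcal{R}_T\Gamma(\tau)\|_{L^1_{\tau\in[0,\tau^*]}\mathcal{H}^s}\le 2A\|\Gamma(T)\|_{\mathcal{H}^s}.
\ee
Combining this with the Duhamel formula for $\mathcal{R}_T\Gamma$ yields
\be
\|\mathcal{R}_T\Gamma(\tau)\|_{C([0,\tau^*],\mathcal{H}^s)}\le(1+2A)\|\Gamma(T)\|_{\mathcal{H}^s}.
\ee

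The next step is to splice the two pieces together on $[0,T+\tau^*].$ Using the additivity of $\|\cdot\|_{L^1_t\mathcal{H}^s}$ over disjoint intervals (Remark \ref{rk:normL}(4)) and the identity $\mathcal{R}_T\hat Q=\hat Q\mathcal{R}_T,$ I would bound
\be
\bigl\|\Gamma(t)\bigr\|_{C([0,T+\tau^*],\mathcal{H}^s)}+\bigl\|\hat Q\Gamma(t)\bigr\|_{L^1_{t\in[0,T+\tau^*]}\mathcal{H}^s}\le (2+4A)\Bigl(\|\Gamma(t)\|_{C([0,T],\mathcal{H}^s)}+\|\hat Q\Gamma(t)\|_{L^1_{t\in[0,T]}\mathcal{H}^s}\Bigr)<\infty,
\ee
exactly as in \eqref{eq:BGammaL0Ttau^*}--\eqref{eq:BGammaC0Ttau^*} in the cubic case. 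Hence $T+\tau^*<T_{\mathrm{max}},$ which gives
\be
T_{\max}-T>\tau^*=\frac{M_{n,s}}{\|\Gamma(T)\|_{\mathcal{H}^s}^4},
\ee
and rearranging yields the claimed blowup rate with exponent $\tfrac14,$ together with $\lim_{T\nearrow T_{\max}}\|\Gamma(T)\|_{\mathcal{H}^s}=\infty.$ The argument for $T_{\min}$ is identical after reversing time.

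The only technical point that demands care is the concatenation of the two local solutions into a single solution of class $C(I,\mathcal{H}^s)$ whose $\hat Q$-image lies in $L^1_{t\in I}\mathcal{H}^s;$ this is the analogue of the potential obstacle in the cubic proof. As there, it is handled by the additivity property of the $L^1_t\mathcal{H}^s$ quasi-norm and the fact that $\hat Q$ commutes with time translation, so no new difficulty arises. The quartic dependence of $\tau^*$ on $\|\Gamma(T)\|_{\mathcal{H}^s}$ coming from Theorem \ref{th:LocalWellposuedQuintic2}(1) is precisely what produces the $\tfrac14$-power blowup rate, and the constant $M_{n,s}$ in the statement can be taken to be $(M_{n,s})^{1/4}$ from the local existence theorem.
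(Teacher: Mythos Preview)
Your proposal is correct and follows exactly the approach the paper indicates: it repeats the shift-operator argument of Theorem \ref{th:BlowupRate-alpha>(n-1)/2}, replacing Theorem \ref{th:LocalWellposued-alpha>(n-1)/2} by Theorem \ref{th:LocalWellposuedQuintic2}, so that the local existence time scales like $\|\Gamma(T)\|_{\mathcal{H}^s}^{-4}$ and the resulting blowup exponent is $\tfrac14$. The paper itself omits all details (``We omit the details of the proofs''), and your write-up supplies precisely those details with the correct constants $2A$, $(1+2A)$, and $(M_{n,s})^{1/4}$.
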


We omit the details of the proofs.

\subsection{Blowup in finite time} In this subsection, we present energy conservation and Virial type identities of solutions to \eqref{eq:GPHierarchyEquaFunctQuintic}. As an application, we can obtain a result on blowup in finite time of solutions to the focusing GP hierarchy \eqref{eq:GPHierarchyEquaFunctQuintic}. The arguments are the same as those involved in Sections \ref{EnergyVirialIdentity} and \ref{BlowupFinite}. The required modifications are not difficult and left to the interested reader. However, for the sake of convenience, we write the corresponding definitions and results.

First of all, as in \cite{CPT2010} we introduce the energy functional of solutions to \eqref{eq:GPHierarchyEquaFunctQuintic} as follows.
\be
E_k (\Gamma(t)): = \frac{1}{2} \mathrm{T r} \Big [ \sum^k_{j=1} (- \triangle_{x_j}) \gamma^{(k)}\Big ] + \frac{\mu}{6} \mathrm{T r} \Big [ \sum^k_{j=1} Q^{(k)}_{j, +} \gamma^{(k)}\Big ]
\ee
for any $k \ge 1.$ Then

\begin{theorem}\label{th:k-energyconservationQuintic}
Assume that $\Gamma(t)$ is a solution of the Gross-Pitaevskii hierarchy \eqref{eq:GPHierarchyEquaFunctQuintic} with initial condition $\Gamma(0)\in \mathcal{H}^{s}.$ Then $E_k (\Gamma (t))$ is a conserved quantity, i.e.,
\beq\label{eq:k-energyconversationQuintic}
E_k (\Gamma (t)) = E_k (\Gamma (0)), \quad \forall t \in (-T_{\mathrm{min}}, T_{\mathrm{max}}).
\eeq
\end{theorem}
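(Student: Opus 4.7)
The plan is to follow the proof template of Theorem 4.1 (cubic case) closely, adapting the bookkeeping to the three-body structure. As a first step I would use permutation symmetry of $\gamma^{(k)}_t$ and $\gamma^{(k+2)}_t$, together with the identity \eqref{eq:identity}, to rewrite
\be
E_k(\Gamma(t)) = k\int d\mathbf{x}_k d\mathbf{x}'_k\,\delta(\mathbf{x}_k-\mathbf{x}'_k)\Bigl(\tfrac{1}{2}\nabla_{x_1}\cdot\nabla_{x'_1}\gamma^{(k)}_t + \tfrac{\mu}{6}Q^{(k)}_{1,+}\gamma^{(k+2)}_t\Bigr),
\ee
so that only the $j=1$ contributions appear. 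Then I would differentiate in time and substitute the quintic hierarchy \eqref{eq:GPHierarchyEquaFunctQuintic} (and its $(k+2)$-th equation) to decompose $\mathrm{i}\partial_t E_k(\Gamma(t)) = k[(I)+(II)+(III)+(IV)]$, where the four pieces correspond to the pairings of the kinetic/interaction halves of $E_k$ with the $-\Delta^{(k)}$/$\mu Q^{(k)}$ halves of the equation.

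The diagonal terms $(I)$ and $(IV)$ should vanish individually by the same arguments as in the cubic case. For $(I)$, passing to momentum variables yields an integrand carrying the factor $(|p_j|^2-|p'_j|^2)$, which is forced to zero by the $\delta(p_j-p'_j)$ that emerges from the $\mathbf{x}_k$-integration. For $(IV)$, the composition $Q^{(k)}_{1,+}Q^{(k+2)}_j$ applied to $\gamma^{(k+4)}$ produces on the diagonal $\mathbf{x}_k=\mathbf{x}'_k$ two terms whose arguments coincide after relabeling, so the difference $Q^{(k+2)}_{j,+}-Q^{(k+2)}_{j,-}$ vanishes term by term, mirroring the cubic computation.

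The main obstacle is the cross-cancellation $(II)+(III)=0$. Expand $Q^{(k)}=\sum_{j=1}^k(Q^{(k)}_{j,+}-Q^{(k)}_{j,-})$ and $\Delta^{(k+2)}=\sum_{j=1}^{k+2}(\Delta_{x_j}-\Delta_{x'_j})$. In $(II)$, for each $j\ge 2$ the operator $Q^{(k)}_{j,\pm}$ does not touch $x_1$ or $x'_1$, so after the identity \eqref{eq:identity} and the $\mathbf{x}_k$-delta integration one obtains an exchange $x_j\leftrightarrow x'_j$ that is trivial; the analogous vanishing holds in $(III)$ for $j\notin\{1,k+1,k+2\}$. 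What remains is the $j=1$ piece of $(II)$ versus the $j\in\{1,k+1,k+2\}$ pieces of $(III)$, together with their primed counterparts. The essential new input beyond the cubic proof enters here: applying $Q^{(k)}_{1,+}$ identifies the three variables $x_1=x_{k+1}=x_{k+2}$, so by the symmetry of $\gamma^{(k+2)}$ the three Laplacians $\Delta_{x_1},\Delta_{x_{k+1}},\Delta_{x_{k+2}}$ each produce the same contribution, yielding a combinatorial factor $3$ that precisely matches the ratio $\tfrac{1}{2}:\tfrac{1}{6}=3:1$ of the prefactors in $(II)$ and $(III)$. Combined with the analogous cancellation from the primed indices, this gives $(II)+(III)=0$. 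Hence $\partial_t E_k(\Gamma(t))=0$, which establishes \eqref{eq:k-energyconversationQuintic}. The bulk of the work lies in the careful tracking of signs and of which variables are identified by the three-body collision operator; the rest is a direct translation of the cubic argument.
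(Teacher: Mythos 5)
Your proposal is correct and is exactly the adaptation the paper has in mind: the paper gives no proof of Theorem \ref{th:k-energyconservationQuintic}, stating only that ``the arguments are the same'' as in the cubic case of Theorem \ref{th:k-energyconservation}, and your sketch carries out that adaptation faithfully, including the one genuinely new point --- that $Q^{(k)}_{1,+}$ identifies the three variables $x_1, x_{k+1}, x_{k+2}$, so the permutation symmetry of $\gamma^{(k+2)}$ on the full diagonal produces the combinatorial factor $3$ that matches the coefficient ratio $\tfrac12 : \tfrac{\mu}{6}$ and yields $(II)+(III)=0$.
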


To state the Virial type identity for solutions to \eqref{eq:GPHierarchyEquaFunctQuintic}, we set
\be
V_k (\Gamma (t)): = \mathrm{T r} \big [ | \mathbf{x}_k |^2 \gamma^{(k)} \big ] = \sum^k_{j =1} \int d \mathbf{x}_k | x_j |^2 \gamma^{(k)} (t, \mathbf{x}_k ; \mathbf{x}_k ).
\ee
Then we have

\begin{theorem}\label{th:VirialIdentityQuintic}
Assume that $\Gamma(t)= ( \gamma^{(k)}_t )_{k \geq 1}$ solves the Gross-Pitaevskii hierarchy \eqref{eq:GPHierarchyEquaFunctQuintic}. Then for any $k \ge 1,$
\beq\begin{split}\label{eq:VirialIdentityQuintic}
\partial^2_t V_k (\Gamma (t)) & = 8 k \int d \mathbf{p}_k |p_1|^2 \gamma^{(k)} (\mathbf{p}_k; \mathbf{p}_k) + \frac{8}{3} n k \mu \int d \mathbf{x}_k \gamma^{(k+1)} (\mathbf{x}_k, x_1; \mathbf{x}_k, x_1).
\end{split}\eeq
\end{theorem}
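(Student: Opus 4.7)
The plan is to mirror the proof of Theorem \ref{th:VirialIdentity}, replacing the cubic collision operator $B^{(k)}$ by the quintic operator $Q^{(k)}$ throughout. First I introduce $\rho(\mathbf{x}_k):=\gamma^{(k)}_t(\mathbf{x}_k;\mathbf{x}_k)$ and define the same current $P(\mathbf{x}_k):=\int d\mathbf{p}_k d\mathbf{p}'_k\,e^{\mathrm{i}\langle \mathbf{p}_k - \mathbf{p}'_k, \mathbf{x}_k\rangle}(\mathbf{p}_k + \mathbf{p}'_k)\gamma^{(k)}(\mathbf{p}_k;\mathbf{p}'_k)$. The key observation specific to the quintic setting, which replaces \eqref{eq:Bkgamma}, is that on the diagonal $\mathbf{x}_k=\mathbf{x}'_k$ both $Q^{(k)}_{j,+}\gamma^{(k+2)}$ and $Q^{(k)}_{j,-}\gamma^{(k+2)}$ reduce to $\gamma^{(k+2)}(\mathbf{x}_k, x_j, x_j; \mathbf{x}_k, x_j, x_j)$, so $(Q^{(k)}\gamma^{(k+2)})(\mathbf{x}_k;\mathbf{x}_k)=0$. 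This yields the same continuity equation $\partial_t\rho + \nabla_{\mathbf{x}_k}\cdot P = 0$, hence $\partial_t V_k = 2 M$ with $M:=\int d\mathbf{x}_k\langle \mathbf{x}_k,P\rangle$ and $\partial_t^2 V_k = 2(I_M + II_M)$. Because the kinetic piece $I_M$ is independent of the nonlinearity, transcribing the cubic calculation verbatim gives $I_M = 4\int d\mathbf{p}_k|\mathbf{p}_k|^2\gamma^{(k)}(\mathbf{p}_k;\mathbf{p}_k) = 4k\int d\mathbf{p}_k |p_1|^2\gamma^{(k)}(\mathbf{p}_k;\mathbf{p}_k)$ after permutation symmetry.

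The new input is the interaction term $II_M = -\mathrm{i}\mu\sum_{j=1}^k\int d\mathbf{x}_k d\mathbf{p}_k d\mathbf{p}'_k\,e^{\mathrm{i}\langle\mathbf{p}_k-\mathbf{p}'_k,\mathbf{x}_k\rangle}\langle\mathbf{x}_k,\mathbf{p}_k+\mathbf{p}'_k\rangle(Q^{(k)}_j\gamma^{(k+2)})(\mathbf{p}_k;\mathbf{p}'_k)$. I write $Q^{(k)}_{j,\pm}\gamma^{(k+2)}$ on the momentum side exactly as $B^{(k)}_{j,\pm}$ was handled in the cubic proof, but with the extra contracted pair $(q_{k+2}, q'_{k+2})$; then in the $Q^{(k)}_{j,-}$ piece I apply the change of variables $p_j\mapsto p_j - q_{k+1}-q_{k+2}+q'_{k+1}+q'_{k+2}$ together with the analogous shift on $p'_j$. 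This preserves $p_j-p'_j$, aligns the two kernels, and extracts an overall factor $2\,x_j\cdot(q_{k+1}+q_{k+2}-q'_{k+1}-q'_{k+2})$. Fourier inversion converts the momentum differences into $-\mathrm{i}\nabla_{x_j}$ acting on $\delta(x_j-y_{k+1})\delta(x_j-y_{k+2})$, and after one integration by parts in $x_j$ the integrand collapses to the quintic analog of \eqref{eq:2n+2x.nablax}, namely
\begin{equation*}
II_M = \sum_{j=1}^k \mu\int d\mathbf{x}_k\, dy_{k+1}\, dy_{k+2}\,\delta(x_j-y_{k+1})\,\delta(x_j-y_{k+2})\bigl(2n+2x_j\cdot\nabla_{x_j}\bigr)\gamma^{(k+2)}(\mathbf{x}_k, y_{k+1}, y_{k+2}; \mathbf{x}_k, y_{k+1}, y_{k+2}).
\end{equation*}

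To finish, I use the symmetry of $\gamma^{(k+2)}$ in the three slots that get identified with $x_j$ (positions $j$, $k+1$, $k+2$) to obtain the quintic analog of \eqref{eq:x.nablax}, in which the combinatorial multiplicity $2$ is replaced by $3$; together with the integration-by-parts identity that replaces \eqref{eq:-n}, this yields $\int 2x_j\cdot\nabla_{x_j}\gamma^{(k+2)}(\cdots) = -\tfrac{2n}{3}\int\gamma^{(k+2)}(\cdots)$, so that $(2n+2x_j\cdot\nabla_{x_j})$ contributes a net factor $2n - \tfrac{2n}{3} = \tfrac{4n}{3}$. Summing over $j$, invoking permutation symmetry, and multiplying by the overall factor $2$ from $\partial_t^2 V_k = 2\partial_t M$ gives the claimed coefficient $\tfrac{8nk\mu}{3}$ (with the interaction integrand being $\gamma^{(k+2)}(\mathbf{x}_k,x_1,x_1;\mathbf{x}_k,x_1,x_1)$). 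The main obstacle is the bookkeeping in the momentum-space change of variables above: the double contraction in $Q^{(k)}_j$ produces twice as many shift terms as in $B^{(k)}_j$, and the $3$-fold (rather than $2$-fold) symmetry at the contracted slot must be tracked carefully. However, the structure is an exact transcription of the cubic argument, and the replacement of the symmetry factor $2$ by $3$ is precisely what converts the cubic constant $2n$ into the quintic constant $\tfrac{8n}{3}$.
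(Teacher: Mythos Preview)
Your proposal is correct and follows exactly the approach the paper itself indicates: the paper does not write out a separate proof for Theorem \ref{th:VirialIdentityQuintic} but states that ``the arguments are the same as those involved in Sections \ref{EnergyVirialIdentity} and \ref{BlowupFinite}'' with the modifications ``left to the interested reader,'' and you have supplied precisely those modifications --- the double momentum shift in $Q^{(k)}_{j}$, the unchanged kinetic term $I_M$, and the replacement of the symmetry factor $2$ by $3$ at the contracted slot, which converts the cubic constant $n$ into the quintic $\tfrac{4n}{3}$. As a bonus, your final integrand $\gamma^{(k+2)}(\mathbf{x}_k,x_1,x_1;\mathbf{x}_k,x_1,x_1)$ also corrects what appears to be a typographical slip in the displayed statement \eqref{eq:VirialIdentityQuintic}, where $\gamma^{(k+1)}(\mathbf{x}_k,x_1;\mathbf{x}_k,x_1)$ was inadvertently carried over from the cubic identity.
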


By Glassey's argument, we can conclude from Theorems \ref{th:k-energyconservationQuintic} and \ref{th:VirialIdentityQuintic} the following result.

\begin{theorem}\label{th:BlowupQuintic}
Let $n \ge 3.$ Assume that $\Gamma(t)= ( \gamma^{(k)}_t )_{k \geq 1}$ solves the focusing (i.e. $\mu=-1$) Gross-Pitaevskii hierarchy \eqref{eq:GPHierarchyEquaFunctQuintic} with initial condition $\Gamma(0) = ( \gamma^{(k)}_0 )_{k \geq 1} \in \mathfrak{H}^{1}.$ If $( \gamma^{(k)}_t )_{k \geq 1}$ is a sequence of density operators such that $V_k (\Gamma(0)) < \8$ and $E_k (\Gamma(0))<0$ for some $k \ge 1,$ then the solution $\Gamma(t)$ blows up in finite time with respect to $\mathfrak{H}^1.$
\end{theorem}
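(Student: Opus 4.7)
The plan is to mimic the proof of Theorem \ref{th:Blowup} for the cubic case, using Glassey's virial argument: I will combine the conservation of energy (Theorem \ref{th:k-energyconservationQuintic}) with the quintic virial identity (Theorem \ref{th:VirialIdentityQuintic}) to show that $V_k(\Gamma(t))$ is forced to vanish in finite time, and then convert this into blowup of $\|\Gamma(t)\|_{\mathfrak{H}^1}$ by means of Cauchy-Schwarz and Hardy.

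First I would rewrite the right-hand side of \eqref{eq:VirialIdentityQuintic} in terms of the conserved energy. By symmetry,
\be
E_k(\Gamma) = \frac{k}{2}\mathrm{Tr}\bigl[(-\triangle_{x_1})\gamma^{(k)}\bigr] + \frac{k\mu}{6}\mathrm{Tr}\bigl[Q^{(k)}_{1,+}\gamma^{(k+2)}\bigr],
\ee
while \eqref{eq:VirialIdentityQuintic} reads $\partial_t^2 V_k(\Gamma(t)) = 8k\,\mathrm{Tr}[(-\triangle_{x_1})\gamma^{(k)}(t)] + \tfrac{8}{3}nk\mu\,\mathrm{Tr}[Q^{(k)}_{1,+}\gamma^{(k+2)}(t)]$. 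Subtracting $16 E_k(\Gamma(t))$ yields
\be
\partial_t^2 V_k(\Gamma(t)) = 16\,E_k(\Gamma(0)) + \frac{8k\mu(n-1)}{3}\mathrm{Tr}\bigl[Q^{(k)}_{1,+}\gamma^{(k+2)}(t)\bigr],
\ee
using Theorem \ref{th:k-energyconservationQuintic}. In the focusing case $\mu=-1$ with $n\ge 3$, and since $(\gamma^{(k)}_t)_{k\ge 1}$ is a sequence of (non-negative) density operators making $\mathrm{Tr}[Q^{(k)}_{1,+}\gamma^{(k+2)}(t)] \ge 0$, we obtain $\partial_t^2 V_k(\Gamma(t)) \le 16\,E_k(\Gamma(0)) < 0$. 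Since $V_k(\Gamma(t)) \ge 0$ and $V_k(\Gamma(0))<\8$, a standard convexity argument forces the existence of a finite time $T^*$ with $V_k(\Gamma(t))\searrow 0$ as $t \nearrow T^*$.

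Next I would convert the collapse of $V_k$ into blowup of the kinetic energy along the lines of the cubic proof. By Cauchy--Schwarz followed by the Hardy inequality (which requires $n\ge 3$ and is what makes the hypothesis $n\ge 3$ enter the argument twice),
\be
1 = \mathrm{Tr}[\gamma^{(k)}(t)] \leq \bigl(\mathrm{Tr}[|x_1|^2\gamma^{(k)}(t)]\bigr)^{1/2}\bigl(\mathrm{Tr}[|x_1|^{-2}\gamma^{(k)}(t)]\bigr)^{1/2} \lesssim \bigl(V_k(\Gamma(t))/k\bigr)^{1/2}\bigl(\mathrm{Tr}[(-\triangle_{x_1})\gamma^{(k)}(t)]\bigr)^{1/2},
\ee
so $\mathrm{Tr}[(-\triangle_{x_1})\gamma^{(k)}(t)] \gtrsim k/V_k(\Gamma(t))\to\infty$ as $t\to T^*$. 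Finally, from $\|\Gamma(t)\|_{\mathfrak{H}^1}^k \ge |||\gamma^{(k)}(t)|||_k = \mathrm{Tr}[S^{(k)}\gamma^{(k)}(t)] \ge k\,\mathrm{Tr}[(-\triangle_{x_1})\gamma^{(k)}(t)]$ (as in the cubic proof), we conclude $\|\Gamma(t)\|_{\mathfrak{H}^1}\to\infty$, giving blowup in finite time with respect to $\mathfrak{H}^1$.

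The only real subtlety, and what I expect to be the main obstacle, is verifying that the sign of the interaction contribution in the virial-minus-energy identity is genuinely favourable: one has to be certain that $Q^{(k)}_{1,+}\gamma^{(k+2)}$ pairs with a positive coefficient after using energy conservation, and that non-negativity of $\mathrm{Tr}[Q^{(k)}_{1,+}\gamma^{(k+2)}(t)]$ persists in time (which follows from the assumption that every $\gamma^{(k)}_t$ is a density operator, just as in the cubic case). Once this sign calculation is in place, the rest of the proof is a direct transcription of the argument for Theorem \ref{th:Blowup}, with the only quantitative change being the constant $\tfrac{8k(n-1)}{3}$ in place of $k(2n-4)$; the Glassey collapse and the Hardy-based lower bound on $\|\Gamma(t)\|_{\mathfrak{H}^1}$ go through verbatim.
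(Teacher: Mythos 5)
Your proposal is correct and follows exactly the route the paper intends: the paper itself omits the proof of Theorem \ref{th:BlowupQuintic}, stating only that it follows by Glassey's argument from Theorems \ref{th:k-energyconservationQuintic} and \ref{th:VirialIdentityQuintic} in the same way as the cubic case, and your computation of the modified coefficient $\tfrac{8k\mu(n-1)}{3}$ (with the correct sign for $\mu=-1$), together with the unchanged Cauchy--Schwarz/Hardy step, is precisely that adaptation.
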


\subsection*{Acknowledgement} This research was supported in part by the NSFC under Grants No.11071095, 11171338, and 11101171.

\end{document}